\documentclass{article}
\usepackage{graphicx} 
\usepackage{authblk}
\usepackage[T1]{fontenc}
\usepackage{soul, comment} 
\usepackage{phfqit}
\usepackage{proba} 
\usepackage[sort,nocompress]{cite}
\usepackage{amssymb}
\usepackage{amsmath}
\usepackage{amsthm}
\usepackage{amsfonts}
\usepackage{mathtools}
\usepackage{xspace}
\usepackage{bm}
\usepackage{nicefrac}
\usepackage{commath}
\usepackage{bbm}
\usepackage{boxedminipage}
\usepackage{xparse}
\usepackage[usenames,dvipsnames]{color}
\usepackage[dvipsnames]{xcolor}
\usepackage{float}
\usepackage{multirow}
\usepackage{makecell}
\usepackage{graphicx}
\usepackage{caption}
\usepackage{subcaption}
\usepackage{ifthen}
\usepackage{thm-restate}
\usepackage{algorithm}
\usepackage[noend]{algpseudocode}
\usepackage{colortbl} 
\usepackage{fancyhdr}   
\usepackage{hyperref}
    \hypersetup{ colorlinks=true, linkcolor=blue, citecolor=magenta, urlcolor=blue,}
\usepackage{fullpage}
\usepackage[utf8]{inputenc}
\usepackage{environ}
\usepackage{mdframed}
\usepackage{cleveref}
\usepackage[short]{optidef} 
\usepackage{enumitem}
\usepackage{tikz}
\usetikzlibrary{shapes}
\usetikzlibrary{positioning}
\usetikzlibrary{fit}
\usetikzlibrary{graphs,graphs.standard}
\usepackage{enumitem}

\newtheorem{lemma}{Lemma}[section]

\newtheorem{theorem}{Theorem}[section]
\newtheorem{corollary}{Corollary}[section]

\newtheorem{definition}{Definition}[section]
\newtheorem{conjecture}{Conjecture}[section]

\usepackage{enumitem}

\NewEnviron{problem}[1]{%
	\begin{center}\fbox{\parbox{6in}{%
				{\centering\scshape #1\par}%
				\parskip=1ex
				\everypar{\hangindent=1em}%
				\BODY
}}\end{center}}

\begin{document}

\bibliographystyle{alpha}

\title{On the Extension Theorem for Packing Steiner Forests}
\author{Jinghan A Zeng \footnote{Email: \texttt{jazeng2@illinois.edu}. Siebel School of Computing and Data Science, University of Illinois Urbana-Champaign. This work was done as my undergraduate thesis under the advisement of Chandra Chekuri.}}


\date{August 2026}

\maketitle

\begin{abstract}

We consider the problem of packing edge-disjoint Steiner forests in a graph.
The input consists of a multi-graph $G=(V,E)$ and a collection of $t$ vertex subsets $\mathcal{S} = \{S_1,S_2,\ldots,S_t\}$. A Steiner forest for $\mathcal{S}$, also called an $\mathcal{S}$-forest, is a forest of $G$ in which each $S_i$ is connected. In the case where $t=1$, this is the Steiner tree packing problem. Kriesell's conjecture postulates that $2k$-edge-connectivity of $S_1$ is sufficient to find $k$ edge-disjoint $S_1$-trees. Lau \cite{lau-thesis} showed that $24k$-edge-connectivity suffices for the Steiner tree packing problem, which was improved to $6.5k$ by West and Wu \cite{westwu} and $5k+4$ by Devos, McDonald and Pivotto \cite{devos2016packing}. In his thesis, Lau \cite{lau-thesis} asserts that for the Steiner Forest problem, if each $S_i$ is $30k$-edge-connected in $G$, then there exist $k$ edge-disjoint $\mathcal{S}$-forests. However, Lau's proof relies on an intermediate theorem called the Extension Theorem, which in this paper we will demonstrate has a gap by providing a counterexample to Lau's Extension Theorem. Furthermore, we will resolve this gap by correcting Lau's proof to show that $32k$-edge-connectivity of each $S_i$ suffices to pack $k$ $\mathcal{S}$-forests. More careful analysis yields that $31k$-edge-connectivity of each $S_i$ is sufficient when $k \geq 8$. 

\end{abstract}

\section{Introduction}
Let $G=(V,E)$ be a multigraph without self-loops. The well-known theorem of Tutte and Nash-Williams gives a min-max characterization for the maximum number of edge-disjoint spanning trees of $G$. This was later seen as a special case of packing disjoint bases in a matroid via the work of Edmonds. See \cite{Schrijver-book}. A simple and useful corollary of the Tutte-Nash-Williams theorem is that if $G$ is $2k$-edge-connected for some integer $k$, then $G$ has at least $k$ edge-disjoint spanning trees. Kriesell considered the problem of packing Steiner trees. Given $G$ and a set of \emph{terminal} vertices $S \subset V$ we call a subgraph $(V_H,E_H)$ of $G$ an 
$S$-tree if it connects $S$ (a minimal subgraph connecting $S$ would a tree). Kriesell conjectured that if $S$ is $2k$-edge-connected in $G$ then there exist $k$ edge-disjoint $S$-trees in $G$ \cite{Kriesell03}. This conjecture is still open, however much progress has been achieved starting with the important work of Lau \cite{lau2004approximate,lau-thesis}. Lau showed that it suffices for $S$ to be $24k$-edge-connected to guarantee $k$-edge-disjoint $S$-trees. West and Wu \cite{westwu} showed that $6.5k$-edge-connectivity suffices. Devos, McDonald and Pivotto further improved this bound to $(5k+4)$ \cite{devos2016packing}. 

We now consider the problem of packing Steiner forests which is a further generalization. Let $\mathcal{S} = \{S_1, S_2,\ldots,S_t\}$ be a collection of vertex-disjoint subsets. An $\mathcal{S}$-forest is a subgraph of $G$ in which each $S_i$ is connected (a minimal such subgraph will be a forest). Kriesell's conjecture admits a natural extension to this setting: if each $S_i$ is $2k$-edge-connected then $G$ has $k$-edge-disjoint $\mathcal{S}$-forests. The Steiner forest packing question was motivated initially by integer decomposition results \cite{chekuris09} who showed that the packing conjecture is true if $G$ is Eulerian. Lau \cite{lau2005packing,lau-thesis} asserted that if each $S_i$ is $30k$-edge-connected then $G$ admits $k$-edge-disjoint $\mathcal{S}$-forests. For this he relied on an extension theorem that builds on his work on packing Steiner trees. However, we will demonstrate that there is a gap in this extension theorem by providing a counterexample to the theorem. Furthermore, we will also remedy the gap in Lau's proof to obtain the following result:

\begin{theorem}
    Let $\mathcal{S} = \{S_1, S_2,\ldots,S_t\}$ be a collection of vertex-disjoint subsets in a multigraph $G$. If each $S_i$ is $32k$-edge-connected, then there exist $k$ edge disjoint $\mathcal{S}$-forests in $G$. 
\end{theorem}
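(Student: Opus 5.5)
We follow the architecture of Lau's argument for Steiner forests, replacing the flawed Extension Theorem with a weaker but correct version. The proof is by induction on the number of vertices, and we carry the stronger inductive statement used in the Steiner tree packing literature. Call an edge \emph{redundant} if some sets $S_i$ remain $32k$-edge-connected after deleting it. First, we reduce to a minimal counterexample. Using Mader's splitting-off theorem at non-terminal vertices of degree not equal to $3$, we split off pairs of edges while preserving the $S_i$-connectivities; a packing in the split graph lifts back to $G$. We also delete redundant edges. We may therefore assume that every edge is essential and that non-terminal vertices have degree $3$. If some edge cut $\delta(X)$ with $|\delta(X)|$ small separates terminals of the same $S_i$ on both sides, we contract each side, apply induction to the two smaller instances, and glue the packings along the cut. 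We thus also assume that no such nontrivial tight cut exists.

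Next, we handle the case where the terminal sets interact through a small number of edges. Here we use the corrected Extension Theorem. Given $k$ edge-disjoint Steiner forests on a contracted part of the graph (obtained by induction), we extend them to the whole graph using spanning-tree packings (Nash-Williams--Tutte) in the highly connected pieces. Lau's version claims this extension is possible with connectivity $30k$, and the counterexample shows this fails. Our plan is to strengthen the hypothesis so the extension step has $2k$ extra units of connectivity per set. The extra slack lets each extended forest reach every terminal class using disjoint edges even when the precomputed forests already use some boundary edges. Concretely, we prove an extension lemma of the following form: if $G[X]$ contains $k$ edge-disjoint $\mathcal{S}|_X$-forests and every $S_i$ meeting $V\setminus X$ is $32k$-connected, then the forests extend. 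The slack absorbs the loss that Lau's argument overlooked.

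Finally, we treat the generic case, where all terminal classes are highly connected and there is no tight cut. Following Lau's Steiner tree argument, we contract the non-terminal vertices into the structure of a hypergraph/bipartite graph. We then show, via a counting argument on the number of edges between terminals and degree-$3$ Steiner vertices, that there are $2k$-many edge-disjoint spanning trees of an auxiliary graph, and hence $k$ forests. The $32k$ constant arises from summing the losses: $2k$ for tree packing, the factor from converting Steiner-vertex connectivity, and the new $2k$ slack in the extension.

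We expect the main obstacle to be the corrected Extension Theorem. We must pinpoint exactly where Lau's argument fails; the counterexample suggests forests cross boundary edges in an uncontrolled way. We must then show that the additional $2k$ connectivity suffices via a careful Nash-Williams partition count. Our first attempt will be to charge each boundary edge used by the inductive forests to a distinct spare edge guaranteed by the extra connectivity.
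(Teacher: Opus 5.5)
\textbf{There is a genuine gap in your central step.} The extension lemma you propose is false for arbitrary given forests, and extra connectivity cannot repair it. In it you take $k$ edge-disjoint forests built on one side of a cut and extend them across, using only that every $S_i$ meeting the other side is $32k$-edge-connected. The paper's counterexample to Lau's Extension Theorem is valid for \emph{every} $Q\ge 30$.

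You are right that the trouble lies in how the precomputed forests use the cut edges. But what matters is the \emph{distribution of labels} at the contracted vertex, not a count of edges that spare edges could absorb. In the counterexample, $S=A\cup B$ is $Qk$-edge-connected. Label $1$ sits on the edges from $v$ to the subdivision vertices of $Qk-k+1$ of the $A$--$B$ edges, and labels $2,\dots,k$ each sit on a single edge. Because $v$ may not be a cut vertex of $H_1$, the subgraph $H_1$ must use one edge of every subdivided $A$--$B$ edge. Since each $H_i-v$ must itself connect $A$ and $B$, the subgraphs $H_2,\dots,H_k$ then have only the fewer than $k-1$ direct $A$--$B$ edges available, however large $Q$ is. An inductive packing on the other side gives you no control over this distribution unless you build that control into the induction.

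\textbf{The same issue breaks your tight-cut step.} You cannot apply induction to both sides independently and glue. The $i$-th forests on the two sides must agree on which cut edges carry label $i$. So one side must be produced by induction and the other must \emph{extend} the partition it induces, which is exactly the step that fails. Your generic case and the accounting that is supposed to yield $32k$ are also too vague to check.

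\textbf{The missing idea is a balancing invariant carried through the induction.} The paper proves a stronger statement: the packing must also \emph{balance} a prescribed set $R$ of vertices incident to at least $30k$ edges. Balancing means the labelled edges at each such vertex can be completed, using its unlabelled edges, to a partition in which every label appears at least twice.

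The corrected Extension Theorem has the following hypotheses:
\begin{itemize}
\item a balanced subpartition at $v$ with $d(v)=Qk$;
\item $S$ is $Qk$-edge-connected;
\item every vertex of $R$ is incident to at least $(Q-2)k$ edges.
\end{itemize}
It returns extending $S$-subgraphs that balance $S\cup R$, so the invariant survives. Strict extensions (Lemma~\ref{lem:strict-extn}) keep edge-unions from destroying it.

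In the forest theorem the argument runs as follows.
\begin{itemize}
\item Take a minimal $C$ with at most $32k$ boundary edges that separates whole groups.
\item If $d(v)\ge 30k$, the contracted vertex $v$ is put into $R$, so the inductive packing on $C_2+v$ balances it.
\item Otherwise $v_2$ is padded with $32k-d(v_2)>2k$ unlabelled fake edges, which makes it balanced automatically.
\item One application of the Extension Theorem on $C+v_2$ then finishes.
\item The case where all groups together are $32k$-connected is Lemma~\ref{lem:basecase}, again an Extension Theorem argument at a tight cut.
\end{itemize}
So the $2k$ difference between the connectivity $32k$ and the degree threshold $30k$ for $R$ is what guarantees a balanced partition at the vertex being extended. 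It is not slack that absorbs boundary usage.
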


We note that the main point of our paper will be to show that $Qk$ edge connectivity for some constant $Q$ suffices for the Steiner Forest packing problem, so we will show that $32k$-edge-connectivity suffices for simplicity. However, more careful analysis shows that $31k$-edge-connectivity suffices when $k \geq 8$. We also show some better bounds for when $t$ is small, that is, $9k$-edge-connectivity suffices when $t=2$ or $t=3$, and for $t > 3$, $(2t+3)k+Ct^2$-edge-connectivity suffices for some constant $C$. 

\section{Preliminaries}

We begin this section with some basic definitions. A set $S \subseteq V(G)$ on a graph is said to be $k$-edge-connected if after the deletion of any $k-1$ edges on $G$, we have that $S$ is still connected. By Menger's Theorem, $S$ being $k$-edge-connected is equivalent to being able to find $k$ edge-disjoint paths between any two vertices in $S$. An $S$-subgraph is a subgraph $H$ of $G$ such that $S$ is connected in $H$. An $(S_1,S_2...S_t)$-subgraph is a subgraph $H$ of $G$ such that for any $i \in [t]$, $S_i$ is connected in $H$. A minimal $S$-subgraph is referred to as a \emph{Steiner tree}, a minimal $(S_1,S_2...S_t)$-subgraph is referred to as a \emph{Steiner forest}. 

To find $k$ disjoint Steiner forests in a graph, both Lau's approach and our approach involve packing Steiner trees on some $S_i$, recursively packing Steiner Forests on the rest of the graph, and merging the two together to find a packing of $k$ Steiner forests. To formalize this idea, we will need to define notions of \emph{balancing} and an \emph{extension}, which we do below.  We let $E(v)$ denote the set of edges incident to $v$.

\begin{definition}
    Given a graph $G$ and a vertex $v$ with a subpartition $P_1 \sqcup P_2 ... \sqcup P_k \subseteq E(v)$ into $k$ parts where each part is allowed to be empty, we say that a collection of $k$ edge-disjoint connected subgraphs $H_1, H_2,...,H_k$ extend $v$ if all of the following criteria are true. 
    \begin{enumerate}
        \item For all $i$, $P_i \subseteq E(H_i)$. 
        \item For all $i$, $v$ is not a cut vertex of $H_i$. 
    \end{enumerate}
\end{definition}

We will define an operation called the \emph{edge-union} which uses the idea of an extension to "merge" two graphs together. We will prove a lemma that we will use frequently throughout the proof of the Extension Theorem, which provides motivation for how we define an extension. 

Take a graph $G$, and take some edge cut that splits the graph into components $C_1$ and $C_2$. We form two graphs $G_1$ and $G_2$, where $G_1$ is formed by taking $G$ and contracting $C_2$ into a single vertex $v_2$. Similarly, we form $G_2$ by taking $G$ and contracting $C_1$ into a single vertex $v_1$. Note that by construction, $E(v_1)=E(v_2)$, since both edge sets are the $(C_1,C_2)$-cut of $G$. 

Take some set $S \subseteq V(G)$, and denote $S_1 = S \cap G_1$ and $S_2 = S \cap G_2$. Suppose that we can find $k$ edge disjoint $S_2 \cup v_1$ subgraphs $H_1,H_2,...,H_k$ in $G_2$. Then, we take an edge subpartition of $v_2$ in $G_1$ to be $P_i = H_i \cap E(v_2)$. Suppose there exist $k$ edge disjoint $S_1 \cup v_2$-subgraphs $H_1', H_2',...,H_k'$ in $G_1$ that extend $v_2$. Then, we claim the following. For two graphs $H' \subseteq G_1$ and $H \subseteq G_2$, we use $H \uplus H'$ to denote the subgraph induced by $E(H) \cup E(H')$ in $G$. We refer to this operation as the \emph{edge-union} of two graphs. 

\begin{lemma}
    $H_1 \uplus H_1'$, $H_2 \uplus H_2'$... $H_k \uplus H_k'$ as defined above are $k$ edge disjoint $S$-subgraphs of $G$.
\end{lemma}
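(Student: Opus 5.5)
Two things must be shown: the edge-unions $H_i \uplus H_i'$ are pairwise edge-disjoint, and each of them connects $S$ in $G$. Throughout, I identify edges of $G_1$ and $G_2$ with edges of $G$. Then $E(G_1)=E(G[C_1])\cup\delta(C_1)$ and $E(G_2)=E(G[C_2])\cup\delta(C_1)$, so the two edge sets overlap exactly in the cut $\delta(C_1)=E(v_1)=E(v_2)$.

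\textbf{Disjointness.} Fix $i\neq j$. The $H$'s are pairwise edge-disjoint, and so are the $H'$'s. The only possible overlap is therefore an edge $e\in E(H_i)\cap E(H_j')$. Such an $e$ lies in both $G_2$ and $G_1$, hence in the cut, so $e\in E(H_i)\cap E(v_2)=P_i$. The extension condition gives $P_i\subseteq E(H_i')$, so $e\in E(H_i')\cap E(H_j')$. This contradicts the disjointness of $H_i'$ and $H_j'$.

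\textbf{Connectivity.} Let $F_i=H_i\uplus H_i'$, and note that no contracted vertex belongs to $G$. Take two vertices $x,y$ of $S$. I will show that any vertex of $S_2$ is joined in $F_i$ to an endpoint in $C_2$ of a cut edge of $P_i$, and symmetrically for $S_1$; the cut edges of $P_i$ then glue the pieces together.

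Consider a path in $H_i$ from a vertex of $S_2$ to $v_1$. Its last edge is a cut edge $e\in E(H_i)\cap E(v_1)=P_i$. Replacing $v_1$ by the actual endpoint of $e$ in $C_1$ gives a walk in $F_i$ from the $S_2$-vertex to that endpoint. Any path of $H_i$ that does not pass through $v_1$ maps directly into $F_i$.

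On the $G_1$ side, $H_i'$ connects $S_1\cup\{v_2\}$. A path in $H_i'$ through $v_2$ uses two cut edges $e,f$ incident to $v_2$, and in $G$ these edges have different endpoints $a,b\in C_2$. So the path lifts to $F_i$ only if $a$ and $b$ are connected in $F_i$. This is the main obstacle.

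To handle it, I would use that every cut edge of $H_i'$ is rerouted on the $C_2$ side:
\begin{enumerate}
\item I need each cut edge of $H_i'$ to lie in $P_i\subseteq E(H_i)$. Since $v_2$ is not a cut vertex of $H_i'$, I will argue as follows. I would take $H_i'$ minimal, or restrict attention to edges of $H_i'$ at $v_2$ that lie in $P_i$. All $C_2$-endpoints of $P_i$ are connected to each other inside $H_i$ through $v_1$, and thus, after lifting, inside $F_i$ via $S_2$-paths or directly. This last claim needs care when $S_2=\emptyset$: each edge of $P_i$ has its $C_2$-endpoint joined to $v_1$ in $H_i$ only via that single edge.
\item Hence the non-cut-vertex condition is what I expect to use. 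Whenever $H_i'-v_2$ is connected, every vertex of $S_1$ is connected to every other without passing through $v_2$. Moreover, each edge of $P_i$ has its $C_1$-endpoint in $H_i'-v_2$, because $P_i\subseteq E(H_i')$.
\item Therefore $S_1$ together with all $C_1$-endpoints of $P_i$ lies in one component of $F_i$ restricted to $G[C_1]$ edges.
\item Each vertex of $S_2$ reaches, through a path of $H_i$ to $v_1$, some cut edge of $P_i$, and hence this component.
\end{enumerate}
The degenerate cases, where $P_i$ is empty or $S_1$ or $S_2$ is empty, are handled separately by the same reasoning. So the key step is using the non-cut-vertex condition to avoid passing through $v_2$ at all.
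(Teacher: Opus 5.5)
Your proof is correct and follows the paper's. Disjointness is the same cut-edge argument via $P_i \subseteq E(H_i')$. Connectivity uses the same two facts: $H_i'-v_2$ is connected by the extension definition and contains $S_1$ together with the $C_1$-endpoints of $P_i$, and any path in $H_i$ from a vertex of $S_2$ to $v_1$ lifts to $G$ ending at such an endpoint. Your single ``hub'' component replaces the paper's three pairwise cases.

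Your item 1 (taking $H_i'$ minimal, connecting $C_2$-endpoints ``through $v_1$'') is not needed and should be dropped, since items 2--4 alone complete the argument.
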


\begin{proof}
    We first prove that $H_1 \uplus H_1'$, $H_2 \uplus H_2'$... $H_k \uplus H_k'$ are edge disjoint. Suppose that there exists distinct $i,j$ such that $H_i \uplus H_i'$ and $H_j \uplus H_j'$ have a nonempty intersection. Then either $H_i \cap H_j' \not= \emptyset$ or $H_j \cap H_i' \not= \emptyset$, then without loss of generality, assume the former is true.
    Then, note that since the only edges $G_1$ and $G_2$ share are the edges in the $(C_1,C_2)$ cut, this implies that $H_i(v_1) \cap H_j'(v_2)$ is nonempty, where $H(v)$ denotes the edges of $H$ incident to $v$. However, by definition of extension and our construction, $H_i(v_1) \subseteq H_i'(v_2)$, hence $H_i'$ and $H_j'$ are not edge disjoint and we arrive at a contradiction. 

    We now claim that $H_i \uplus H_i'$ connects $S$ for any $i$. Note that $H_i'-v_2$ is connected, by the definition of an extension, and is a subgraph of $H_i \uplus H_i'$, hence for any $a,b \in S_1$, $a$ and $b$ are connected in $H_i \uplus H_i'$. Now suppose $a \in S_1$ and $b \in S_2$. Then, let $P_1$ be a path from $b$ to $v_1$ in $H_i$, let $e$ be the last edge in the path, and note that $e$ is incident to $v_1$. Let $v'$ be the endpoint of $e$ in $G_1$ that is not $v_2$. Then we can find a path $P_2$ from $v'$ to $a$ on $H_i'-v_2$. Then $P_1 \uplus P_2$ forms a path between $a$ and $b$ on $H_i \uplus H_i'$, hence $a$ and $b$ are connected. 
    
    Finally, suppose that $a,b \in S_2$. Take a path $P$ from $a$ to $b$ on $H_i$. If $P$ does not contain $v_1$, then $P$ exists on $H_i \uplus H_i'$ and we are done. Otherwise, if $v_1 \in P$, break $P$ into two parts $P_1$ and $P_3$, where $P_1$ is the path from $a$ to $v_1$ along $P$ and $P_3 = P-P_1$. Let $e$ be the last edge of $P_1$ and $e'$ be the first edge of $P_3$. Since both $e$ and $e'$ are incident to $v_1$, they appear on $G_1$ as well, and let $e=uv_2$ and $e' = u'v_2$ on $G_1$. Then, there exists a path $P_2$ from $u$ to $u'$ on $H_i'-v_2$, which we know is connected by the definition of an extension. Note that $P_1 \uplus P_2 \uplus P_3$ forms a path from $a$ to $b$ on $H_i \uplus H_i'$ and $a$ and $b$ are connected. Then this means that $S$ is connected on $H_i \uplus H_i'$, giving us our desired result. 
\end{proof}

The proof for Steiner forest uses a similar idea to this lemma, where we cut the graph into two components $C_1,C_2$ where $C_1$ is "small." We inductively pack $k$ Steiner forests on $C_2+v_1$, which will induce an edge subpartition of $v_2$, which we extend to $k$ edge disjoint Steiner trees on $C_1$ before using the edge-union operation to merge the packings together. In order to do this, we will need a theorem that allows us to pack Steiner trees which extend a (sub)partition on a vertex. This is the so-called \emph{Extension Theorem}. For this theorem to work, we need to assume that the partition on the vertex we extend is \emph{balanced}, which we define below. 

\begin{definition}
    Given a vertex $v$, a partition of $E(v) = P_1 \sqcup P_2 ... \sqcup P_k$ into $k$ parts is balanced if for all $i$, $|P_i| \geq 2$. 
\end{definition}

\begin{definition}
    Given a vertex $v$, a subpartition of $P_1 \sqcup P_2 ... \sqcup P_k \subseteq E(v)$ into $k$ parts, where a part is allowed to be the empty set, is called a balanced subpartition if there exists a balanced partition $E(v) = P_1' \sqcup P_2' ... \sqcup P_k'$ such that for all $i$, $P_i \subseteq P_i'$. 
\end{definition}

In other words, a subpartition of $v$ is balanced if it can be extended to a balanced partition of $v$. If an edge $e \in E(v)$ is part of $P_i$ for some (sub)partition, we say that the (sub)partition assigns $e$ a \emph{label} of $i$. 

\begin{definition}
    Given a graph $G$, a collection of $k$ edge disjoint subgraphs $H_1,H_2,...,H_k$ of $G$, we say that the $H_1,H_2,...,H_k$ balance $v$ if the edge subpartition $E(v) \cap H_1$, $E(v) \cap H_2$,..., $E(v) \cap H_k$ is a balanced edge subpartition of $v$. We say that $H_1,H_2,...,H_k$ balance a subset of vertices $S \subseteq V(G)$ if they balance all vertices in $S$.
\end{definition}

For any $e \in H_i$, we also say the packing assigns $e$ a \emph{label} of $i$. We now state the attempt at proving an Extension Theorem in Lau's thesis \cite{lau-thesis}. The following theorem was used to prove the bound of $30k$ for Steiner Forest Packing. Unfortunately, there is a mistake in this theorem which we elaborate on here. 

\begin{conjecture}[Lau's Extension Theorem]
    Let $G$ be a loopless multigraph, with $S,R \subseteq V(G)$ such that $S \cap R = \emptyset$, $S$ is $Qk$-edge-connected and for all $r \in R$, the degree of $r$ is at least $Qk$, where $Q \geq 30$. Given a vertex $v$ and a subpartition $P_k(v)$ of $E(v)$ into $k$ parts, there are $k$ edge-disjoint $S$-subgraphs that extend $v$ and balance $S \cup R$ if either the following is true.  
    \begin{enumerate}
        \item $v \in S$, $d(v) = Qk$, and $P_k(v)$ is a balanced edge subpartition of $E(v)$. 
        \item $N(v) \subseteq S \cup R$, $d(v) \leq Qk$, $v \not\in S \cup R$, and there is no edge cut of size at most $Qk$ that breaks $G$ into $C_1,C_2$, where $S \subseteq C_2$, $v \in C_1$, and $R \cap C_1 \not= \emptyset$. 
    \end{enumerate}
\end{conjecture}

The Extension Theorem is a very important tool in both packing Steiner Trees and packing Steiner Forests. Theorems \ref{thm:wwextthem} and \ref{thm:devosext} are examples of Extension Theorems used for packing Steiner Trees, and note that both of these theorems require some form of balancing at the vertex to be extended. One idea to tackle both Steiner Tree and Forest packing is to cut the graph into smaller parts, construct a packing on each part, and merge the packings together. We cut $G=(C_1,C_2)$ and contract into $C_1+v_2$ and $C_2+v_1$, construct a packing on $C_2+v_1$, extend $v_2$ to a packing in $C_1+v_2$, and edge-union the packings together.

As Extension Theorems usually require the vertex to be extended to start with a balanced partition, this means that we would like $v_2$ to start with a balanced partition, which is achieved when the packing on $C_2+v_1$ balances $v_1$. Compared to the Steiner Tree problem, however, achieving balancing on $v_1$ is much harder in the Steiner Forest problem. Hence, Lau's Extension Theorem proposes two conditions where either one guarantees an extension: 1) the vertex $v$ to be extended has a balanced partition, or 2) the neighborhood of $v$ is in $S \cup R$, along with some other conditions. The idea is that we can still apply the Extension Theorem on $v_2$ under the correct conditions even if we can't get $v_1$ to be balanced. 


Unfortunately, satisfying the second condition is not always sufficient for an extension of $k$ edge disjoint $S$-subgraphs to exist. We will disprove that the second condition is sufficient for a extension with a counterexample. We note that our counterexample is essentially the same as the one given by DeVos, McDonald, and Pivotto \cite{devos2016packing} in their paper to justify why balancing is needed for the Extension Theorem.

\begin{lemma}
    Let $k \geq 3$ be an integer. Given any integer $Q \geq 30$, there exists a loopless multigraph $G$ with $S \subseteq V(G)$, $Qk$-edge-connected, all vertices in $R \subseteq V(G)$ have degree at least $Qk$ and $S \cap R = \emptyset$, a vertex $v$ with a subpartition $P_k(v)$ of $E(v)$ that satisfies the second condition of Lau's Extension Theorem, but does not have $k$ edge disjoint $S$-subgraphs that extend $P_k(v)$. 
\end{lemma}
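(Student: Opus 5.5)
The plan is to build an explicit counterexample, modelled on the example DeVos, McDonald and Pivotto use to show that balancing is needed. The mechanism is to exploit the two constraints that condition~2 does not control: the subpartition $P_k(v)$ may be arbitrary (unbalanced), and the $S$-subgraphs must also \emph{balance} $S\cup R$. I would make every neighbour of $v$ lie in $S$, or in $R$ with a forced structure. I would then choose $P_k(v)$ so that a large block of edges at some neighbour $s$ of $v$ is forced into one or a few of the $H_i$. What remains at $s$ would be too little for a balanced subpartition there, or for every $H_i$ to be kept connected with $v$ not a cut vertex.

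The construction would have three ingredients. First, take terminals $s_1,\dots,s_p\in S$ joined to the hub $v$ by edge bundles of multiplicity $m_j$, with $\sum_j m_j=d(v)\le Qk$. Second, join each $s_j$ to the rest of the graph by a carefully chosen small number of further edges, routed through high-degree vertices of $R$ and a large ``backbone'' of $S$-vertices. The backbone is what supplies $Qk$-edge-connectivity of $S$, and also ensures each $r\in R$ has degree at least $Qk$. Third, choose labels: every edge of the bundle at $s_j$ receives the same label, or the labels are concentrated on a single part. Then, in any extension, $s_j$ sees at most $d(s_j)-m_j$ edges carrying the other $k-1$ labels. I would set parameters so that $d(s_j)-m_j<2(k-1)$. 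Then no balanced partition of $E(s_j)$ can extend the induced subpartition, contradicting the requirement that the $H_i$ balance $S$. The hypothesis $k\ge 3$ should be where the count $2(k-1)$ beats the slack available.

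Verification then splits into four checks. (a) $S$ is $Qk$-edge-connected; by Menger this is a short list of cut computations, with the backbone handling most of them. (b) Every $r\in R$ has degree at least $Qk$, and $S\cap R=\emptyset$. (c) Condition~2 holds: $N(v)\subseteq S\cup R$, $v\notin S\cup R$ and $d(v)\le Qk$. In addition, every cut $(C_1,C_2)$ with $v\in C_1$, $S\subseteq C_2$ and $R\cap C_1\neq\emptyset$ has size greater than $Qk$. Here the point is that $C_1$ may not contain any $s_j$, so such a cut must contain all of $E(v)$ together with the edges leaving the $R$-vertex, and the latter can be made large. (d) Impossibility: any family of $k$ edge-disjoint $S$-subgraphs extending $P_k(v)$ fails to balance some $s_j$ (or some $r$), by the counting above.

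The main obstacle is the tension in (a) versus (d). $Qk$-edge-connectivity forces $d(s_j)\ge Qk$, so the few non-$v$ edges at $s_j$ must be compensated by $m_j$ being close to $Qk$. Yet $\sum m_j\le Qk$, so only one terminal can carry a huge bundle. In addition, paths from that terminal to the rest of $S$ that pass through $v$ must exit $v$ along other edges. My first attempt would therefore use a single heavy terminal $s$ with $m\approx Qk-2k+3$. The remaining edges of $v$ go to a vertex $r\in R$ (or a second terminal) that is richly connected to the backbone, so that the $s$-to-backbone cuts still reach $Qk$ by counting the edges at $v$ and at $r$ together. If this counting fails, the fallback is to relax which part of the conclusion is violated. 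Instead of balancing, I would aim for a connectivity failure: too few non-$v$ edges at some neighbour to carry all $k$ labels, so that $v$ is necessarily a cut vertex of some $H_i$. In that case the labels would be arranged via the $k\ge 3$ distinct parts.
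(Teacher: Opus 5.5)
Your main construction cannot satisfy the hypotheses. Since $|S|\ge 2$ and $S$ is $Qk$-edge-connected, $d(s)\ge Qk$. So your condition $d(s)-m<2(k-1)$ forces $m\ge Qk-2k+3$, and hence $d(v)-m\le 2k-3$. The cut $\delta(\{s,v\})$ then has at most $(d(s)-m)+(d(v)-m)\le 4k-6<Qk$ edges. Because condition~2 requires $v\notin S$, this cut separates $s$ from every other terminal, so $S$ is not $Qk$-edge-connected. Making the far endpoint of $v$'s remaining edges ``richly connected'' does not help, since only $d(v)-m$ edges leave $\{s,v\}$ through $v$.

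If the heavy neighbour is instead some $r\in R$, the same count makes $(\{v,r\},V(G)-\{v,r\})$ a cut of size at most $4k-2\le Qk$ with $v$ and an $R$-vertex on one side and $S$ on the other. Condition~2 forbids exactly this cut.

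Your fallback, starving a single neighbour of non-$v$ edges, fails for the same reason. For any neighbour $x\in S\cup R$ of $v$, combine $d(x)\ge Qk$ with the lower bound on $|\delta(\{x,v\})|$, which comes from connectivity of $S$ or from condition~2. This shows that $x$ has at least $Qk/2$ edges not going to $v$, so no local obstruction at one neighbour exists. Note also that the lemma asserts that no extension exists at all, with no balancing requirement. A balancing failure, your target in (d), would therefore not prove it as stated.

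The missing idea is a global bottleneck: the $Qk$-edge-connectivity of $S$ has to be paid for with routes that an extension cannot use. The paper builds this as follows.
\begin{itemize}
\item Take $S=A\cup B$ for two cliques on $Qk+1$ vertices, joined by $Qk-\lfloor (k-1)/2\rfloor$ edges.
\item Subdivide $Qk-k+1$ of these edges (the set $X$), put the subdivision vertices in $R$, and pad their degrees with subdivided loops.
\item Join each subdivision vertex to $v$ by one edge labelled $1$.
\item Let $v$ also send $\lfloor (k-1)/2\rfloor$ edges to $A$ and $\lceil (k-1)/2\rceil$ edges to $B$, carrying labels $2,\dots,k$ once each.
\end{itemize}
Paths through $X$, through the direct edges $Y$, and through $v$ give $Qk$-edge-connectivity of $S$, and $d(v)=Qk$. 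Any forbidden cut with a nonempty set $R_1$ of subdivision vertices on $v$'s side contains $v$'s $k-1$ edges to $S$, its edges to the other subdivision vertices, and two edges at each vertex of $R_1$. That is at least $Qk+|R_1|>Qk$ edges, so condition~2 holds.

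Yet every $H_i-v$ must connect $A$ to $B$. $H_1$ must use an $S$-edge at every subdivision vertex to keep it attached in $H_1-v$, since its other neighbours are dead ends. This kills every $X$-route for the others, so $H_2,\dots,H_k$ need distinct edges of $Y$. But $|Y|=k-1-\lfloor (k-1)/2\rfloor<k-1$ precisely when $k\ge 3$.

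The essential mechanism is one label class spread over many $R$-neighbours of $v$ lying on the $A$--$B$ bottleneck. That class consumes capacity which counted towards edge-connectivity, and no single-neighbour counting argument can reproduce this.
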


\begin{proof}
    Take two cliques $A$ and $B$, both of which are cliques on exactly $Qk+1$ vertices. Add $Qk-\lfloor (k-1)/2 \rfloor$ edges arbitrarily between $A$ and $B$, and set $S = A \cup B$. Select $Qk-k+1$ edges arbitrarily out of the edges between $A$ and $B$, and denote the set of these of edges $X$, and the set of the other edges between $A$ and $B$ as $Y$. Subdivide each edge in $X$. Then add a vertex $v$, and connect it to each of the new vertices we added to subdivide $X$. Add $\lfloor (k-1)/2 \rfloor$ edges to $v$ going to (not necessarily distinct) vertices in $A$ arbitrarily, and add $\lceil (k-1)/2 \rceil$ edges to $v$ going to (not necessarily distinct) vertices in $B$ arbitrarily. Add self-loops to the vertices that subdivide $X$ so that all of them have degree at least $Qk$, and subdivide the self loops themselves to ensure the graph is loopless. Designate all the vertices that subdivide $X$ to be in $R$. Note that $A \cup B$ is now $Qk$-edge-connected, as for any $a \in A$ and $b \in B$, $Qk-\lfloor (k-1)/2 \rfloor$ edge disjoint paths can be formed between $a$ and $b$ using $X \cup Y$, and $\lfloor (k-1)/2 \rfloor$ other edge disjoint paths can be formed from $a$ to $b$ by traversing through $v$. Furthermore, by construction, the degree of $v$ is exactly $Qk$. 

    Assume, for contradiction there is a cut of $G$ of size at most $Qk$ into $C_1,C_2$ such that $v \in C_1$, $S \subseteq C_2$, and $R \cap C_1 \not= \emptyset$. Let $R \cap C_1 = R_1$ and $R \cap C_2 = R_2$. Then, $v$ has $k-1+|R_2|$ edges incident to a vertex in $S$ or a vertex in $R_2$, which all must be part of the cut. Take any vertex in $r \in R_1$, and note that $r$ has two neighbors in $S$, hence both edges between $r$ and $S$ must also be part of the cut. This means that the cut has size at least $k-1+|R_2|+2|R_1| = Qk + |R_1| > Qk$, since $R_1$ must be nonempty, a contradiction. This means that this graph satisfies the second condition of Lau's Extension Theorem. 

    For all edges going from $v$ to $X$, designate them to have a label of $1$. For the edges going from $v$ to $A$, designate them to have labels $2,3,...,\lfloor (k-1)/2 \rfloor+1$. Finally, for the edges going from $v$ to $B$, designate them to have labels $\lfloor (k-1)/2 \rfloor+2,...,k$. Note that for any $H_i$ where $i \not=1$ to connect $A$ and $B$, at least one edge in $Y$ must be used, because it is impossible to use the subdivided edges in $X$, as they must be used in $H_1$ to connect some $r \in R$ to $A$ or $B$. However, note that $|Y| = Qk-\lfloor (k-1)/2 \rfloor - (Qk-k+1) = k - \lfloor (k-1)/2\rfloor-1 < k-1$ as long as $k \geq 3$. Hence, it is impossible to find $H_1,H_2,...,H_k$ that extend $v$, contradicting the theorem of Lau, as desired. 
\end{proof}

A figure illustrating the counterexample in the proof above is shown below.

    \begin{figure}[h]
        \centering
        \includegraphics[width=0.45\textwidth]{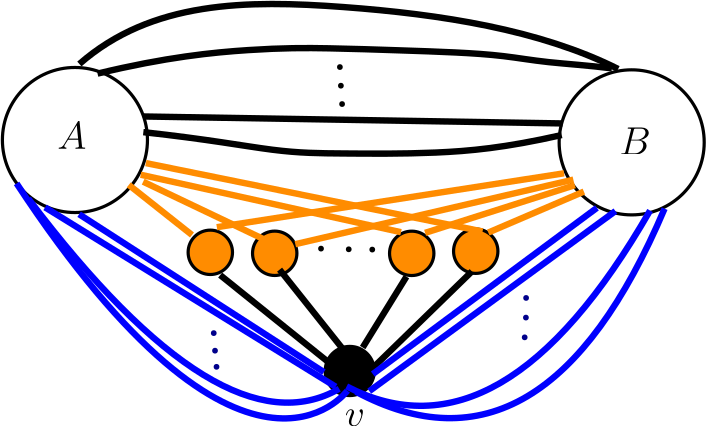}
    \end{figure} 

The vertices part of $R$ are highlighted in orange (the subdivided self loops are not pictured). The black edges incident to $v$ have a label of 1, while the blue edges incident to $v$ have labels $2$ to $k$ appearing exactly once. The orange edges are the subdivided edges of $X$ in the proof, and the black edges between $A$ and $B$ are the edges in $Y$. By construction, any subgraph extending a blue edge incident to $v$ must use one of the black edges between $A$ and $B$ to connect the subgraph, since the orange edges need to be used to extend the edges incident to $v$ with a label of 1. However, since by our construction, there are more blue edges than black edges between $A$ and $B$, such an extension is not possible. 
    
Lau's proof of the Extension Theorem is an induction proof that began by deleting vertices and edges that did not affect the connectivity of $S$. Unfortunately, one flaw with this approach is that it does not account for the fact that these deletions can make the degree of a vertex in $R$ fall below $Qk$, hence removing the vertex from $R$ and creating complications when recursing. To remedy this, we give and prove our own version of the Extension Theorem, which we will use to prove a bound of $32k$ for Steiner Forest Packing. 

In lieu of requiring the Extension Theorem to be true for a case similar to the second case in Lau's Extension Theorem, we drop the requirement for a vertex to be in $R$ from being incident to $Qk$ edges to $(Q-2)k$ edges. Even though self loops do not affect the connectivity of the graph, we still allow the graph to have them to make induction easier. We will prove the Extension Theorem for when $S$ is $32k$-edge-connected, but more careful analysis can yield a bound of $31k$ when $k \geq 8$.

\section{When $t$ is small}

Before we prove the Extension Theorem and our main theorem on Steiner Forest packing, we give a few results when $t$ is small. We will use the ideas in these proofs for our final result on Steiner Forest packing. When $t = 2$ or $t=3$, we can get a better bound of $9k$ when packing Steiner Forests. To prove this, we utilize the Extension Theorem of West and Wu. 

\begin{theorem}[West-Wu's Extension Theorem \cite{westwu}]\label{thm:wwextthem}
    Let $k$ and $\lambda k$ be positive integers with $\lambda \geq 6.5$. Suppose there is a loopless multigraph $G$ with some vertex $v$ such that $d(v) = \lambda k$, and $v \in S$ where $S \subseteq V(G)$ is $\lambda k$-edge-connected. Take an edge subpartition $E_1,E_2,...,E_k$ on $v$ into $k$ parts, and let $E_0$ denote the unlabeled edges incident to $v$. Then, if $|E_i| \geq 1$ for all $1 \leq i \leq k$ and $|E_0| \geq k$, there exist $k$ edge disjoint $S$-subgraphs $H_1,H_2,...,H_k$ on $G$ that extend $v$. Moreover, each vertex in $S$ is incident to at least $k$ edges which do not appear in $H_1 \cup H_2,...,H_k$, and $(H_1 \cup H_2... H_k) \cap E_0 = \emptyset$. 
\end{theorem}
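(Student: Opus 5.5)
This is West and Wu's Extension Theorem, cited from \cite{westwu}, so in the paper it would be invoked rather than reproved. If I had to prove it, I would follow the West--Wu scheme: a minimal counterexample argument combined with splitting-off and contraction of tight sets, modelled on Lau's method. I would prove a slightly stronger, induction-friendly statement. Besides extending $v$, the packing leaves at least $k$ unused edges at every vertex of $S$, and it avoids $E_0$.

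\textbf{Setup.} Take a counterexample $(G,S,v)$ minimizing $|V(G)|+|E(G)|$.

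\textbf{Reductions.}
\begin{enumerate}
\item First I would normalize so that every vertex outside $S\cup\{v\}$ has degree at least $3$ and even degree, or is otherwise reducible. This uses Mader's splitting lemma at a non-terminal vertex $w\neq v$: split off a pair of edges at $w$ while preserving $\lambda k$-edge-connectivity of $S$. A packing in the split graph lifts back by replacing each split edge with its two originals. The lift may create a cut vertex at $w$, but only $v$ needs the non-cut property, so this is harmless. Splitting must also respect the labels: an edge in $E_i$ is never split with an edge incident to $v$. I would handle this by splitting only at $w\neq v$ and, when $w$ is adjacent to $v$, treating the two edges as one labelled edge.
\item Next, if there is a nontrivial tight cut, meaning a set $X$ with $d(X)=\lambda k$, $v\notin X$, $|X|\ge 2$ and $X\cap S\neq\emptyset$, I would contract $X$ to a single terminal. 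Apply minimality to the contracted graph, which makes the contracted vertex a terminal of degree exactly $\lambda k$. This induces a partition on the boundary edges of $X$. Then apply induction inside $G/\overline{X}$, with the contracted vertex playing the role of $v$. Here the hypotheses $|E_i|\ge 1$ and $|E_0|\ge k$ for the new vertex must be certified by the outer packing, and this is exactly why the strengthened conclusion keeps $k$ free edges at each terminal.
\end{enumerate}
Finally, merge the two packings by the edge-union lemma proved earlier in the paper.

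\textbf{Base structure.} After the reductions, every cut separating terminals other than a single vertex has size greater than $\lambda k$, and non-terminals are few. In this regime, pick an edge $e=vu$ with label $i$. Delete the remaining edges at $v$ appropriately, or move the label into a small structure, and apply induction with $k$ or $k-1$ trees. The $6.5$ slack pays for the lost connectivity: the $|E_0|\ge k$ unlabeled edges and the reserve of $k$ free edges per terminal are consumed exactly here. Alternatively, one can use a Nash-Williams--Tutte count: with edge-connectivity exceeding $\lambda k$ on the contracted terminal structure, one gets $k$ edge-disjoint spanning trees of the terminal-induced graph that still leave $k$ free edges at each vertex. One then attaches each $E_i$ edge to the $i$-th tree. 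That attachment keeps $v$ from being a cut vertex, since $v$'s edges go to terminals already spanned by tree $i$.

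\textbf{Main obstacle.} The hardest step is the tight-set contraction. One must verify that the induced subpartition on the contracted vertex again satisfies $|E_i|\ge1$ and $|E_0|\ge k$, and simultaneously that no vertex of $S$ loses its $k$ reserved edges across the two recursive calls. This bookkeeping is where the constant $6.5$ arises. I would organize it by a degree count on the contracted vertex: $\lambda k$ edges in total, at most $2k$ used per packing side, and the rest distributed.
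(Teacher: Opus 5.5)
The paper gives no proof of this statement. It quotes it from West and Wu and uses it as a black box, so your opening sentence is exactly what the paper does. The sketch you add beyond that does not amount to a proof, and the gap is in the step that actually builds the subgraphs.

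Your Nash-Williams--Tutte alternative fails. After your own reductions, non-terminals end at degree exactly $3$: Mader's lemma cannot split a degree-$3$ vertex, so the ``even degree'' normal form is unavailable. Moreover, $G[S]$ may have no edges at all, since terminals can be joined only through degree-$3$ Steiner vertices. Then there are no spanning trees of the terminal-induced graph, and $G$ itself has at most $3$ edge-disjoint spanning trees. This is precisely why Steiner packing does not follow from Tutte--Nash-Williams.

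Your attachment step is also wrong. If $vw\in E_i$ with $w\notin S$, nothing puts $w$ into the $i$-th tree. Then $w$ is isolated in $H_i-v$, so $v$ is a cut vertex of $H_i$ and $H_i$ does not extend $v$.

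What is missing is the core argument of Lau and West--Wu, which Parts 2 and 3 of the paper's own Extension Theorem mirror:
\begin{enumerate}
\item Delete $v$ together with $W=N(v)\setminus S$.
\item Pack $S-v$-subgraphs in $G-v-W$ using a tool designed for degree-$3$ Steiner vertices, such as West--Wu's $(k,g)$-family/$S$-connector theorem (cf.\ Lemma \ref{lem:treepacking}).
\item Reattach $v$, where each $vw\in E_i$ with $w\in W$ pulls both other edges of $w$ into $H_i$.
\item Separately handle the case where $S-v$ has a small cut $Z$ in $G-v-W$, via crossing couples (compare Lemmas \ref{lem:crossingboundS} and \ref{lem:case1}).
\end{enumerate}
``Apply induction with $k$ or $k-1$ trees'' does not replace this. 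Removing edges at $v$ destroys $d(v)=\lambda k$, and you never explain how the $i$-th subgraph connects $S$.

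Relatedly, you misplace the source of $6.5$. The tight-set contraction is constant-free. In the outer instance, each $H_i$ must use an edge at the contracted terminal, which gives $|E_i|\ge 1$. The reserve property gives $|E_0|\ge k$. Both hold for any $\lambda$. The constant is consumed in the final packing step: it is the connectivity left after deleting $v$ and $W$, measured against what the connector theorem requires.

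Three smaller points:
\begin{enumerate}
\item The tight-set step also needs $|\overline{X}|\ge 2$. Otherwise $X=V(G)-v$ is always tight and $G/\overline{X}=G$, so there is no progress.
\item The claim that ``non-terminals are few'' after the reductions is unjustified.
\item Your justification that $v$ is not a cut vertex assumes all neighbours of $v$ are terminals, which is false in general.
\end{enumerate}
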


The West-Wu Extension Theorem also implies the following main result of their paper.

\begin{theorem}[\cite{westwu}]\label{thm:wwpacking}
     Given a graph $G$ and $S \subseteq G$, if S is $6.5k$-edge-connected in $G$, then $G$ contains $k$ edge-disjoint $S$-subgraphs.
\end{theorem}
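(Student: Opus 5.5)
We derive the statement from the West--Wu Extension Theorem (Theorem \ref{thm:wwextthem}) with $\lambda = 6.5$, by the standard induction on $|V(G)|+|E(G)|$ that reduces to a graph containing a terminal of degree exactly $6.5k$. We may assume $6.5k$ is an integer; otherwise replace $6.5k$ by $\lceil 6.5k\rceil$ throughout.

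\textbf{Preprocessing.} Take a minimal counterexample $G$ with $S$ that is $\lambda k$-edge-connected, $\lambda = 6.5$.
\begin{itemize}
\item If $|S|\le 1$ the statement is trivial (the empty subgraphs, or $k$ copies of the single vertex, are $S$-subgraphs). So assume $|S|\ge 2$.
\item Delete every edge whose removal keeps $S$ $\lambda k$-edge-connected. Any packing in the smaller graph is a packing in $G$.
\item Lift or split off at every non-terminal vertex (via Mader's splitting theorem), and discard non-terminals of degree at most $3$. This is the standard reduction: a packing in the reduced graph pulls back to $G$, because a split pair of edges is replaced by a path.
\end{itemize}

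\textbf{Finding a terminal of degree exactly $\lambda k$.} After preprocessing, every edge is critical, so some $S$-separating cut of size exactly $\lambda k$ contains it.
\begin{itemize}
\item If there is a tight cut $(C_1,C_2)$ with both sides containing at least two vertices and each side meeting $S$, contract each side in turn. This gives $G_1=C_1+v_2$ and $G_2=C_2+v_1$, where $v_1,v_2$ are treated as terminals of degree $\lambda k$. Both graphs are smaller and keep $S$-connectivity $\lambda k$. Apply induction to $G_2$ to get $k$ edge-disjoint $(S\cap C_2)\cup v_1$-subgraphs.
\item The main obstacle is merging at $v_2$. Induction alone does not meet the hypotheses of Theorem \ref{thm:wwextthem}, which needs $|E_i|\ge 1$ for each $i$, $|E_0|\ge k$, and $d(v_2)=\lambda k$.
\item To handle this, shrink each $H_i$ in $G_2$ to a minimal $S$-subgraph. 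Each then meets $E(v_1)$ in some edges. If some $H_i$ avoids $v_1$ entirely, assign it one arbitrary previously unused edge. The $k$ Steiner trees use at most $k(\lambda k)$ edges in total, but we need a bound at the single vertex. Choose minimal subgraphs so that each uses at most a bounded number of edges at $v_1$. If a tree uses many edges at $v_1$, we can instead treat $v_1$ as a leaf-ish vertex: it suffices to keep one edge per tree at $v_1$, after re-rooting via the tree's connectivity through $v_1$.
\item Concretely, in $G_2$ give each $H_i$ a single designated edge $e_i$ at $v_1$ if $H_i$ touches $v_1$. The remaining edges of $H_i$ at $v_1$ are discarded from $H_i$ only if $H_i-v_1$ plus $e_i$ stays connected. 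When that fails, keep them; then argue by the West--Wu labeling that at most $\lambda k - k$ edges are labeled, so $|E_0|\ge k$. I expect this counting step to be the delicate one.
\item Once the hypotheses hold, Theorem \ref{thm:wwextthem} extends $v_2$ in $G_1$. The edge-union lemma then gives $k$ edge-disjoint $S$-subgraphs of $G$.
\end{itemize}

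\textbf{No nontrivial tight cut.} Otherwise every tight cut is trivial, i.e.\ isolates a single vertex. Since every edge is critical and $|S|\ge 2$, some terminal $v$ has $d(v)=\lambda k$. Choose the labeling $E_1,\dots,E_k$ with one edge each and $E_0$ the rest, so $|E_0|=\lambda k-k\ge k$. Theorem \ref{thm:wwextthem} then yields the $k$ edge-disjoint $S$-subgraphs directly. This contradicts minimality and completes the proof.
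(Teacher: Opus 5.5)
There is a genuine gap, exactly at the step you flagged as delicate: the merge at $v_2$ when a nontrivial tight cut exists. You get the packing in $G_2=C_2+v_1$ from the induction hypothesis. That hypothesis is only the packing statement, so it says nothing about how many edges at $v_1$ the subgraphs use. Theorem \ref{thm:wwextthem} at $v_2$, however, needs at least $k$ unlabeled edges. Your trimming rule cannot create them: whenever $v_1$ is a cut vertex of $H_i$, every edge of $H_i$ at $v_1$ must stay, and no ``West--Wu labeling'' argument bounds the total.

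Here is a concrete failure. Take $k=12$ and $\lambda k=78$. Let $G$ be two copies of $K_6$ with every edge of multiplicity $13$, joined by a perfect matching of multiplicity $13$, with all vertices terminals. Then $G$ is $78$-edge-connected and every vertex has degree $78$, so every edge is critical and your preprocessing changes nothing. The matching is a tight cut with six vertices on each side, and $G_2$ is $K_7$ with all multiplicities $13$. The $12$ stars centred at $v_1$, each with one edge to each of the six terminals, are a perfectly valid output of the induction hypothesis. Each star is a minimal tree with $v_1$ as a cut vertex, so nothing can be trimmed. Together they label $72$ of the $78$ edges at $v_1$, leaving $6<k$ unlabeled, so Theorem \ref{thm:wwextthem} cannot be applied at $v_2$. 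Separately, discarding non-terminals of degree $3$ is unjustified, since a path can pass through such a vertex; that step is not used later, so just drop it.

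The missing idea is to apply Theorem \ref{thm:wwextthem} itself on $G_2$, not the induction hypothesis. The vertex $v_1$ has degree exactly $\lambda k$ and is $\lambda k$-edge-connected to $S\cap C_2$. So you can label one edge at $v_1$ per tree and leave $\lambda k-k\ge k$ edges unlabeled. The clause $(H_1\cup\dots\cup H_k)\cap E_0=\emptyset$ then guarantees that the copied labeling at $v_2$ again has one edge per label and at least $k$ unlabeled edges. This two-sided argument is what the paper uses for its corollary of the DeVos--McDonald--Pivotto theorem.

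In fact the induction and preprocessing are unnecessary altogether. The paper gives no proof of this statement; it cites West and Wu and remarks that it follows from Theorem \ref{thm:wwextthem}. The direct derivation goes as follows. For $|S|\ge 2$, attach a new vertex $v$ to some $s\in S$ by $\lambda k=\lceil 6.5k\rceil$ parallel edges. Then $S\cup\{v\}$ is $\lambda k$-edge-connected and $d(v)=\lambda k$. Give each of the labels $1,\dots,k$ to one edge and apply Theorem \ref{thm:wwextthem}. Since $v$ is not a cut vertex of any $H_i$, the graphs $H_i-v$ are $k$ edge-disjoint connected subgraphs of $G$ containing $S$. Your final case is exactly this argument; the pendant vertex simply manufactures the terminal of degree $\lambda k$ you were trying to find.
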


Finally, we will use the following lemma which will help us glue packings of Steiner trees together to pack Steiner forests. 

\begin{lemma}\label{lem:forestglue}
    Let $G$ be a graph with $S_1,S_2...,S_t \subseteq V(G)$, and suppose there is a cut $G=(C_1,C_2)$ such that $S_1 \subseteq C_1$ and $S_2,...,S_t \subseteq C_2$. Contract $C_1$ and $C_2$ to obtain $C_1+v_2$ and $C_2+v_1$. Suppose a packing of $k$ $(S_2,...,S_t)$-subgraphs exists on $C_2+v_1$, and copy the labels at $v_1$ over to $v_2$. If $C_1+v_2$ has $k$ edge-disjoint $S_1$-subgraphs that extend $v_2$, then $G$ has $k$ edge-disjoint $(S_1,S_2,...,S_t)$-subgraphs. 
\end{lemma}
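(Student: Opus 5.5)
We mirror the edge-union lemma proved earlier (the one showing $H_i \uplus H_i'$ are $k$ edge-disjoint $S$-subgraphs), but now track every set $S_j$ rather than a single $S$. Let $F_1,\dots,F_k$ be the given edge-disjoint $(S_2,\dots,S_t)$-subgraphs of $C_2+v_1$. Let $P_i = E(F_i)\cap E(v_1)$ be the induced labels, copied to $v_2$ via the identification $E(v_1)=E(v_2)$ (both are the $(C_1,C_2)$-cut). Let $H_1',\dots,H_k'$ be edge-disjoint $S_1$-subgraphs of $C_1+v_2$ that extend $v_2$. Thus $P_i\subseteq E(H_i')$ and $v_2$ is not a cut vertex of $H_i'$. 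We set $K_i = F_i \uplus H_i'$ and claim that $K_1,\dots,K_k$ are the desired subgraphs.

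\textbf{Edge-disjointness.} This is the same argument as before. The only edges common to $C_1+v_2$ and $C_2+v_1$ are cut edges. Suppose $i\neq j$ and $K_i$, $K_j$ share an edge, say $e\in F_i\cap H_j'$. Then $e$ is a cut edge, so $e\in P_i\subseteq H_i'$, which contradicts the disjointness of $H_i'$ and $H_j'$. Disjointness within each side is given.

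\textbf{Connectivity of $S_1$ in $K_i$.} Since $S_1\subseteq C_1$, any two vertices of $S_1$ are joined in $H_i'$. If such a path passes through $v_2$, we must reroute it through $G$. Here we need care, because $v_2$ is a contracted vertex and not a vertex of $G$. We use the fact that $v_2$ is not a cut vertex of $H_i'$: then $H_i'-v_2$ is connected, provided $S_1$ is nonempty and not trivially $\{v_2\}$. Hence $S_1$ is connected in $H_i'-v_2$, which is a subgraph of $K_i$.

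\textbf{Connectivity of $S_j$ in $K_i$ for $j\ge 2$.} Take $a,b\in S_j\subseteq C_2$ and a path $P$ from $a$ to $b$ in $F_i$. If $P$ avoids $v_1$, it survives in $K_i$. Otherwise, $v_1$ appears once on $P$ (we may assume $P$ is simple), entered by a cut edge $e$ and left by a cut edge $e'$, both in $P_i$. In $C_1+v_2$ write $e=uv_2$ and $e'=u'v_2$. Both lie in $H_i'$ because the extension contains $P_i$. Since $H_i'-v_2$ is connected, it contains a $u$–$u'$ path. Splicing this path in place of $v_1$ gives an $a$–$b$ walk in $K_i$, exactly as in the earlier lemma.

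The main subtlety is the degenerate case $P_i=\emptyset$ together with $H_i'$ possibly being just $\{v_2\}$ or not meeting $C_1$. We handle it by noting that when $P_i=\emptyset$, no path in $F_i$ uses $v_1$, so the last case is vacuous. The $S_1$ part only needs $H_i'-v_2$ to be connected and to contain $S_1$, which holds since $S_1\subseteq C_1$ and $v_2\notin S_1$.
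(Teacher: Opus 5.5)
Your proof is correct and follows the same route as the paper: you take the edge-unions $F_i \uplus H_i'$, connect $S_1$ using that $H_i'-v_2$ is connected, and for each $S_j$ with $j\ge 2$ you splice a $u$--$u'$ path of $H_i'-v_2$ in place of $v_1$. You also spell out edge-disjointness, which the paper's proof of this lemma leaves implicit from the earlier edge-union lemma, and you correctly require the spliced path to lie in $H_i'-v_2$ rather than merely in $H_i'$ as the paper writes.
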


\begin{proof}
    Let $H_1,H_2,...,H_k$ be the $k$ $(S_2,...,S_t)$-subgraphs on $C_2+v_1$ and let $H_1',H_2',...,H_k'$ be the $S_1$-subgraphs on $C_1+v_2$ that extend $v_2$. We claim that $H_1 \uplus H_1', H_2 \uplus H_2',..., H_k \uplus H_k'$ are $(S_1,S_2,...,S_t)$-subgraphs on $G$. We first note that for any $a,b \in S_1$ is connected in $H_i \uplus H_i'$ for any $i \in [k]$, as $H_i'-v_2$ is connected by the definition of an extension. Now take any $a,b \in S_j$ for $j \in \{2,...,t\}$, we wish to prove there is a path from $a$ to $b$ in $H_i \uplus H_i'$ for any $i \in [k]$. If a path from $a$ to $b$ in $H_i$ does not pass through $v_1$, then we are done. Otherwise, let $P_1$ be the segment of the path that starts at $a$ and first enters $v_1$, and let $P_3$ be the other segment of the path that leaves $v_1$ and reaches $b$. Let $e$ be the last edge traversed in $P_1$ and $e'$ be the first edge traversed in $P_3$, and find their corresponding edges $e=a'v_2$ and $e'=b'v_2$ on $C_1+v_2$. By construction, both edges must appear in $H_i'$, and we can find a path $P_2$ from $a'$ to $b'$ in $H_i'$. Then, we on $H_i \uplus H_i'$, we can start from $a$, traverse $P_1$, $P_2$, and then $P_3$ in that order, reaching $b$ and hence $a$ and $b$ are connected. As each $S_j$ is connected in each $H_i \uplus H_i'$, we have $k$ edge disjoint $(S_1,S_2,...,S_t)$-subgraphs on $G$, as desired.
\end{proof}

We now handle Steiner Forest packing for when $t=2,3$.

\begin{lemma}
    Given a graph $G$, let $S_1,S_2,S_3 \subseteq V(G)$ each be $9k$-edge-connected in $G$. Then there exists a packing of $k$ edge-disjoint $(S_1,S_2,S_3)$-forests. 
\end{lemma}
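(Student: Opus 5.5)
The plan is to reduce to a single Steiner tree extension via a small cut, and to use the slack between $9k$ and $6.5k$ to pad the contracted vertex so that Theorem~\ref{thm:wwextthem} applies. The same argument handles $t=1,2,3$ in order, with $t=1$ given directly by Theorem~\ref{thm:wwpacking}. I describe the step from $t-1$ to $t$, stated for $t=3$; the case $t=2$ is identical.

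\textbf{Easy case.} If $U=S_1\cup S_2\cup S_3$ is $6.5k$-edge-connected in $G$, apply Theorem~\ref{thm:wwpacking} to $U$. This gives $k$ edge-disjoint $U$-subgraphs, and each of these is already an $(S_1,S_2,S_3)$-subgraph.

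\textbf{Choosing the cut.} Otherwise there is an edge cut $(C_1,C_2)$ of size $d<6.5k$ with $U$ meeting both sides. Each $S_i$ is $9k$-edge-connected, so no such cut splits any $S_i$. Hence the index set $\{1,2,3\}$ is partitioned into two nonempty groups by the sides of the cut, and one group is a singleton. Relabel so that $S_1\subseteq C_1$ and $S_2,S_3\subseteq C_2$. Among all cuts separating $S_1$ from $S_2\cup S_3$, take one of minimum size $d$, so $d<6.5k$.

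\textbf{Packing the large side.} Contracting $C_1$ to $v_1$ does not decrease the connectivity of $S_2$ or $S_3$. So by the $t=2$ case, $C_2+v_1$ has $k$ edge-disjoint $(S_2,S_3)$-subgraphs. Copy their labels at $v_1$ onto $E(v_2)$ in $C_1+v_2$. This gives a subpartition $P_1,\dots,P_k$ of $E(v_2)$ in which some parts may be empty and some edges may be unlabeled.

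\textbf{Padding $v_2$ so Theorem~\ref{thm:wwextthem} applies.} Fix $s\in S_1$ and set $\lambda k=\lceil 8.5k\rceil\le 9k$. Add $\lambda k-d>2k$ new parallel \emph{fake} edges between $v_2$ and $s$, and let $S=S_1\cup\{v_2\}$. Then $d(v_2)=\lambda k$, and I claim $S$ is $\lambda k$-edge-connected. Take any cut splitting $S$:
\begin{itemize}
\item If it splits $S_1$, its size is at least $9k$, since $S_1$ was $9k$-edge-connected in $G$ and contraction and adding edges preserve this.
\item Otherwise it separates $v_2$ from all of $S_1$. It then contains all fake edges, and its real edges form a cut of $G$ separating $S_1$ from $S_2\cup S_3$, which has size at least $d$ by minimality. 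The total is at least $d+(\lambda k-d)=\lambda k$.
\end{itemize}

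\textbf{Fixing up the labels.} For each empty $P_i$, give a label $i$ to one fake edge. This uses at most $k$ fake edges and leaves at least $k$ unlabeled fake edges as $E_0$. Since $\lambda\ge 6.5$, Theorem~\ref{thm:wwextthem} now yields $k$ edge-disjoint $S$-subgraphs $H_1',\dots,H_k'$ extending $v_2$. Delete the fake edges from them. Each $H_i'$ still contains the real part $P_i$, and $H_i'-v_2$ is unchanged, so it is connected and contains $S_1$. This is exactly the property used in the proof of Lemma~\ref{lem:forestglue}: that proof only uses $P_i\subseteq E(H_i')$ and the connectivity of $H_i'-v_2$. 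So Lemma~\ref{lem:forestglue} glues the two packings into $k$ edge-disjoint $(S_1,S_2,S_3)$-subgraphs, and taking minimal subgraphs gives forests.

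\textbf{Main obstacle.} The delicate step is the padding. We need three things at once:
\begin{itemize}
\item $d(v_2)=\lambda k$ exactly with $\lambda\ge 6.5$;
\item $S_1\cup\{v_2\}$ is $\lambda k$-edge-connected;
\item enough spare fake edges, namely at least $k$ to fill empty labels plus $k$ for $E_0$.
\end{itemize}
The minimality of the cut gives the connectivity. The requirement $\lambda k-d\ge 2k$ together with $d<6.5k$ forces $\lambda\ge 8.5$, and $\lambda k\le 9k$ is what keeps $S_1$'s own connectivity sufficient; this is where the constant $9$ comes from. A further point to check is that edges which are labeled but then discarded as fake never clash with the other side, which holds because fake edges do not exist in $G$.
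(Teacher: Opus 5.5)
Your proposal is correct and follows the paper's proof. Both split along a minimum cut separating the singleton group $S_1$ from the other groups, pack recursively on $C_2+v_1$, pad $v_2$ with fake edges to some $s\in S_1$ so that Theorem~\ref{thm:wwextthem} applies, and glue via Lemma~\ref{lem:forestglue}. The only differences are cosmetic: you use the threshold $6.5k$ and pad to $\lceil 8.5k\rceil$ instead of $7k$ and $9k$ (which incidentally shows $\lceil 8.5k\rceil$-edge-connectivity already suffices), and you minimize directly over cuts separating $S_1$ from $S_2\cup S_3$; your fake-edge count and your connectivity claim for $S_1\cup\{v_2\}$ both check out.
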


\begin{proof}
    We will first prove this statement when there are two groups, $S_1$ and $S_2$. Note that if $S_1 \cup S_2$ are $7k$-edge connected, then there exist $k$ edge disjoint $(S_1 \cup S_2)$-subgraphs by Theorem \ref{thm:wwpacking}, giving us our desired result. Otherwise, take the minimum edge cut $G=(C_1,C_2)$ of $S_1 \cup S_2$. Note that by minimality, less than $7k$ edges are in the cut, and hence $S_1$ and $S_2$ must both remain intact but be on different sides of the cut. Without loss of generality let $S_1 \subseteq C_1$ and $S_2 \subseteq C_2$. Contract $C_1$ into a vertex $v_1$ and $C_2$ into a vertex $v_2$. We can find $k$ edge-disjoint $S_2$-subgraphs on $C_2+v_1$ as $S_2$ is still $7k$-edge-connected on $C_2+v_1$. Let $E_1,E_2,...,E_k$ be a subpartition of $E(v_2)$ on $C_1+v_2$, where $E_i$ corresponds to the edges used by the $i$th subgraph on $C_2+v_1$. Note that $E_i$ may be empty. 

    Take any arbitrary $s \in S_1$, and add $9k-d(v_2)$ fake multi-edges from $v_2$ to $s$. We claim that $S_1 \cup v_2$ is now $9k$ connected. Indeed, assume there exists an edge cut of size $< 9k$. Note that $S_1$ will still be connected after such an edge cut, hence if any of the fake edges still exist $v_2$ will be connected to $S_1$. This means that all the fake edges must have been deleted, giving us $< d(v_2)$ edges to disconnect $v_2$ from $S_1$. However, note that if such a cut exists, this would imply the existence of a cut strictly less than $d(v_2)$ separating $C_1$ and $C_2$, contradicting the fact that we took the minimum cut. Since $d(v_2) < 7k$, we added at least $2k$ fake edges. For any $1 \leq i \leq k$, if $E_i$ is empty, take a fake edge and add it to $E_i$, and let $E_0 = E(v_2) - \bigcup_{i=1}^k E_i$, which denotes the unlabeled edges incident to $v_2$. Note that $|E_0| \geq k$, and the degree of $v_2$ is now exactly $9k$. Applying the extension theorem gives us $k$ edge disjoint subgraphs $H_1,H_2,...,H_k$ such that $S_1 \cup N_i$, where $N_i$ is the neighborhood given by $E_i$, is connected for each $H_i-\{v\}$. We can hence delete the fake edges and preserve the connectivity of $S_1 \cup N_i$, and hence $H_1,H_2,...,H_k$ still extend the original subpartition at $v_1$. Then by Lemma \ref{lem:forestglue}, we have the desired packing of $k$ $(S_1,S_2)$-subgraphs in $G$.

    Now assume there are three groups $S_1,S_2,S_3$, each of which are $9k$-connected. If $S_1 \cup S_2 \cup S_3$ are $7k$-connected, we are done. Otherwise, take a minimum cut $G=(C_1,C_2)$ that separates $S_1 \cup S_2 \cup S_3$, without loss of generality let $S_1 \subseteq C_1$ and $S_2 \cup S_3 \subseteq C_2$. Contract $C_1$ into $v_1$ and $C_2$ into $v_2$. Note that we can pack $k$ $(S_2,S_3)$-forests on $C_2+v_1$, and repeating a similar argument as above on $C_1+v_2$ allows us to extend that packing to $S_1$, giving us our desired result. 

\end{proof}

We can extend the ideas from the previous lemma to prove that $(2t+3)k + Ct^2$ connectivity suffices to pack $k$ Steiner trees when there are $t \geq 4$ groups for some universal constant $C$. This serves as a slightly better bound for the Steiner Forest packing problem when $t < 15$. 

\begin{lemma}
    There exists an increasing function $f(t)=O(t^2)$, such that for any integer $t \geq 3$ and $k \in \mathbb{N}$, given $S_1,S_2,...,S_t$ vertex-disjoint subsets of a graph $G$ where each $S_i$ is $(2t+3)k+f(t)$-edge-connected, there exists a packing of $k$ edge-disjoint $(S_1,S_2,...,S_t)$-subgraphs.
\end{lemma}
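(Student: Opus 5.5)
We proceed by induction on $t$, following the proof of the previous lemma for $t=2,3$. Write $\lambda_t=(2t+3)k+f(t)$; note $\lambda_t\ge 9k$ for $t\ge 3$, and the base case $t=3$ is the previous lemma. For the inductive step, let $U=S_1\cup\cdots\cup S_t$. If $U$ is $7k$-edge-connected, Theorem~\ref{thm:wwpacking} gives $k$ edge-disjoint $U$-subgraphs, and each is already an $(S_1,\dots,S_t)$-subgraph. Otherwise every cut separating $U$ has fewer than $7k$ edges, so no $S_i$ is split, since $\lambda_t>7k$. Among all edge cuts $\delta(C)$ of size $<7k$ with both sides meeting $U$, choose $C$ \emph{inclusion-minimal}. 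Let $S_1,\dots,S_a\subseteq C$ and $S_{a+1},\dots,S_t\subseteq V\setminus C$. Contract to obtain $C+v_2$ and $(V\setminus C)+v_1$. Contraction does not decrease the connectivity of any group, so if $a\ge 1$ and $t-a<t$, the induction hypothesis (using that $f$ is increasing) packs $k$ forests for $S_{a+1},\dots,S_t$ on $(V\setminus C)+v_1$. This induces a subpartition $E_1,\dots,E_k$ at $v_2$.

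The core step is to extend this subpartition inside $C+v_2$, for all groups in $C$ simultaneously, and then glue by (the obvious multi-group version of) Lemma~\ref{lem:forestglue}. The key claim is that minimality of $C$ makes $U_C=S_1\cup\cdots\cup S_a$ essentially $7k$-edge-connected in $C+v_2$. Suppose a cut of $C+v_2$ of size $<7k$ separates $U_C$. Either it avoids $v_2$ on one side, giving a proper subset $C'\subsetneq C$ with $d_G(C')<7k$ containing some group, which contradicts minimality; or it crosses $v_2$. In the crossing case we uncross with $\delta(C)$ using submodularity $d(X\cap C)+d(X\cup C)\le d(X)+d(C)$. I expect this uncrossing to need the slack that the groups are $\lambda_t$-connected, which prevents $X\cup C$ from being a cheap cut. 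This accounting is where I expect the per-group $2k$ cost, i.e.\ the $2tk$ term, to enter. Roughly, each group that lies on the wrong side of a crossing cut must pay about $2k$ extra edges, and $f(t)$ absorbs additive losses of order $t$ per level over $t$ levels, giving $O(t^2)$.

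Once $U_C\cup\{v_2\}$ is shown to be highly connected, we proceed exactly as in the previous lemma. Since $d(v_2)<7k$, we add $\lambda-d(v_2)\ge 2k$ fake edges from $v_2$ to a vertex of $U_C$, with $\lambda\ge 9k$. We place one fake edge into each empty $E_i$, and we keep at least $k$ further fake edges as $E_0$. The set $U_C\cup\{v_2\}$ is then $\lambda$-edge-connected by the minimum-cut argument, and $d(v_2)=\lambda$, so Theorem~\ref{thm:wwextthem} yields $k$ edge-disjoint $U_C$-subgraphs extending $v_2$. Deleting the fake edges preserves the extension property, since each $H_i-v_2$ remains connected and contains $U_C$. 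These trees are in particular $(S_1,\dots,S_a)$-subgraphs, and gluing produces the required $k$ edge-disjoint $(S_1,\dots,S_t)$-subgraphs.

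The main obstacle is the uncrossing step. We must show that within $C+v_2$ no cut of size below $7k$ separates $U_C$, or, if one exists, that we can recurse on it instead. The plan is to handle this by strengthening the choice of $C$: take $C$ minimal among sets whose boundary is below a threshold that decreases by $2k$ with each group removed. This is the source of the $(2t+3)k$ bound, and it explains why the induction hypothesis must be applied with connectivity $\lambda_{t-1}=\lambda_t-2k-(f(t)-f(t-1))$.
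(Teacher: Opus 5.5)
Your proposal has a genuine gap at the step ``the set $U_C\cup\{v_2\}$ is then $\lambda$-edge-connected by the minimum-cut argument'' with $\lambda\ge 9k$.

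In the $t=2,3$ proof that argument works only because the extended side contains a \emph{single} group, which is already $9k$-connected on its own. There, any cut below $9k$ leaves that group intact and must therefore sever all the fake edges. Here $C$ may contain several groups, and inclusion-minimality with threshold $7k$ certifies only that $U_C$ is $7k$-connected in $C+v_2$. It does certify this outright: every cut of $C+v_2$ has $v_2$ on one side and some $D\subsetneq C$ on the other, with $d_{C+v_2}(D)=d_G(D)$. So the ``crossing'' case you flag as the main obstacle never arises.

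The real obstacle is the distance between $7k$ and $\lambda$. Take $t=4$, let each $S_i$ be a huge clique, and join them in the cycle $S_1,S_2,S_4,S_3$ by bundles of $4k,3k,4k,3k$ parallel edges.
\begin{itemize}
\item Every group has $d(S_i)=7k$, so the only inclusion-minimal choices are $C=S_1\cup S_2$ or $C=S_3\cup S_4$, each with $d(C)=6k$.
\item In $C+v_2$, each of the two groups is cut off by exactly $7k$ real edges.
\item Raising both cuts to $\lambda$ needs $2(\lambda-7k)$ fake edges, but only $\lambda-6k$ are available. This is impossible for $\lambda>8k$, however the fake edges are attached.
\end{itemize}
Nor can you fall back to $\lambda=7k$. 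Minimality certifies nothing beyond $7k$, yet when every edge at $v_2$ is labelled you need $\lambda\ge d(v_2)+k$ to get $|E_0|\ge k$, and $d(v_2)$ can be as large as $7k-1$.

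The paper resolves this tension as follows.
\begin{itemize}
\item It takes $C$ of minimum size among cuts of size at most $(2t+3)k$ containing a terminal. Minimality then directly makes all terminals of $C$, together with $v_2$ padded to degree exactly $(2t+3)k$, a $(2t+3)k$-connected set.
\item Since $d(v_2)$ may now be close to $(2t+3)k$, fake edges cannot supply $E_0$. Instead it packs $k'=k+\lceil 2k/(2t+1)\rceil$ forests on $(V\setminus C)+v_1$, sorts the label classes at $v_2$ by size, and erases the $\lceil 2k/(2t+1)\rceil$ largest.
\item This leaves at least a $2/(2t+3)$ fraction of the edges at $v_2$, i.e.\ at least $2k$, unlabelled. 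The West--Wu Extension Theorem then applies with $\lambda=2t+3$.
\item The price is asking the induction hypothesis for $k'$ forests instead of $k$. That is exactly the inequality $(2t'+3)k'+f(t')\le(2t+3)k+f(t)$ for $t'\le t-1$ with $f(t)=f(t-1)+2t+1$. This over-packing, not an uncrossing argument, is where the $2k$ per group and the $O(t^2)$ term come from.
\end{itemize}

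Your closing suggestion does not supply the missing idea. Read literally, a threshold that drops by $2k$ per removed group only guarantees $d(D)>d(C)-2k$ for $D\subsetneq C$ with one group fewer. Padding $v_2$ with $2k$ fake edges needs $d(D)\ge d(C)+2k$ for such $D$, which is the opposite direction. A potential that \emph{increases} by $2k$ per group would point the right way, for instance minimizing $d(C)$ plus $2k$ times the number of groups in $C$. That, however, is a different argument which you have not carried out.
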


\begin{proof}
    We will prove this inductively; note that when $t=3$, $(2t+3)k=9k$, and hence a $f(3)=0$ makes the statement hold. Define $f(t) = f(t-1)+2t+1$ for $t \geq 4$. Let us take $S_1,S_2,...,S_t$, and suppose that given $t' < t$ groups in a graph all of which are $(2t'+3)k+f(t')$-edge-connected, we can pack $k$ $(S_1,S_2,...,S_t')$-subgraphs in the graph.

    If $S_1 \cup S_2...\cup S_t$ is $7k$-connected, we are done. Otherwise, take a cut $G=(C_1,C_2)$ of size $\leq (2t+3)k$ such that $|V(C_1)|$ is minimized and at least one terminal vertex exists in $C_1$. Let $t'$ be the number of vertex subsets from $S_1,S_2..., S_t$ in $C_2$, and without loss of generality let them be $S_1,S_2...,S_t'$. Contract $C_1$ and $C_2$ into $v_1$ and $v_2$ respectively. We wish to pack $k'=k+\lceil \frac{2}{2t+1}k \rceil$ edge disjoint $(S_1 \cup S_2...\cup S_t')$-subgraphs on $C_2+v_1$, and note that $k' \leq (1+\frac{2}{2t+1})k+1$. If $t' \geq 3$, to prove that we can pack $k'$ edge disjoint $(S_1 \cup S_2...\cup S_t')$-subgraphs, we need to show that $(2t'+3)k'+f(t') \leq (2t+3)k+f(t)$, since each $S_i$ is at least $(2t+3)k+f(t)$ edge connected on $C_2+v_1$.  We know that $t' \leq t-1$, hence $(2t'+3)k'+f(t') \leq (2(t-1)+3)((1+\frac{2}{2t+1})k+1)+f(t-1) = (2t+3)k+f(t)$, meaning that such a packing is possible. Otherwise, if $t' \leq 2$, we just need that each $S_i$ is $9k'$-edge-connected, but by our previous calculation $(2t+3)k+f(t)$ is at least $9k' = (2(3)+3)k'$ for any $t \geq 4$, as desired. 

    We now prove that all terminal vertices in $C_1+v_2$, that is, all vertices in $(S_1 \cup S_2... \cup S_t) \cap C_1$, are $(2t+3)k$-edge-connected. Indeed, suppose for contradiction there is an edge cut of size smaller than $(2t+3)k$ separating $C_1+v_2=(D_1,D_2)$, where $D_1$ and $D_2$ both contain terminal vertices, without loss of generality we assume $v_2 \in D_2$. Then this edge cut separates $D_1$ from $G-D_1$ as well, contradicting the fact that $C_1$ was minimized. Hence, we can treat all terminal vertices as a single group, which we will call $S'$. Take a vertex $s \in S'$, and add fake multiedges until the degree of $v_2$ is exactly $(2t+3)k$. We claim that $S' \cup v_2$ is $(2t+3)k$-edge-connected. 
    Suppose we were able to find such a cut $(D_1,D_2)$ of size smaller than $(2t+3)k$ where $S' \subseteq D_1$ and $v_2 \in D_2$. Note that $D_2$ cannot consist of only $v_2$, as there were exactly $(2t+3)k$ edges (including fake ones) incident to $v_2$, and all of them must have been deleted for this to happen. Hence, this means $D_1$ is strictly smaller than $C_1$, giving us an edge cut of at most $(2t+3)k$ smaller than $C_1$, a contradiction. 
    
    Let $E_0,E_1,...,E_{k+\lceil 2k/(2t+1) \rceil}$ partition $E(v_2)$, where for $1 \leq i \leq k+\lceil 2k/(2t+1) \rceil$, $E_i$ corresponds to the edges used for $i$th subgraph on $C_2+v_1$, and $E_0 = E(v_2) - \bigcup_{i=1}^{k+\lceil 2k/(2t+1) \rceil} E_i$. Re-index $E_1,...,E_{k+\lceil 2k/(2t+1) \rceil}$ so that $|E_i| \leq |E_j|$ if $i \leq j$, swapping the names of the respective subgraphs in $C_2+v_1$ as well. We now completely erase any labels $i > k$ from the graph, only paying attention to the first $k$ Steiner forests we packed on $C_2+v_1$. Note that consequently, $E_0$ gets updated with $E'_0 \leftarrow E_0 + \bigcup_{i=k+1}^{k+\lceil 2k/(2t+1) \rceil} E_i$, and $k+\frac{2k}{2t+1} = \frac{2tk+k}{2t+1}+\frac{2k}{2t+1} = \frac{(2t+3)k}{2t+1}$, so a proportion of at least $\frac{2}{2t+3}$ of the edges at $v_2$ are now a part of $E_0$. We then note that $|E_0| \geq (\frac{2}{2t+3})(3+2t)k = 2k$. For any empty $E_i$, take an edge from $E_0$ and move it to $E_i$. Now, note that the degree of $v_2$ is exactly $(3+2t)k$, $|E_0| \geq k$, and $E_1,...,E_k$ are nonempty, hence we can apply the West-Wu Extension Theorem to get $k$ edge-disjoint edge $S'$-subgraphs that extend $v_2$. By the properties of an extension, we can delete all the fake edges and still maintain that these subgraphs connect $S'$ and extend the partition at $v_2$ restricted to the non-fake edges. Then, by Lemma \ref{lem:forestglue}, we gain $k$ edge-disjoint $(S_1,S_2,...,S_t)$-subgraphs by the edge-union operation, as desired. 
\end{proof}

\section{Statement and Proof of Extension Theorem}

\begin{theorem}[Extension Theorem]
    Suppose there is a multigraph $G$ that may admit self loops, and $S, R \subseteq V(G)$ with $S \cap R = \emptyset$, $|S| \geq 2$, $Q \geq 32$ where $Q$ is an integer, and there exists a vertex $v \in S$ incident to no self loops with $d(v)=Qk$ and a balanced edge subpartition $P_k(v)$. If $S$ is $Qk$-edge-connected in $G$ and every vertex in $R$ is incident to at least $(Q-2)k$ edges, then there are $k$ edge-disjoint $S$-subgraphs that balance $S \cup R$ and extend $P_k(v)$.
\end{theorem}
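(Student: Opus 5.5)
We prove the theorem by induction on $|V(G)|+|E(G)|$, following the scheme of Lau's proof of his Steiner tree extension theorem and the later refinements of West--Wu and DeVos--McDonald--Pivotto. Throughout, the induction hypothesis is the statement itself applied to a smaller graph, with the same $k$ and $Q$, and with a possibly new root vertex and subpartition. The reductions below handle the things that are not shrinking steps, the base case and the main inductive cases separately.

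\textbf{Reductions.} First we delete every edge not incident to $S\cup R$ and not needed for $Qk$-edge-connectivity of $S$; we may also delete any vertex outside $S\cup R$ of degree at most $2$, splitting it off first if it has degree $2$. The key difference from Lau's argument is that we never delete an edge incident to a vertex $r\in R$ when this would push $d(r)$ below $(Q-2)k$. Instead we keep such edges, or replace a non-useful edge at $r$ by a self loop, which is why self loops are allowed in the statement. A self loop at $r$ contributes $2$ to $d(r)$ and can be assigned a label freely, so it helps with balancing $r$ while not affecting connectivity. Next, if some $s\in S\setminus\{v\}$ has $d(s)>Qk$ and there is a non-essential edge at $s$, we delete it. Vertices in $S$ of degree exactly $Qk$ with loops are handled by the same trick.

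\textbf{Tight cuts.} Suppose there is an edge cut $(C_1,C_2)$ with $v\in C_1$, of size exactly $Qk$, where both sides contain a vertex of $S$ and $C_2$ is not a single vertex. Contract $C_2$ to a vertex $u$ and apply induction on $C_1+u$, treating $u$ as an element of $S$ with $d(u)=Qk$; this gives a packing that extends $P_k(v)$ and balances $u$. Copy the induced subpartition at $u$ to the contracted vertex $u'$ of $C_2+u'$. This subpartition is balanced, so we can apply induction there with $u'$ as the new root. The two packings combine by the edge-union lemma of the preliminaries. We must also check that balancing is preserved at every vertex of $S\cup R$ on both sides, which holds because the degree of each such vertex is unchanged in its own side. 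If the only $Qk$-cuts are trivial, or cuts separating vertices of $R$, we move to the local step.

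\textbf{Local step and main obstacle.} In the remaining case we work at $v$. Take an edge $vw$ with a label $i$, or an unlabeled edge, and split off a suitable pair $vw$, $vw'$ (Mader's splitting lemma) or contract $vw$ into $v$. Then remove $v$ once all of its edges are absorbed, and push the labels onto neighbours using a DeVos--McDonald--Pivotto-style counting argument. This counting uses the slack $Q\ge 32$ to guarantee that every vertex of $S\cup R$ keeps enough unlabeled edges to be completed to a balanced partition. Vertices of $R$ have only $(Q-2)k$ degree, and this $2k$ gap is exactly what must absorb the at most $2k$ labeled edges a vertex can receive in one step. The main obstacle is this balancing bookkeeping at vertices of $R$ after deletions and splittings, since that is precisely where Lau's proof fails. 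Our plan is to maintain the invariant ``$d(r)\ge (Q-2)k$ including loops'' across every reduction, and to check that the induced subpartition at each $r$ uses at most $d(r)/2$ labels with each part completable to size $\ge 2$.
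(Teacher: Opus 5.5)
\textbf{Verdict: there is a genuine gap.} Your reductions and tight-cut step resemble the paper's Part 1. But the ``local step'', where the whole difficulty lies, is not an argument, and the tools you name for it do not work as stated.

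\textbf{The local step.} Mader's lemma only preserves local edge-connectivity between vertices \emph{other than} the split vertex. Splitting off $vw,vw'$ at the terminal $v$ therefore leaves $d(v)=Qk-2$, and $S$ is no longer $Qk$-edge-connected. Contracting $vw$ likewise destroys $d(v)=Qk$. In both cases you lose the hypotheses needed to invoke induction, and nothing ensures that $H_i-v$ stays connected with $P_i\subseteq E(H_i)$. ``Pushing labels onto neighbours by a DeVos--McDonald--Pivotto-style count'' does not say what is packed where. Above all, you never confront the real obstruction: once $v$ and its non-terminal neighbours $W$ are removed, $S\cup R-v$ may have a small cut.

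\textbf{How the paper handles it.} It first forces much more structure:
\begin{itemize}
\item non-terminals have degree $3$ and $3$ distinct neighbours, via Mader at non-terminals only;
\item $V(G)-S$ is independent;
\item every $r\in R$ has fewer than $2k$ loops. This uses the tie-break minimizing $|R|$: a vertex with $\ge 2k$ loops is balanced for free, since its loops can be left unused.
\end{itemize}
Together these give that $S\cup R$ is $(Q-4)k$-edge-connected. The paper then puts $R$ into the terminal set, deletes $v\cup W$, and splits into two cases.
\begin{itemize}
\item If $S\cup R-v$ is $6k$-edge-connected in $G'$, it packs $2k$ $(S\cup R-v)$-subgraphs by Lemma~\ref{lem:treepacking} and pairs them. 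Every vertex of $S\cup R$ then has degree at least $2$ in each $H_i$, and \emph{that} is what gives balancing.
\item If the minimum cut $Z$ has $|Z|<6k$, it shows $G'-Z$ has exactly two sides, each $(Q/2-10)k$-connected by the common-paths argument. It then glues the two packings using the at least $Qk-2|Z|$ crossing couples and reserve edges of $Z$.
\end{itemize}
The case $|S|=2$ also needs its own path-and-ear argument, which you do not supply.

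\textbf{The $(Q-2)k$ threshold.} Your reading of it is off. Balancing only requires that the unlabeled edges at a vertex can fill every part to size $2$, so $2k$ unused edges always suffice. The $2k$ slack is not for ``absorbing labels received in one step''. In the paper it ensures that, after discarding fewer than $2k$ loops, each $r$ still has $(Q-4)k$ real edges into $S$.

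\textbf{The tight-cut step.} Your justification ``the degree of each vertex is unchanged in its own side'' is not the right reason balancing survives. The second application at $u'$ may put previously unlabeled cut edges into some $H_i$. After the edge-union, those new labels land on vertices of $C_1$ and can unbalance them. You need to pass to a strict extension: drop from $H_i$ every edge at $u'$ not in $P_i$. This is harmless because every part at $u'$ is nonempty, as in Lemma~\ref{lem:strict-extn}.
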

Suppose, for contradiction, that this theorem is false. We will pick a counterexample $G$ with this criteria:

\begin{enumerate}
    \item Pick the counterexample where the sum of the vertices and edges that are not self loops is minimized. 
    \item If there are multiple such graphs, tiebreak by picking the counterexample with $|R|$ minimized. 
    \item If there are still multiple such graphs, tiebreak by picking the counterexample with $|S|$ maximized. If tiebreaking is still needed then choose arbitrarily. 
\end{enumerate}

We may assume without loss of generality, that each label $1,2,...,k$ appears at least twice on $E(v)$, because we can assign labels to the otherwise unlabeled edges if need be. Since we will be preserving the degree of vertices in $R$ by adding self loops after the deletion of edges, each self loop only contributes 1 instead of 2 to the degree of the vertex.

Our proof of the Extension Theorem will be an inductive or recursive proof formulated as a proof by contradiction. It will be very similar to Lau's original proof of the Extension Theorem, but as both this and Lau's proof are inductive in nature, due to the differences in how $R$ is defined, very few parts of Lau's original proof can actually be used as a black box. However, many parts of our proof will still be very similar to Lau's original proof, and so we will indicate before the proof of those lemmas what part of Lau's proof it resembles. For the sake of this proof being self-contained, we will still fully prove those lemmas, including the few parts of Lau's original proof that can be black-boxed. 

As the proof of the Extension Theorem is rather long, we will describe an outline of the proof before jumping in. The proof of the Extension Theorem is split into three parts. Part 1 starts by recursing down on $G$, simplifying $G$ until all vertices in $V(G)-S$ form an independent set, each vertex in $V(G)-S-R$ has degree 3 and exactly 3 neighbors, and $S \cup R$ is $(Q-4)k$-edge-connected. 

The next two parts concern constructing the $S-$subgraphs that give an extension, which will take advantage of the established structure in Part 1. We will handle the case where $|S|=2$ by using a collection of edge-disjoint paths between the two vertices in $S$. Otherwise, we will attempt to delete $v$ and its neighbors not in $S \cup R$ to obtain a new graph $G'$, pack $(S \cup R-v)$-subgraphs on $G'$, and add $v$ and its neighbors back to complete the extension. If we let $Z$ be the minimum cut of $S \cup R-v$ in $G'$, we will find that $|Z| \geq 6k$ is sufficient for this strategy to succeed. These two cases are covered in Part 2 of the proof.  

Part 3 concerns the case where $|Z| < 6k$, which means that the edge-connectivity of $S \cup R-v$ is too low after the deletion of $v$ and its neighbors for the strategy in Part 2 to work. We will then delete a small number of edges to disconnect $S \cup R-v$ in $G'$ and prove that $S \cup R-v$ is now highly edge-connected in each component. We then pack $S \cup R-v$-subgraphs in each component and use various tools to merge the packing together, which will complete the proof for the Extension Theorem. 

\subsection{Extension Theorem Proof Part 1: Recursing down on $G$}

We begin our recursion, where we will start by showing we can eliminate cut vertices outside of $S \cup R$ from the graph. 

\begin{lemma}
    No vertex in $V(G)-S-R$ is a cut vertex. 
\end{lemma}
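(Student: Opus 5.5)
Suppose, for contradiction, that some $u \in V(G)-S-R$ is a cut vertex of the minimal counterexample $G$. Write $G-u$ as a disjoint union of pieces, and let $D_1,\dots,D_m$ ($m\ge 2$) be the components of $G-u$. The plan is to show that all of $S \cup R$ except possibly one part is avoidable, split $G$ at $u$, and either delete a piece or recurse on smaller graphs, contradicting the minimality in criterion 1.

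\emph{Step 1: $S$ lies in a single component together with $u$.} Since $S$ is $Qk$-edge-connected and $u\notin S$, every pair of vertices of $S$ is joined by paths, and any path between two vertices in different components of $G-u$ passes through $u$. Since $v\in S$, let $D_1$ be the component containing $v$. If some $s\in S$ lies in $D_j$ with $j\ne 1$, I would not get an immediate contradiction. Instead I would treat $u$ as a "hub": split $G$ into $G_1=G[D_1\cup\{u\}]$ and $G_j=G[(V-D_1)]$, both sharing $u$. The connectivity of $S\cap D_1$ together with $u$ in $G_1$ is $Qk$, since a cut in $G_1$ lifts to a cut in $G$. However, $u\notin S$, so it is cleaner to handle it as follows. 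If $S$ meets at least two components, add $u$ to $S$. The set $S\cup\{u\}$ is still $Qk$-edge-connected, because $u$ separates vertices of $S$ and hence has $Qk$ edge-disjoint paths to each side. This changes the instance to one with the same edges and vertices but larger $|S|$ and $|R|$ no larger, contradicting tiebreak 3 \emph{provided} the solution for the enlarged instance is also one for the original. It is, because balancing $u$ is not required of an $S$-subgraph beyond being connected, and extra balancing constraints only help. The one thing I must double-check is that $u$ is incident to no self loops or that this does not matter. It does not matter, since $u\ne v$.

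\emph{Step 2: $S\subseteq D_1$, so other components contain no vertex of $S$.} Let $D_2$ be another component. If $D_2\cap R=\emptyset$, delete $D_2$ and all of its edges. $S$-connectivity is preserved, because no $S$-path needs to enter $D_2$: it would enter and leave through $u$, so it can be shortcut. The degrees of vertices in $R$ are unchanged, $v$ is untouched, and the new graph is strictly smaller. A solution there is a solution in $G$, contradicting criterion 1. If $D_2\cap R\ne\emptyset$, I would apply minimality separately:
\begin{itemize}
\item[(a)] to $G_1=G-D_2$ (same $S$, $v$, $R\setminus D_2$), obtaining $H_1,\dots,H_k$ balancing $S\cup R$ there;
\item[(b)] to the piece $G[D_2\cup\{u\}]$, where the balancing requirement for $R\cap D_2$ must be met.
\end{itemize}
For (b), I would instead contract $G_1$ to a single vertex identified with $u$ and add $u$ to a terminal set. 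This needs $u$ to have degree exactly $Qk$ with a balanced partition induced by the $H_i$, which is not generally available.

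\emph{Main obstacle.} Balancing the vertices of $R$ inside a component not containing $S$ is the main obstacle. There, an $S$-subgraph has no reason to enter the component at all. The fix I would try is the observation that balancing a vertex $r$ only requires that the labels on $E(r)$ extend to a balanced partition. Leaving every edge at $r$ unlabeled is trivially balanced, because $d(r)\ge (Q-2)k\ge 2k$ and we can distribute the edges two per label. So $H_i$ need not use edges of $D_2$ at all, and the deletion argument of Step 2 works regardless of $R\cap D_2$. I must confirm that the deletion does not lower the degree of $R$-vertices inside $D_1$ or at $u$. It does not, because $u\notin R$ and edges of $D_2$ are not incident to $D_1$. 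With that, every case reduces to a strictly smaller counterexample or a tiebreak violation, proving that no vertex of $V(G)-S-R$ is a cut vertex.
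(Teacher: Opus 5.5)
Your argument is correct and is essentially the paper's. If $S$ meets two or more components of $G-u$, you add $u$ to $S$ and invoke the third tiebreak. Otherwise you discard the components avoiding $S$, which leaves the same graph as the paper's $w_1+C_1$, and apply minimality of the graph with $R$ replaced by $R\setminus D_2$; the $R$-vertices in the discarded part are balanced because all of their at least $(Q-2)k\ge 2k$ incident edges stay unlabeled. Your abandoned detour in part (b), contracting $G_1$ into $u$, is unnecessary, since the deletion argument already covers the case $D_2\cap R\neq\emptyset$.
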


\begin{proof}
    Suppose, for contradiction, that there is such a cut vertex, and denote that vertex $w$. Suppose that the deletion of $w$ splits $G$ into components $C_1, C_2,...,C_s$. Add a copy of $w$ to each component, denoted $w_1,w_2,...,w_s$, where the edges between $w_i$ and $C_i$ are the edges between $w$ and $C_i$ in $G$. If exactly one of these components contains vertices in $S$, let that component be $C_1$ without loss of generality. We then inductively apply the extension theorem on $w_1+C_1$ - which we may do since $w_1+C_1$ is strictly smaller than $G$. Note that this automatically gives us $k$ edge disjoint $S$ subgraphs on $G$ that balance $S \cup R$ and extend $v$, and note the vertices of $R$ which are not in $C_1$ are balanced since none of their incident edges were used in the packing. 

    Now, we handle the case where more than one of these components contains vertices in $S$. We claim that $S \cup w$ is $Qk$-edge-connected. Suppose not. Then there exists a vertex $s \in S$ which is separated from $w$ with a cut of less than $Qk$, and without loss of generality let $s \in C_1$. But there also exists some vertex $s' \in (S \cap (V(G)-C_1))$, and hence $s$ needs to traverse through $w$ to reach $s'$. If we can separate $s$ and $w$ with an edge cut of less than $Qk$, this will also separate $s$ and $s'$, meaning that $S$ is not $Qk$-edge-connected, a contradiction. Create $S'$ by adding $w$ to $S$, and note that $S'$ is $Qk$-edge-connected, $v \in S'$ still has a balanced edge subpartition, and every vertex in $R$ is still incident to at least $(Q-2)k$ edges. Since $G$ did not change and the size of $R$ did not change, by the third criteria of picking our counterexample, the extension theorem will hold if we increase the size of $S$. Hence there must exist $k$ edge-disjoint $S'$-subgraphs that balance $S' \cup R$ and extend $v$. As these subgraphs connect $S$ and balance $S \cup R$, we have arrived at a contradiction, and hence there can be no cut vertex in $V(G)-S-R$.

\end{proof}

We will now show that all vertices in $V(G)-S-R$ have degree 3 and are adjacent to 3 neighbors. To do this, we use Mader's Splitting Off Lemma, which allows us to replace two edges incident to a vertex with a single edge under the right circumstances, reducing the number of edges in the graph.

\begin{theorem}[Mader's Splitting Off Lemma \cite{Mader}]
    Let $x$ be a vertex of a loopless multigraph $G$. Suppose that $x$ is not a cut vertex and that $x$ is incident with at least 4 edges and adjacent to at least 2 vertices. Then there exists edges $xy$ and $xz$ where $y \not= z$, such that the number of edge disjoint paths between $a,b \in V(G)-x$ in $G'-xy-xz+yz$ is at least the number of edge disjoint paths between $a$ and $b$ in $G$.
\end{theorem}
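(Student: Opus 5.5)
We would prove this with the classical ``dangerous set'' argument: assume every splitting at $x$ fails, and reach a contradiction by uncrossing. For $a,b\in V(G)-x$ let $\lambda(a,b)$ be the local edge-connectivity in $G$. For $X\subseteq V(G)-x$ let $d(X)$ be the number of edges leaving $X$, and set $R(X)=\max\{\lambda(a,b): a\in X,\ b\notin X,\ b\neq x\}$. By Menger, $d(X)\ge R(X)$ for all such $X$. Splitting off $xy,xz$ lowers $d(X)$ by $2$ exactly when $y,z\in X$ and $x\notin X$, and leaves every other cut value unchanged. Call $X$ \emph{dangerous} if $x\notin X$, $X\neq V(G)-x$, and $d(X)\le R(X)+1$. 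Then the pair $xy,xz$ fails to preserve all local connectivities if and only if some dangerous set contains both $y$ and $z$. So we assume that for every two distinct neighbours $y\neq z$ of $x$, some dangerous set contains both, and aim for a contradiction.

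The first key step is the uncrossing machinery. The function $R$ is symmetric and skew-supermodular: for any $X,Y$,
\[
R(X)+R(Y)\le \max\{R(X\cap Y)+R(X\cup Y),\ R(X- Y)+R(Y- X)\}.
\]
This follows from a case check on where the pairs realizing $R(X)$ and $R(Y)$ sit. The cut function $d$ is both submodular and posimodular. Combining these, if $X,Y$ are dangerous, $x\notin X\cup Y$, and $d(x,X\cap Y)$ or $d(x,X-Y)$ is suitably positive, then one of the two uncrossed pairs is again dangerous. Counting the edges from $x$ into the pieces then produces a contradiction unless the edges from $x$ distribute in a very constrained way. The standard lemma we would derive is that two \emph{maximal} dangerous sets $X,Y$ that both contain neighbours of $x$ satisfy $d(x,X\cap Y)\le 1$ (or $X\cap Y$ has no neighbour of $x$), and that $d(x,X)\le \frac12 d(x)+1$.

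Next we take a minimal family of maximal dangerous sets $X_1,\dots,X_m$ covering every edge at $x$. We get $m\ge 2$: otherwise some dangerous set would contain all of $N(x)$, hence $d(X)\ge d(x)$, and then, using that $x$ is not a cut vertex, the set $V(G)-x-X$ is nonempty and separated from $X$ only through $x$. Tracking requirements then gives $R(X)\le d(X)-d(x)+\ldots$, contradicting danger because $d(x)\ge4$. We handle $m=2$ and $m\ge3$ separately. For $m=2$, each neighbour pair across $X_1,X_2$ still lies in some dangerous set by assumption; that set must cross both $X_i$, which contradicts maximality after uncrossing. For $m\ge3$, we sum the inequalities $d(X_i)\le R(X_i)+1$ and apply the skew-supermodular uncrossing on pairs. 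Comparing with $d(x)=\sum_i d(x,X_i)$ (the $X_i$ meet $N(x)$ essentially disjointly) yields $d(x)\le 3$, contradicting $d(x)\ge4$. The hypotheses ``adjacent to at least 2 vertices'' and ``not a cut vertex'' are used to guarantee $y\neq z$ is possible and to rule out a single dangerous set swallowing $N(x)$.

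The main obstacle is the uncrossing step. Because $R$ is only skew-supermodular rather than supermodular, each crossing pair forces a branch into the intersection/union case or the difference case. We must verify in each branch that the $x$-edges give enough slack (the ``$+1$'' in the definition of dangerous) to keep the new set dangerous. Careful parity bookkeeping of $d(x,\cdot)$ across the pieces is where we expect the real work to lie, essentially following Frank's simplified proof of Mader's theorem.
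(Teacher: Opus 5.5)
The paper gives no proof of this statement; it is quoted from Mader and used as a black box, so your proposal has to stand on its own. The setup is fine: the dangerous-set criterion for a pair $xy,xz$, the skew-supermodularity of $R$, and $d(x,X)\le (d(x)+1)/2$ for dangerous $X$. That last bound already gives $m\ge 2$, without your detour through the cut-vertex hypothesis.

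The trouble starts with the ``standard lemma''. In the intersection/union branch of uncrossing you only get $\mathrm{slack}(X\cap Y)+\mathrm{slack}(X\cup Y)+2d(X-Y,Y-X)\le 2$. Maximality of $X$ only says that $X\cup Y$ has slack at least $2$. So this branch survives with $X\cup Y$ of slack exactly $2$ and $X\cap Y$ tight, which puts no bound on $d(x,X\cap Y)$.

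This really happens. Take vertices $x,a,b,c,w$ with edges $ab$, $bc$, and two parallel copies each of $aw$, $cw$, $xb$, $xw$, so $x$ satisfies all the hypotheses. Then $\lambda(b,w)=4$. The sets $\{a,b\}$ and $\{b,c\}$ are maximal dangerous sets (cut value $5$, $R=4$), and their intersection $\{b\}$ carries two edges at $x$. So that lemma cannot come from uncrossing alone. At best it needs the global no-admissible-pair assumption, and you do not say how.

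More seriously, your endgame ($m=2$ and $m\ge 3$) never uses that $x$ is not a cut vertex, which is what excludes a bridge at $x$. Without that hypothesis the statement is false. Take $K_4$ on $\{x,a,b,c\}$ plus a pendant edge $xu$. Then $d(x)=4$, $x$ has four neighbours, and no pair at $x$ is admissible.
\begin{itemize}
\item The maximal dangerous sets are exactly the six $2$-subsets of $\{a,b,c,u\}$. Each contains two edges at $x$, and any two share at most one, so every lemma you list holds.
\item $\{a,b\},\{c,u\}$ is a minimal cover with $m=2$.
\item The crossing set $\{a,c\}$ uncrosses with each of them (tight intersection, union of slack $2$) without contradicting maximality.
\end{itemize}
Hence neither ``contradicts maximality after uncrossing'' nor ``summing yields $d(x)\le 3$'' can be a valid deduction from what you have. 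The missing part is exactly where the no-bridge hypothesis and $d(x)\neq 3$ force the contradiction. That is the substance of Mader's theorem, not parity bookkeeping, and your proposal leaves it unproved.
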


\begin{lemma}
    All $ u\in V(G)-S-R$ have degree 3 and are adjacent to 3 vertices. 
\end{lemma}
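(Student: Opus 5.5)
Throughout, $u \in V(G)-S-R$ is in the minimal counterexample $G$, and by the previous lemma $u$ is not a cut vertex. The plan is to rule out each way $u$ could fail to have degree $3$ and $3$ distinct neighbours, and in each case produce a smaller instance whose solution lifts back to $G$, contradicting the choice of $G$.

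First the degenerate cases. Self loops at $u$ contribute nothing to connectivity, and $u\notin R$ needs no balancing, so we delete them. This lowers the count of non-loop edges only if loops are counted, so instead we note that a solution of the loop-deleted graph is trivially a solution of $G$; hence we may assume $u$ has no loops. If $d(u)\le 2$, then $u$ either has a single neighbour, or has two edges $uy,uz$ to distinct $y,z$. A single neighbour cannot occur for a non-cut vertex unless $u$ is a leaf; in that case we delete $u$ and its edges. In the two-neighbour case we split off $uy,uz$ into a single edge $yz$. No edge-connectivity between vertices of $S$ drops, and degrees of vertices of $R$ are unchanged (we add a loop at $y$ or $z$ if needed, since $yz$ replaces one incident edge each). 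So the hypotheses persist, and $u\ne v$ because $v\in S$. By minimality the smaller graph has a good packing, and we lift it by replacing $yz$, if used by $H_i$, with the path $y u z$. This preserves connectivity, edge-disjointness, the extension at $v$, and the labels seen at each vertex of $S\cup R$ (the edge $uy$ takes the label $yz$ had at $y$). Balancing is therefore preserved. The case where all edges at $u$ go to one neighbour $y$ is handled by deleting $u$, adding loops at $y$ if $y\in R$: $u$ is not a cut vertex and is useless for connecting $S$ except through $y$.

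The main case is $d(u)\ge 4$ with at least $2$ neighbours. Since $u$ is not a cut vertex, Mader's Splitting Off Lemma gives edges $uy,uz$ with $y\ne z$ such that splitting them off preserves all local edge-connectivities in $G-u$. In particular $S$ remains $Qk$-edge-connected, and $d(v)=Qk$ is unchanged even if $v\in\{y,z\}$: the edge $uv$ becomes $zv$, keeping its label in $P_k(v)$. Degrees of vertices of $R$ are unchanged. The new graph has one fewer edge, so by criterion~1 it has a valid packing, which lifts back exactly as in the degree-$2$ case. The remaining case is $d(u)=3$ with fewer than $3$ distinct neighbours, say edges $uy,uy,uz$. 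Here we delete one parallel $uy$ edge, leaving a degree-$2$ vertex with neighbours $y,z$. If $y\in R$ we add a loop at $y$ to preserve its degree. Any packing of the smaller graph is valid in $G$, since a deleted edge is simply unused; it does not affect balancing, because balancing only requires the used labels to extend to a balanced partition, and an unused edge can be assigned freely. If $y=v$ this would change $d(v)$, but then the parallel edge carries a label; here we instead split off $uy,uz$ directly and keep the label.

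The hardest step is checking that splitting off (or deletion) really preserves the balanced condition at vertices of $S\cup R$ and the extension property at $v$ when $v\in\{y,z\}$. Lifting must also preserve ``$v$ not a cut vertex of $H_i$''. The key point is that the lifted edge at $v$ keeps the same label and replaces $v z$ by $v u z$, so $H_i - v$ remains connected.
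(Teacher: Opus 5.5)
Your case analysis is the same as the paper's: strip loops at $u$, delete a vertex with one neighbour, split off at degree $2$, apply Mader at degree $\ge 4$, and delete a parallel edge at degree $3$ with two neighbours. The gap is in the subcases where $v$ is the repeated neighbour of $u$.

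In the one-neighbour case you delete $u$ without checking that its neighbour is not $v$. If it were, $d(v)$ would drop below $Qk$ and edges of $P_k(v)$ would disappear, so the smaller graph would not be a valid instance.

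In the $uy,uy,uz$ case with $y=v$, your proposed split-off of $uv,uz$ fails. It leaves the other copy of $uv$ as a pendant edge at $v$. Then $v$ has at most $Qk-1$ edge-disjoint paths to $S-v$ in the reduced graph, so the induction hypothesis cannot be invoked. Moreover, if that edge carries a label $j$, no valid $H_j$ can exist, since $v$ would be a cut vertex of it.

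The missing idea, which the paper uses, is that both configurations are impossible. Since $d(v)=Qk$ and $S$ is $Qk$-edge-connected with $|S|\ge 2$, any $Qk$ edge-disjoint paths from $v$ to another terminal must use every edge at $v$. An edge from $v$ into a dead end $u$ cannot start such a path. Nor can two parallel $vu$ edges both do so, because they would share the single exit $uz$.

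In the $uy,uy,uz$ case with $y\ne v$, you also never check that deleting one copy of $uy$ keeps $S$ $Qk$-edge-connected. The paper's observation covers this. A path between terminals through $u$ uses two of the three edges at $u$ and cannot use both copies of $uy$. So at most one path in an edge-disjoint family passes through $u$, and it can be taken to avoid the deleted copy.

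Two minor points. Mader's lemma as stated needs a loopless graph, which is why the paper deletes all loops before applying it. Also, no loop is needed at $y$ or $z$ when splitting off, since $yz$ replaces $uy$ and $uz$ one for one.
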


\begin{proof}

    Take a vertex $u \in V(G)-S-R$. We can assume, without loss of generality, that $u$ is not incident to any self loops, because deleting the self loops do not affect the connectivity of $S$, nor the number of edges each vertex in $R$ is incident to. 
    
    If $u$ is adjacent to only one vertex $w$, delete it, and note that the deletion does not affect the connectivity of $S$ in $G-u$. If $w$ is a vertex in $R$, for each edge between $u$ and $w$, we add a self loop to $w$ corresponding to the deleted edge so that $w$ is still incident to at least $Qk$ edges. We can then inductively find $k$ edge disjoint $S$-subgraphs that extend $v$ balance $S \cup R$ on $G-u$. If $w$ was in $R$, we replace each of the added self loops with the corresponding original edge on $G$, which does not affect the connectivity of any of those $S-$subgraphs nor the balancing. Note that it is not possible for $w = v$, because if that were true, deleting any edge incident to $v$ would make $S$ no longer $Qk$-connected, but deleting $u$ and its incident edges would not affect the connectivity of $S$.  

    If $u$ is adjacent with exactly two edges, $uy$ and $uz$ with $y \not= z$, then delete both edges. Add another edge between $y$ and $z$. We note that we did not affect the connectivity of $S$ nor the degree of any vertex in $R$, so we can inductively find $k$ disjoint $S-$subgraphs that extend $v$ and balance $S \cup R$ on $G-uy-uz+yz$. If the added edge $yz$ appears in one of the $S-$subgraphs, then delete it and replace it $uy$ and $uz$ to get a packing on $G$, which preserves both connectivity and balancing. Similarly, if $u$ is incident to at least 4 edges, since $u$ is incident to no self loops, then we can delete the self loops from this graph and apply Mader's Splitting Off Lemma at $u$, finding $uy$ and $uz$ with $y \not= z$ such that $S$ is $Qk$-connected in $G-uy-uz+yz$. We then add the deleted self loops back and follow a similar argument as before to obtain the desired result.  

    If $u$ is incident with exactly 3 edges but only has two neighbors, $y$ and $z$, then note that any path that goes through $u$ has to pass through $y$ and $z$ and use only two edges. Without loss of generality, say that $u$ and $y$ share two edges. Hence, deleting one edge between $u$ and $y$ will not affect the connectivity of $S$. We can hence delete one edge between $u$ and $y$, and if $y \in R$ add one self loop to $R$ so it is still incident to at least $(Q-2)k$ edges. Note that $y$ cannot be the same as $v$, because it would contradict $v$ having degree $Qk$ yet being in $S$. Hence we can find $k$ edge disjoint $S$-subgraphs that extend $v$ and balance $S \cup R$ on $G-uy+yy$, and we replace the added self loop with the corresponding edge between $u$ and $y$ on $G$, giving us our desired result. 
\end{proof}

We will now show that $V(G)-S$ must be an independent set, which is done so by applying the Extension Theorem recursively and using the edge union operation. We note that one of the issues with the edge union operation has is that it does not preserve balancing. Suppose we split $G = (C_1,C_2)$ and contract each component into $C_1+v_2$ and $C_2+v_1$. If we have a packing on $C_1+v_2$ and copy the labels over to $v_1$, when constructing an extension on $C_2+v_1$, an unlabeled edge $e$ at $v_1$ might gain a label, and hence retain that label during the edge union operation. The vertex in $C_1+v_2$ incident to $e$ could be balanced in the packing at $C_1+v_2$ but not in $G$, because it gained a label at $e$ which could destroy its balancing. Hence, we define a \emph{strict edge extension}.

\begin{definition}
    Given a graph $G$ and a vertex $v$ with a subpartition $P_1 \sqcup P_2 ... \sqcup P_k \subseteq E(v)$ into $k$ parts where each part is allowed to be empty, we say that a collection of $k$ edge-disjoint subgraphs $H_1, H_2,...,H_k$ is a strict extension of $v$ if $H_1,H_2,...H_k$ extend $v$ and for all $i \in [k]$, any edge incident to $v$ and in $H_i$ is in $P_i$. 
\end{definition}

Essentially, a strict extension does not allow the unlabeled edges incident to $v$ to take on a new label when $v$ is extended. This is useful for the edge union operation, as it ensures each vertex ends with the same set of labels as incident edges that it started out with, allowing balancing to be preserved. The following lemma outlines how we will change our extensions to strict extensions. 

\begin{lemma}\label{lem:strict-extn}
    Take a graph $G$, a subset $S \subseteq V(G)$ with $|S| \geq 2$, a vertex $v \in S$, and a subpartition $P_1 \sqcup P_2 ... \sqcup P_k \subseteq E(v)$ where each part \emph{is nonempty}. Suppose $G$ has $k$ edge-disjoint $S-$subgraphs that extend $v$, and let $B$ be the balanced vertices in $G$ from this extension. Then, there exists $k$ edge-disjoint $S$-subgraphs that balance $B$ and form a strict edge-extension of $G$. 
\end{lemma}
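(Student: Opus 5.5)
We will modify the given extension $H_1,\dots,H_k$ edge by edge at $v$. Call an edge $e\in E(v)$ \emph{bad} if $e\in H_i$ for some $i$ but $e\notin P_i$. Since the $P_i$ are pairwise disjoint and $P_i\subseteq E(H_i)$, a bad edge $e\in H_i$ must be an edge unlabeled by the subpartition. The plan is to delete every bad edge from the subgraph containing it, setting $H_i' = H_i - \{e \in E(v)\cap E(H_i) : e\notin P_i\}$. Deletion cannot hurt edge-disjointness. Afterwards, every edge at $v$ lying in $H_i'$ lies in $P_i$, and $P_i\subseteq E(H_i')$ still holds, so the strictness condition is met by construction.

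Next we check that each $H_i'$ is still an $S$-subgraph and that $v$ is not a cut vertex of it. Because $v$ is not a cut vertex of $H_i$ (we also need $H_i-v$ to remain connected when $v$ has degree at least two there), $H_i - v$ is connected. Removing edges incident to $v$ does not change $H_i-v$, so $H_i'-v = H_i - v$ is connected. Since $P_i$ is nonempty, $v$ keeps at least one edge in $H_i'$ into $H_i'-v$, so $H_i'$ is connected and contains $v$. Moreover $H_i'-v$ is connected, so $v$ is not a cut vertex of $H_i'$. As $S\setminus\{v\}\subseteq V(H_i-v)$ (here $|S|\ge 2$ guarantees $H_i-v$ actually contains the other terminals), $S$ is connected in $H_i'$. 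If some $H_i'$ has a dangling vertex of degree zero left behind, we discard it, but this cannot happen at vertices other than $v$'s neighbours, and such neighbours remain in $H_i-v$.

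The remaining issue, which I expect to be the only real point, is balancing on $B$. For $u\in B$ with $u\neq v$, the only change at $u$ is that some edges $uv$ lose their labels. Label sets at $u$ only shrink, with $E(u)\cap H_i'\subseteq E(u)\cap H_i$. A subpartition contained part by part in a balanced subpartition is balanced, since the same balanced partition $P'_1,\dots,P'_k$ witnesses it. Hence $u$ stays balanced. For $v$ itself, if $v\in B$, the new subpartition at $v$ is exactly $P_1,\dots,P_k$, which is contained in the old one, so it too is balanced by the same monotonicity observation. This gives $k$ edge-disjoint $S$-subgraphs that balance $B$ and form a strict extension of $v$.
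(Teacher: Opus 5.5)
Your proof is correct and follows essentially the same route as the paper. You delete from each $H_i$ the edges at $v$ not in $P_i$, note that $H_i-v$ is unchanged and so still connected, use nonemptiness of $P_i$ to reattach $v$, and observe that shrinking label sets preserves balancing; your extra checks (bad edges are unlabeled, $v$ is not a cut vertex of $H_i'$, and $v$ stays balanced if $v\in B$) only make explicit what the paper leaves implicit.
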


\begin{proof}
    Take $k$ edge-disjoint $S-$subgraphs $H_1,H_2,...,H_k$ that extend $v$ and balance $B$. Then, for each $H_i$, if any edge $e \in H_i$ is incident to $v$ but not in $P_i$, remove $e$ from $H_i$, and denote the new subgraph as $H_i'$. We claim that $H_1', H_2',...,H_k'$ are $S$-subgraphs that balance $B$ and extend $v$. We first note that $H_1',H_2',...,H_k'$ must balance $B$, since removal of labels of incident edges at a vertex cannot change a vertex from balanced to non-balanced. We then note that for each $H_i'$, we have that $H_i'-v$ is connected, because by the definition of an extension, $v$ is not a cut vertex in $H_i$. Since we stipulated that $P_i$ is nonempty, there exists some $vu \in P_i$, and since $u \in H_i'-v$, $H_i'$ connects $v$ to $S-v$ as $S-v \subseteq H_i'-v$, and hence $v$ can traverse through $u$ to reach the other vertices in $S$. Then, by construction, $H_1',H_2',...,H_k'$ are $S$-subgraphs that strictly extend $v$ and balance $B$.
\end{proof}

\begin{lemma}\label{lem:indset}
    $V(G)-S$ is an independent set. 
\end{lemma}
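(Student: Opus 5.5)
We argue by contradiction against the minimal counterexample $G$. Suppose $xy$ is an edge with $x,y \in V(G)-S$. The plan is either to remove this edge outright, or to use it to locate a small cut on which we can split $G$ and recurse, gluing the pieces back with the edge-union operation.

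\textbf{Trivial case.} First consider $G-xy$, adding a self loop at $x$ and/or $y$ for whichever of them lie in $R$. This preserves the $(Q-2)k$ incidence requirement, and $v$ is unaffected since $v\in S$.
\begin{itemize}
\item If $S$ is still $Qk$-edge-connected in $G-xy$, we have a smaller counterexample by the first criterion.
\item We then pack in the smaller graph, replace the self loops by $xy$ (or simply not use $xy$), and obtain a packing in $G$ with the same labels at every vertex, so balancing is preserved.
\end{itemize}
Hence we may assume $xy$ is critical: there is a cut $\delta(C_1)$ of size exactly $Qk$ with $xy\in\delta(C_1)$ that separates $S$. Choose sides so that $v\in C_2$. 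Each side contains a vertex of $S$, and $x$ and $y$ lie on different sides.

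\textbf{Splitting.} Contract to obtain $G_1=C_1+v_2$ and $G_2=C_2+v_1$, with $d(v_1)=d(v_2)=Qk$.
\begin{itemize}
\item In $G_2$, place $v_1$ in $S$. Then $S_2\cup v_1$ is $Qk$-edge-connected, since any small cut in $G_2$ lifts to $G$. The subpartition at $v$ is unchanged and $R\cap C_2$ keeps its degrees.
\item $G_2$ is strictly smaller than $G$, because $C_1$ contains $x$ or $y$ together with a vertex of $S$, so at least two vertices collapsed. Minimality therefore gives $k$ edge-disjoint $(S_2\cup v_1)$-subgraphs that extend $v$ and balance $(S\cup R)\cap C_2$ together with $v_1$.
\item Since $v_1$ is balanced, the induced labels at $v_2$ form a balanced subpartition. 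In $G_1$, with $v_2\in S$, we have $d(v_2)=Qk$ and $S_1\cup v_2$ is $Qk$-edge-connected, so minimality again gives $k$ subgraphs extending $v_2$ and balancing $(S\cup R)\cap C_1$.
\end{itemize}

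\textbf{Gluing.} By the edge-union lemma, the pieces combine into $k$ edge-disjoint $S$-subgraphs of $G$ that extend $v$. We still need balancing in $G$ for vertices of $C_2$ that are incident to the cut. The extension at $v_2$ could assign labels to cut edges that were unlabeled at $v_1$, which would change the label multiset at the $C_2$-endpoints. To prevent this, we first fully label $v_1$ from its balanced partition so that every part is nonempty. Lemma \ref{lem:strict-extn} then converts the extension in $G_1$ into a strict extension, so each cut edge keeps exactly the label it had in $G_2$.

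\textbf{Main obstacle.} The most delicate step is making the labels consistent. We must pass from the balanced subpartition at $v_1$ to a full balanced partition at $v_2$. Edges that were unlabeled in $G_2$ but received a label through this completion are then used in $G_1$ but not in $G_2$. They can be dropped from the $G_1$ subgraphs, because extensions do not make $v_2$ a cut vertex and strictness is not required in that direction. What must be checked is that dropping them keeps each $C_1$-vertex balanced and each subgraph connected. The first holds because removing labels never destroys balancing, and the second follows from the non-cut-vertex property exactly as in Lemma \ref{lem:strict-extn}. Once this is verified, the glued packing contradicts the choice of $G$.
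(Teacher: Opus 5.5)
Your proof is correct and is essentially the paper's argument. If the edge is non-critical, delete it, putting self loops at its $R$-endpoints and leaving it unlabeled afterwards. Otherwise split along the tight $Qk$-cut through it, solve the side containing $v$ with the contracted vertex as a terminal, copy the induced balanced subpartition across, extend on the other side, make that extension strict via Lemma~\ref{lem:strict-extn}, and take edge-unions.

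Completing $v_1$'s subpartition to a full partition and then dropping the completion-only edges is an unnecessary detour. Every induced part at $v_1$ is already nonempty, because $v_1$ is a terminal of each $(S_2\cup v_1)$-subgraph in $G_2$. You should state this explicitly, since it is exactly what makes your dropping step preserve connectivity. With it, your procedure is just Lemma~\ref{lem:strict-extn} applied directly to the original subpartition.
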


\begin{proof}
    Suppose not, and there is an edge $e$ between the vertices in $V(G)-S$. If deleting that edge $e$ keeps $S$ $Qk$-edge-connected, then we delete $e$ from the graph. If $e$ was incident to any vertices in $R$, we add a self loop to those incident vertices to preserve their degree. We can then recursively find $k$ $S$-subgraphs $H_1,H_2...H_k$ that balance $S \cup R$ and extend $v$ on the new graph $G'$. Note that we can assume no self loops were used in the packing, because removing a self loop from $H_i$ does not affect the connectivity of $H_i$. Moreover, if a vertex $u$ is balanced and incident to a self loop in $H_i$, then removing the incident self loop from $H_i$ will still keep $u$ balanced in $G'$. Hence, we can assume that any self loops we artificially added after we deleted $e$ have no label. We can then delete those self-loops we added and add $e$ back to $G$, ensuring that $e$ has no label as well. Then any vertex incident to $e$ will maintain the exact same set of labels on its incident edges, ensuring that if the vertex was in $R$, it is still balanced on $G$. Hence, this gives us $k$ edge-disjoint $S$-subgraphs that balance $S \cup R$ and extend $v$, as desired. 
    
    If $S$ is no longer $Qk$-edge connected after the deletion of $e$, then there must be a collection of exactly $Qk$ edges (including $e$) which cuts $S$. Suppose that deleting those $Qk$ edges splits $G$ into $C_1$ and $C_2$. Contract $C_2$ into a single vertex $v_2$ and $C_1$ into $v_1$. We can assume without loss of generality that $v \in C_1$. Note that the degrees of all vertices are preserved on $C_1+v_2$ and $C_2+v_1$. Furthermore, $S \cap C_1$ is $Qk$-edge-connected in $C_1+v_2$ and $S \cap C_2$ is $Qk$-edge-connected in $C_2+v_1$. 

    Since $C_2$ had to contain one of the endpoints of $e$ and contains a vertex in $S$, we know that $C_1+v_2$ is strictly smaller than $G$. We also note that $v_2$ must be $Qk$-connected to $C_1 \cap S$, as there are $Qk$ edge disjoint paths from any vertex in $C_1 \cap S$ to any vertex in $C_2 \cap S$, which all must pass through the edge cut of $Qk$ edges. Hence, we can apply the extension theorem to find $k$ edge disjoint $S \cup v_2$-subgraphs $H_1, H_2,...,H_k$ on $C_1+v_2$ which extend $v$ and balance $(C_1 \cap (S  \cup R)) \cup v_2$. This induces a balanced edge subpartition on $v_1$, and note that each part is nonempty, since $v_2$ must have an incident edge in each $H_i$ by definition of an $S \cup v_2$-subgraph, so all $k$ labels must appear at $v_2$ and are subsequently copied over to $v_1$.
    
    We apply the extension theorem on $C_2+v_1$ in a similar manner to get $k$ edge disjoint $S \cup v_1$-subgraphs $H'_1, H'_2,...,H'_k$ which extend $v_1$ and balance $(C_2 \cap (S \cup R)) \cup v_1$ on $C_2+v_1$. By Lemma \ref{lem:strict-extn}, we can assume that this extension is also a strict extension. Then, taking edge-unions $H_1 \uplus H_1', H_2 \uplus H_2'...H_k \uplus H_k'$ gives us $k$ edge disjoint $S$ subgraphs on $G$ which extend $v$. We note that $S \cup R$ is balanced, because as we used a strict extension on $C_2$, no vertex had an incident edge change labels during the edge union, and hence all balanced vertices in $H_i$ and $H_i'$ are still balanced in $H_i \uplus H_i'$, giving us our desired result. 
\end{proof}


The subsequent two lemmas will demonstrate how to recurse to a point where $S \cup R$ is highly edge-connected. From there we will establish enough structure on $G$ to construct the $S$-subgraphs. 

\begin{lemma}
    Every vertex in $R$ has less than $2k$ self-loops.
\end{lemma}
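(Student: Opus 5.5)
The plan is to use the second tiebreak in the choice of counterexample (minimality of $|R|$). The point is that a vertex of $R$ with at least $2k$ self loops is automatically balanced by any packing that avoids its loops, so such a vertex can simply be dropped from $R$.

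Suppose, for contradiction, that some $r \in R$ is incident to at least $2k$ self loops. Consider the same graph $G$ with the same $S$, $v$ and $P_k(v)$, but with $R' = R - r$ in place of $R$. All hypotheses of the Extension Theorem still hold: $S$ is still $Qk$-edge-connected, $S \cap R' = \emptyset$, and every vertex of $R'$ is still incident to at least $(Q-2)k$ edges. The number of vertices and non-loop edges is unchanged, and $|R'| < |R|$. By the second criterion in the choice of $G$, this instance is not a counterexample. Hence there are $k$ edge-disjoint $S$-subgraphs $H_1,\dots,H_k$ that extend $P_k(v)$ and balance $S \cup R'$.

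Next we normalize the packing so that no self loop is used, exactly as in the proof of Lemma \ref{lem:indset}. Deleting a self loop from some $H_i$ does not affect connectivity of $S$ in $H_i$, nor whether $v$ is a cut vertex of $H_i$. It also cannot destroy balancing, since removing labels at a vertex cannot turn a balanced subpartition into an unbalanced one. Since $v$ carries no self loops, the condition $P_i \subseteq E(H_i)$ is also preserved. After this step, all self loops at $r$ are unlabeled.

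The remaining step is to check that $r$ is balanced, i.e. that the subpartition $E(r) \cap H_1, \dots, E(r) \cap H_k$ extends to a balanced partition of $E(r)$. We have at least $2k$ unlabeled self loops at $r$. Put two of them into each part $i$, and distribute all remaining unlabeled edges of $E(r)$ arbitrarily among the parts. Each part then has size at least $2$, and each part contains its original labeled edges, so this is a balanced partition extending the subpartition. Therefore $H_1,\dots,H_k$ balance $S \cup R$ and extend $P_k(v)$ in $G$, contradicting that $G$ is a counterexample.

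The only delicate point is the bookkeeping around self loops. One must make sure that a self loop counts as a single element of $E(r)$ in the balancing definition; this is consistent with the paper's convention that each loop contributes $1$ to the degree. One must also make sure that stripping loops from the $H_i$ is legitimate for the extension property and for the balancing of the other vertices. Both checks are the same ones already carried out in Lemma \ref{lem:indset}.
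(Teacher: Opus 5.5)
Your proof is correct and follows the paper's argument: you invoke the $|R|$-minimality tiebreak on a smaller set $R$, strip self loops from the resulting packing, and use the at least $2k$ unlabeled loops at $r$ to balance it. The only cosmetic difference is that you remove a single offending vertex $r$ from $R$, whereas the paper removes all vertices with at least $2k$ self loops at once; either choice works.
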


\begin{proof}
    Suppose not, and let $R'$ be the set of all vertices in $R$ that are incident to less than $2k$ self loops. Note that $R'$ is a strict subset of $R$, so by minimality of $R$ (the second criteria of picking our counterexample), there exist $k$ edge disjoint $S$-subgraphs that extend $P_k(v)$ and balance $S \cup R'$. Take a vertex $r \in R-R'$. If any self loops incident to $r$ were used in any of the $S$-subgraphs, we can remove them from the $S-$subgraph, and note that their removal does not affect the connectivity of $S$. Since at least $2k$ self loops are incident to $r$ and none of them are used in any $S$-subgraph by construction, then $r$ is balanced as well. Hence, this is a contradiction, and all vertices in $R$ must have less than $2k$ self-loops.
\end{proof}

The following lemma then follows as a corollary of the previous two lemmas. 

\begin{lemma}\label{lem:Q-4-conn}
    $S \cup R$ is $(Q-4)k$-edge-connected in $G$. 
\end{lemma}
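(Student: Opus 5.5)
Since $S$ is already $Qk$-edge-connected, the claim reduces to showing that each $r \in R$ is $(Q-4)k$-edge-connected to $S$. Indeed, if $S \cup R$ had a cut $\delta(X)$ of size less than $(Q-4)k$ separating two of its vertices, then either $X$ splits $S$, which is impossible because $|\delta(X)| < Qk$, or all of $S$ lies on one side and some $r \in R$ on the other. So the plan is to fix $r \in R$ and a vertex set $X$ with $r \in X$ and $S \cap X = \emptyset$, and to show $|\delta(X)| \geq (Q-4)k$. Here $\delta$ counts non-loop edges, since loops do not contribute to cuts.

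The key step combines the two preceding lemmas. Since $r \in R$ is incident to at least $(Q-2)k$ edges and fewer than $2k$ of these are self-loops (each loop counting once toward this total, by the paper's convention), $r$ is incident to more than $(Q-4)k$ non-loop edges. By Lemma \ref{lem:indset}, $V(G)-S$ is independent, so every non-loop edge at $r$ goes to a vertex of $S$. All of these neighbours lie outside $X$, so each of these edges lies in $\delta(X)$. Hence $|\delta(X)| \geq d_{\text{non-loop}}(r) > (Q-4)k$.

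By Menger's theorem, this gives $(Q-4)k$ edge-disjoint paths between $r$ and $S$. Together with the $Qk$-edge-connectivity of $S$, every pair in $S \cup R$ is $(Q-4)k$-edge-connected.

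I do not expect a real obstacle. The only points needing care are the bookkeeping that a self-loop contributes $1$ to the degree count but nothing to any cut, and the observation that independence of $V(G)-S$ forces every non-loop edge at $r$ to land in $S$. This is what lets us avoid analysing general cuts $X$ containing non-terminal vertices.
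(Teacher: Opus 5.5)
Your proof is correct and follows essentially the same argument as the paper. A cut of fewer than $(Q-4)k$ edges cannot split $S$, so it must separate some $r\in R$ from all of $S$; since $r$ has fewer than $2k$ loops and, by independence of $V(G)-S$, all its non-loop edges go to $S$, that cut would have to contain more than $(Q-4)k$ edges (the final appeal to Menger is harmless but unnecessary, since the cut argument already gives the connectivity directly).
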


\begin{proof}
    Suppose not, and take an edge cut of $S \cup R$ of size less than $(Q-4)k$. We note that since $S$ is $Qk$-edge-connected, all of $S$ is contained in one component after the edge cut. Take some $r \in R$ in a different component after the cut. Since $r$ was incident to at most $2k$ self loops, then it must have at least $(Q-4)k$ incident edges that are incident to other vertices. However, since $r$ can only be incident to vertices in $S$ by virtue of $V(G)-S$ being an independent set, it means that it had $(Q-4)k$ edges incident to a vertex in $S$ which must have all been deleted in the cut. But we assumed the cut was less than $(Q-4)k$, a contradiction. Hence $S \cup R$ is $(Q-4)k$-edge-connected. 
\end{proof}

\subsection{Extension Theorem Proof Part 2: Constructing $S$-subgraphs with $|Z| \geq 6k$}

Now that we have established the structure of $G$, we will start explicitly constructing the $S-$subgraphs. This section will handle the case where $|S|=2$ and $S \cup R-v$ is at least $6k$-edge-connected after $v$ and its neighbors outside $S \cup R$ is deleted (in other words, the minimum cut $Z$ of $S \cup R-v$ in the modified graph has size at least $6k$). The rest is handled in Part 3 of the proof. We start by resolving our first case $|S|=2$.

\begin{lemma}
    $|S| \geq 3$.
\end{lemma}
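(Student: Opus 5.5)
We argue by contradiction: suppose the minimal counterexample $G$ has $S=\{v,u\}$, and we build the $k$ required $S$-subgraphs directly from edge-disjoint $v$--$u$ paths. The first step is to pin down the structure of $G$ using Part 1. By Lemma \ref{lem:indset}, $V(G)-S$ is independent, so every vertex outside $S$ has all its non-loop neighbours in $S=\{v,u\}$. A vertex of $V(G)-S-R$ would have $3$ distinct neighbours by the degree-$3$ lemma, which is impossible with only two vertices in $S$. Hence $V(G)=S\cup R$, and every $r\in R$ is joined only to $v$ and $u$, apart from fewer than $2k$ self-loops. So every $v$--$u$ path avoiding loops is either a direct edge $vu$ or a path $v\,r\,u$ of length two.

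Next, since $S$ is $Qk$-edge-connected and $d(v)=Qk$, Menger gives $Qk$ edge-disjoint $v$--$u$ paths, and each edge of $E(v)$ starts exactly one of them. As in the proof set-up, we may assume every edge at $v$ carries a label and each label occurs at least twice. We then let $H_i$ be the union of the paths whose first edge lies in $P_i$. Then:
\begin{enumerate}
\item the $H_i$ are edge-disjoint and $P_i\subseteq E(H_i)$;
\item $H_i-v$ is connected, because every path interior hangs off $u$, so $v$ is not a cut vertex of $H_i$ and the $H_i$ extend $v$;
\item each $H_i$ connects $S$.
\end{enumerate}
Balancing at $v$ is given. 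At $u$, label $i$ appears once per path of $H_i$, hence at least twice, so the labelled edges at $u$ extend to a balanced partition by throwing the unused edges into arbitrary parts.

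The main obstacle is balancing at each $r\in R$. If $p_r$ paths pass through $r$, then every label used at $r$ appears an even number of times, at least $2$. So $r$ is balanced exactly when the number of unlabelled edges at $r$, namely $d(r)-2p_r$ including loops, is at least $2(k-m_r)$, where $m_r$ is the number of distinct labels at $r$. This can fail only when $r$ is nearly saturated by paths, which forces $p_r\ge ((Q-2)k-2k)/2$, far more than $k$. In that case I would modify the path system locally at $r$. There are two ways I would try:
\begin{enumerate}
\item Exploit the freedom, allowed by Menger, to choose which $v$--$u$ paths use $r$: reroute some $v\,r$ edges through direct $vu$ edges or through other $R$-vertices with slack, so as to free edges at $r$.
\item Alternatively, if labels at $r$ are heavily repeated while others are missing, drop from $H_i$ a path $v\,r\,u$ whose label $i$ occurs at least twice elsewhere, which is allowed only if that edge is not one of the original labelled edges of $P_k(v)$.
\end{enumerate}
The count $(Q-2)k\ge 30k$ against the at most $2k$ edges needed should leave ample room. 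I expect that making this rerouting argument precise, simultaneously for all saturated $r$ and without breaking balancing at $u$, is the delicate part. Once it is done, the $H_i$ balance $S\cup R$ and extend $P_k(v)$, contradicting the choice of $G$; hence $|S|\ge 3$.
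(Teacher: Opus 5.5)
Your structural reduction is correct: by Lemma \ref{lem:indset} and the degree-$3$ lemma, $V(G)=S\cup R$, and every loopless $v$--$u$ path is either $vu$ or $v\,r\,u$. Extension and balancing at $v$ and $u$ are also fine. The only nontrivial content of this lemma, however, is balancing at $R$, and there you stop at a sketch. Worse, the construction as stated genuinely fails, and neither proposed repair can fix it.

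Take $r\in R$ with no loops, $m$ edges to $v$ and $m$ edges to $u$, where $2m\ge (Q-2)k$. Let the other $Qk-m$ edges of $v$ go directly to $u$, and let $P_k(v)$ put every edge $vr$ into $P_1$. Then all $2m$ edges at $r$ land in $H_1$, and for $k\ge 2$ labels $2,\dots,k$ cannot be completed at $r$.

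Rerouting cannot help. Since $d(v)=Qk$, every edge of $E(v)$ starts one of the $Qk$ paths, and a path entering $r$ from $v$ must leave along an edge $ru$. So in \emph{every} Menger system the number of paths through $r$ equals the number of edges $vr$. Dropping a whole path $v\,r\,u$ from $H_1$ is also ruled out, because it removes an edge of $P_1$ and violates $P_1\subseteq E(H_1)$. You noted this restriction yourself, and here every $vr$ is labelled in $P_k(v)$.

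The missing idea is to keep $vr$ and drop only $ru$. This is legitimate as soon as $r$ is already joined to $u$ inside $H_i-v$, since $v$ then remains a non-cut vertex of $H_i$. The paper does this systematically with an ear construction:
\begin{enumerate}
\item $H_i$ starts as two full paths whose first edges carry label $i$, so label $i$ appears twice at $u$.
\item Every further path with label $i$ is added only up to the first vertex it meets that already lies in $H_i$.
\end{enumerate}
With this construction:
\begin{enumerate}
\item $H_i-v$ stays connected.
\item $r$ is an interior vertex of at most two paths or ears of each $H_i$, so at most $2k$ of its edges to $u$ are used in total.
\item At most $d(r)/2$ paths pass through $r$.
\end{enumerate}
Hence at least $d(r)-d(r)/2-2k\ge ((Q-2)/2-2)k\ge 2k$ edges at $r$ remain unlabelled, and $r$ is balanced. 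No rerouting is needed.
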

\begin{proof}
    This proof is essentially the same as the section "When $|S| \leq 2$" in \cite{lau-thesis}. We consider the case where $|S|=2$ (and note that we assume $|S| \geq 2$, so eliminating this case leaves us with $|S| \geq 3$). In this case, $S$ consists of $v$ and another vertex $u$. Take any collection of $Qk$ edge disjoint paths between $v$ and $u$. Denote $P_i = \{P_{i,1}, P_{i,2},...,P_{i,m_i} \}$ as the paths from $v$ to $u$ that start by traversing an edge with label $i$ from $v$. Note that $P_i$ has a size of at least 2 since we assumed each label appears twice at $v$. 

    Construct $H_i$ by taking $P_{i,1} \cup P_{i,2}$. Then, for any remaining paths in $P_i$, iteratively add them to $H_i$ by starting at $v$, traversing the path to $u$, and truncating immediately when they reach a vertex of $H_i$ next, essentially adding an ear to $H_i$. Note that by this construction, $u$ and $v$ are connected, $H_i$ remains connected even after $v$ is deleted, and the degree of $u$ and $v$ in $H_i$ is at least 2, balancing both vertices. 

    We now prove that $R$ is balanced by claiming for each vertex $r \in R$, at least $2k$ edges go unused in $H_1, H_2,...,H_k$. Take a vertex $r \in R$, and assume it appears in $l$ paths/ears in the construction of $H_1, H_2,...,H_k$. Note that by our construction, $r$ has a degree of 2 in at most $2k$ of these paths/ears. This is because for any $H_i$, $r$ is allowed to be an internal vertex in at most 2 paths in $H_i$. If $r$ appears in any of the first two paths in $H_i$, then any added path from $\{P_{i,3}, P_{i,4},...,P_{i,m_i} \}$ would truncate itself at $r$ if it ever reached $r$, hence only adding a degree of 1 to $r$. If $r$ does not appear in the first 2 paths in $H_i$, then it will have a degree of 2 in the first path it appears in out of $\{P_{i,3}, P_{i,4},...,P_{i,m_i} \}$, but any future paths will again truncate themselves at $r$ if they reach the vertex, only contributing an additional degree of 1. Either way, $r$ can only be an internal vertex in at most 2 paths in $H_i$, and hence can only have a degree of 2 in at most $2k$ paths in total.

    If $l \leq 2k$, at most $4k$ edges were used at $r$, hence at least $2k$ edges remain unused. Otherwise, when $l > 2k$, we can upper bound the degree of $r$ in $H_1 \cup H_2 \cup ...H_k$ with $(l-2k)+4k = l+2k$. Note that $l \leq d(r)/2$, where $d(r)$ denotes the number of edges incident to $r$. This is because $H_1 \cup H_2 \cup...H_k$ is a subset of $Qk$ edge disjoint paths between $v$ and $u$, and if any of those initial paths passed through $r$, it would use up two edges incident to $r$. This gives us that at least $d(r)-(d(r)/2+2k) \geq ((Q-2)/2-2)k$ edges that remain unused at $r$, which is at least $2k$ as long as $Q \geq 10$. 
\end{proof}

We now handle the cases where $|S| \geq 3$. Let $W$ be the neighbors of $v$ that are not in $S \cup R$, and define $G'=G-v-W$. Note that $G'$ retains the property that every $u \in V(G')-S-R$ has degree 3 and is adjacent to exactly 3 vertices, and $V(G')-S$ is still an independent set. We will first handle the case where $S \cup R-v$ is $6k$-edge-connected in $G'$. The idea is to prove that $6k$-edge-connectivity suffices to pack $k$ $(S \cup R-v)$-subgraphs that balance $S \cup R-v$ in $G'$, and adding $v$ and $W$ back will give the desired extension. To obtain the packing on $S \cup R -v$, we will need the lemma below. 

\begin{lemma}[Lau 3.7.9 \cite{lau-thesis}]\label{lem:treepacking}
    Let $G$ be a graph and $S$ be a subset of $V(G)$. Suppose $S$ is $3k$-edge-connected in $G$, every vertex in $V(G)-S$ is of degree 3, and there is no edge between vertices in $V(G)-S$. Let T be a set of edges with $|T| \leq k$. Then $G-T$ has $k$ edge-disjoint $S$-subgraphs.
\end{lemma}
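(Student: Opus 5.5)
The plan is to turn the problem into a hypergraph partition-connectivity problem on the terminal set, using the structural hypotheses. Because $V(G)-S$ is independent and every non-terminal has degree $3$, each non-terminal $u$ is the centre of a star with at most three edges, all going to $S$. In $G-T$, I build a hypergraph $\mathcal{H}$ on vertex set $S$ as follows.
\begin{itemize}
\item Each edge of $G-T$ between two terminals becomes an ordinary edge of $\mathcal{H}$.
\item Each non-terminal $u$ becomes one hyperedge $h_u$, namely the set of terminals that $u$ is still joined to in $G-T$. This set has at most three vertices; if it has fewer than two, I discard $h_u$.
\end{itemize}
A connected spanning sub-hypergraph of $\mathcal{H}$ gives an $S$-subgraph of $G-T$ by replacing each hyperedge $h_u$ with its star at $u$. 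Hyperedge-disjoint sub-hypergraphs give edge-disjoint $S$-subgraphs, since distinct stars share no edges. So it suffices to find $k$ hyperedge-disjoint connected spanning sub-hypergraphs of $\mathcal{H}$.

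For this I invoke the Frank--Király--Kriesell theorem. It says that a hypergraph decomposes into $k$ connected spanning sub-hypergraphs if it is $k$-partition-connected, meaning that for every partition $\mathcal{P}$ of $S$ into $p$ classes,
\[ \sum_{h} \bigl(\#\{\text{classes met by } h\} - 1\bigr) \ge k(p-1). \]
The main step is to verify this inequality, which I will do by a counting argument against the $3k$-edge-connectivity of $S$ in $G$.

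First ignore $T$. Fix $\mathcal{P}=\{X_1,\dots,X_p\}$ and extend it to a partition of $V(G)$ by placing each non-terminal in a class containing a majority of its neighbours. This is possible since each non-terminal has exactly three terminal neighbours. Every class separates terminals, so each $\delta_G(X_i)$ has size at least $3k$, and $\sum_i |\delta_G(X_i)| \ge 3kp$. I then compare, item by item, its contribution to this degree sum with its contribution to the partition-connectivity sum:
\begin{itemize}
\item A crossing terminal--terminal edge contributes $2$ to the degree sum and $1$ to the partition sum.
\item A star whose three neighbours lie in three distinct classes contributes at most $3$ to the degree sum and $2$ to the partition sum.
\item A star whose neighbours lie in two classes, with the centre placed with the majority pair, contributes $2$ to the degree sum (its one crossing edge is counted once at each endpoint's class) and $1$ to the partition sum.
\end{itemize}
In every case the partition contribution is at least half the degree contribution, so the partition sum in $G$ is at least $3kp/2$.

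Next account for $T$. Deleting one edge of $G$ lowers the partition sum by at most $1$. For a terminal--terminal edge this is clear. For a star edge, the hyperedge shrinks by at most one vertex, which lowers the number of classes it meets by at most one. Hence in $G-T$ the partition sum is at least $3kp/2 - k$, and this is at least $k(p-1)$ because $kp/2 \ge 0$. So $\mathcal{H}$ is $k$-partition-connected, and the decomposition follows.

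The step I expect to be most delicate is the per-item charging, specifically for stars whose three neighbours fall into only two classes. Placing the centre in the wrong class would make it contribute $4$ to the degree sum but only $1$ to the partition sum, breaking the factor $\tfrac12$. The majority placement fixes this, but I would double-check it alongside the bookkeeping for stars already shrunk by $T$ (sizes $2$ and $1$) and for parallel edges. If the Frank--Király--Kriesell theorem is not considered available as stated, the fallback is to prove the needed decomposition through Edmonds' matroid union on the corresponding hypergraphic matroid. The required rank inequality is exactly the counting bound above.
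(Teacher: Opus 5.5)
Your reduction to a rank-$3$ hypergraph $\mathcal{H}$ on $S$ is fine, and so is the translation back to edge-disjoint $S$-subgraphs. This would be a genuinely different route from the paper, which instead verifies West and Wu's $(k,g)$-family condition with $g$ the parity indicator on $V(G)-S$.

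\textbf{The gap is the theorem you invoke.} The Frank--Kir\'aly--Kriesell theorem is Edmonds' matroid union applied to Lorea's hypergraphic matroid. It says that $\mathcal{H}$ decomposes into $k$ partition-connected (hence connected) sub-hypergraphs if and only if, for every partition of $S$ into $p$ classes, the \emph{number} of hyperedges meeting at least two classes is at least $k(p-1)$. So a crossing hyperedge counts once, not ``number of classes it meets minus one'' times. Your condition is weak partition-connectivity, which is necessary but not sufficient. Since $m-1\ge m/2$ for $m\ge 2$, every $2k$-edge-connected hypergraph satisfies it. Yet Bang-Jensen and Thomass\'e exhibited hypergraphs of arbitrarily high edge-connectivity with no two edge-disjoint connected spanning sub-hypergraphs. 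Your fallback does not help: the rank inequality that matroid union produces is exactly the count condition, not your weighted one. So your bound $3kp/2-k$ on the weighted sum does not yield the decomposition.

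\textbf{How to repair it.} The charging must have ratio $1/3$ rather than $1/2$. Each crossing item must contribute at most $3$ to $\sum_i|\delta_G(X_i')|$, while it contributes exactly $1$ to the count. Your placement rule does not achieve this. A star whose three edges go to three different classes has no majority class, and putting its centre in any class makes it contribute $4$, not $3$ as you claim. That was harmless only for your factor $1/2$.

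Instead, put a non-terminal into $X_i'$ only when at least two of its three edges go to $X_i$, and leave all other non-terminals in no class. The sets $X_i'$ need not cover $V(G)$, and for $p\ge 2$ each still separates terminals, so $|\delta_G(X_i')|\ge 3k$. Now crossing terminal--terminal edges and two-class stars contribute $2$, three-class stars contribute $3$, and everything else contributes $0$. Hence the hypergraph of $G$ has at least $kp$ crossing hyperedges. Each edge of $T$ deletes one hyperedge or shrinks one by a single vertex, so it lowers the count by at most $1$. This leaves at least $kp-k=k(p-1)$, which is exactly what is needed.

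With this correction your proof is a valid alternative. It relies only on the classical matroid-union decomposition rather than the $(k,g)$-family theorem.
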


We will use the following lemma above as a black box in several proofs. For the sake of completeness, we will also provide a proof to Lemma \ref{lem:treepacking}. We note that our proof is rather different from Lau's original proof of the lemma. We use the notion of a $(k,g)$-family, as introduced by West and Wu \cite{westwu}.

\begin{definition}
    An $S$-subgraph is called an $S$-connector if there exist a series of operations where at each step one replaces a $u,v$ path with the edge $(u,v)$ to obtain a connected graph with vertex set $S$.
\end{definition}

An $S$-connector is essentially a "stronger" version of an $S$-subgraph. For the sake of this proof, however, the only property about an $S$-connector we will really need is that it is an $S$-subgraph.

\begin{definition}
    Take a graph $G$, a set $S \subseteq V(G)$, and a function $g: V(G) \rightarrow \mathbb{Z}^+ \cup \{0 \}$ such that for all $v \in V(G)-S$, $g(v) \equiv d(v)\mod 2$. A $(k,g)$ family is a set of $k + \sum_{v \in V(G)}g(v)$ edge disjoint subgraphs, such that the first $k$ graphs form $S-$connectors, and the remaining $\sum_{v \in V(G)}g(v)$ are paths that can be oriented such that all paths end in $S$, and for each $v \in V(G)$ there are $g(v)$ paths starting at $v$. 
\end{definition}

West and Wu a necessary and sufficient condition for the existence of a $(k,g)$ family in a graph. Let $P = A_1, A_2...A_l$ be disjoint subsets of $V(G)$ such that $S \subseteq \bigcup_{i=1}^l A_i$, and for each $A_i \in P$, $A_i \cap S \not= \emptyset$. Let $B_p = V(G) - \bigcup_{i=1}^l A_i$. Let $T_p$ be the vertices in $S$ which are the only vertex in $S$ in their respective block in $P$. Let $\delta(A_i)$ denote the edges with exactly one endpoint in $A_i$, and $g(X)$ over some set $X \subseteq V(G)$ denote $\sum_{x \in X}g(x)$. 

\begin{theorem}[West-Wu \cite{westwu}]
    A $(k,g)$ family exists if and only if for all $P$, $f_g(P) = (\sum \delta(A_i))-2k(|P|-1)-g(B_p)-2g(T_p)$ is at least 0.
\end{theorem}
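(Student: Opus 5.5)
The plan is to prove the two directions separately. Necessity is a counting argument. Sufficiency goes through a minimal counterexample, using splitting-off and a submodularity ("uncrossing") argument on tight subpartitions, in the same spirit as the proofs of the Tutte--Nash-Williams theorem.

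\textbf{Necessity.} Fix $P=A_1,\dots,A_l$ and a $(k,g)$ family, and compute the contribution of each member to $\sum_i|\delta(A_i)|$. An edge joining two different blocks is counted twice, and an edge joining a block to $B_p$ is counted once.
\begin{itemize}
\item For an $S$-connector $H$, contract each $A_i$ to a single node. By definition, $H$ can be reduced, by replacing internal paths with edges, to a connected graph on $S$, which meets all $l$ blocks. So $H$ contains at least $l-1$ ``block-to-block'' routes, each leaving one block and entering another. Each such route contributes at least $2$ to the count, giving at least $2(l-1)$ per connector. The delicate point is that routes must not share crossing edges in a way that lowers the count; this is exactly where the connector property (rather than mere connectivity) is used.
\item Each path starting at $b\in B_p$ must end in $S\subseteq\bigcup A_i$, so it enters some block and contributes at least $1$. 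This gives $g(B_p)$ in total.
\item Each path starting at $t\in T_p$ must end in $S-t$. Since $t$ is the only terminal in its block, the path must leave that block and enter another, contributing at least $2$. This gives $2g(T_p)$ in total.
\end{itemize}
Since all members are edge-disjoint, summing these contributions yields $f_g(P)\ge 0$.

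\textbf{Sufficiency.} Take a counterexample minimizing $|E(G)|+g(V(G))$.
\begin{itemize}
\item \emph{Removing a path start.} Suppose some $x$ has $g(x)>0$ and there is an edge $xy$ such that decreasing $g(x)$ by one and transferring the demand to $y$ keeps $f\ge 0$. Parity is handled by the hypothesis $g\equiv d \pmod 2$ off $S$: after deleting $xy$, set $g'(x)=g(x)-1$ and $g'(y)=g(y)+1$ when $y\notin S$, and leave $g'(y)=g(y)$ when $y\in S$. Induction then gives a family, and we prepend $xy$ to the path starting at $y$, or use $xy$ alone when $y\in S$. If every such move fails, every edge at $x$ lies in a \emph{tight} partition, one with $f_g(P)\le 1$ after the modification.
\item \emph{Uncrossing.} Show that $f_g$ is supermodular-like on the lattice of subpartitions (meet and join of blocks). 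Then tight partitions can be uncrossed into a single laminar tight structure. A contradiction follows by summing degrees at $x$, since $g(x)$ appears with coefficient $1$ or $2$ in $f_g$ according to whether $x\in B_p$ or $x\in T_p$.
\item \emph{No path starts remain.} Once $g\equiv 0$, the condition reads $\sum|\delta(A_i)|\ge 2k(|P|-1)$ with all vertices outside $S$ of even degree. Apply Mader's Splitting Off Lemma at non-terminals, which preserves the cut condition and creates no new $P$ violations, to reduce to a graph on $S$. There the condition is the Tutte--Nash-Williams partition condition, giving $k$ spanning trees. Undoing the splits turns these into $S$-connectors.
\end{itemize}

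The main obstacle is the uncrossing step: proving that two tight subpartitions combine into a tight one. The correction terms $g(B_p)$ and $2g(T_p)$ behave non-modularly when a block loses its second terminal and moves into $T_p$, or when vertices fall into $B_p$. I would first try a case analysis on how the blocks of two crossing tight partitions intersect, charging each change in $g(B_p)+2g(T_p)$ against an edge of $\delta$ that is counted twice. If that becomes unwieldy, I would simply cite West and Wu \cite{westwu} for this theorem, as the paper uses it only as a black box.
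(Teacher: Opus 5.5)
The paper does not prove this statement. It quotes it from West and Wu and uses it only as a black box, in the proof of Lau's $3k$ tree-packing lemma, so your fallback of citing it is exactly what the paper does. Your necessity direction is fine once made precise. Among the edge-disjoint $S$-paths of a connector, choose $l-1$ whose endpoint blocks form a spanning tree of the block-contracted graph; each of these contributes at least $2$ to $\sum_i|\delta(A_i)|$. The sufficiency sketch, however, has a step that fails as stated.

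\textbf{The splitting step does not preserve the condition.} You claim that applying Mader's lemma at non-terminals creates no new $P$-violations. Mader's lemma only guarantees \emph{some} pair that preserves all local edge-connectivities $\lambda(a,b)$. The subpartition inequality is not determined by these values.

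Take $k=2$, $g\equiv 0$, $S=\{a,b,c,d\}$, edges $ab,ab,bc,da$, and a non-terminal $x$ joined once to each of $a,b,c,d$.
\begin{enumerate}
\item $G$ has two edge-disjoint $S$-connectors: $\{ab,bc,da\}$, and the second copy of $ab$ together with the $S$-paths $axc$ and $bxd$. So the West--Wu condition holds.
\item Splitting $xa,xb$ into $ab$ keeps $\lambda(a,b)=4$ and every other terminal pair at $\lambda=2$, exactly as in $G$. So this is an admissible Mader split.
\item In the split graph, however, $P=\{a,b\},\{c\},\{d\}$ with $x\in B_p$ gives $\sum_i|\delta(A_i)|=6<8=2k(|P|-1)$.
\item After the forced second split $xc,xd\to cd$, the graph on $S$ ($ab$ tripled, plus $bc,cd,da$) has no two edge-disjoint spanning trees.
\end{enumerate}
The other two pairings at $x$ would have worked. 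What your argument needs is a splitting theorem that preserves the $f_g$-condition itself, and establishing that such splittings exist is essentially the content of the theorem.

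\textbf{The reduction to $g\equiv 0$ is unproven.} The uncrossing step carries this whole reduction, but it is only asserted. You prove no supermodularity of $f_g$ on the lattice of subpartitions, and this is delicate: blocks can drop into $T_p$ or vertices into $B_p$, which is exactly the issue you flag. The ``contradiction by summing degrees at $x$'' is also never specified.

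\textbf{The path-start move breaks for $x\in S$.} The path from $y$ supplied by induction may end at $x$ itself. Prepending $xy$ then gives a closed walk, not a path from $x$ to $S-x$.

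As it stands, only the necessity direction is established; sufficiency rests on the citation, just as in the paper.
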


We now give our proof of Lemma \ref{lem:treepacking}. 

\begin{proof}
    Take a graph $G$ with $S \subseteq V(G)$ that is $3k$-edge-connected on $G$ and all vertices in $V(G)-S$ have degree exactly 3 and be an independent set. Let us delete $|T| \leq k$ edges, and we define $g:V(G) \rightarrow \mathbb{Z}$ to be 1 for $w \in V(G)-S$ that have odd degree, and 0 otherwise. Take a partition $P = A_1, A_2...A_l$ and $B_p$ of $V(G)$, and we will prove that $f_g(P) = (\sum \delta(A_i))-2k(|P|-1)-g(B_p)-2g(T_p)$ is nonzero. We first note that for any $w \in S$, we have $g(w) = 0$, so $2g(T_p) = 0$ and hence this expression reduces down to $(\sum \delta(A_i))-2k(|P|-1)-g(B_p)$.

    Now, suppose that $|P| \geq 2$. Suppose we add the edges in $T$ back to the graph. Then for each $A_i \in P$, there must be at least $3k$ edges leaving each $A_i$, because if not we could separate $A_i$ from some other $A_j$ in the partition with a cut of less than $3k$, but this would violate $S$ being $3k$-edge-connected. Hence, $\sum \delta(A_i)$ with the addition of the $T$ edges is at least $3k|P|$. Moreover, note that for each vertex $w \in B_p$, $w$ would have three edges going to $A_1,A_2...A_l$, hence each vertex contributes an increase of 3 to $\sum \delta(A_i)$, so $g(B_p)$ is at most a third of $\sum \delta(A_i)$. This then gives us $\sum \delta(A_i)-g(B_p) \geq (2/3)(3k|P|) = 2k|P|$ on the original graph without the deletion of the $T$ edges. Each edge deletion decreases $\sum \delta(A_i)$ by at most 2, and does not increase $g(B_p)$, so after deletion of the $T$ edges we have that $\sum \delta(A_i)-g(B_p) \geq 2k|P|-2k$ on the modified graph. We now have that $f_g(P) \geq 2k|P|-2k-2k(|P|-1) \geq 0$, as desired. 

    We now handle the case where $|P|=1$. Then clearly, $2k(|P|-1) = 0$. Then for each vertex $w$ in $B_p$, if $w$ has even degree, then it contributes nothing to $g(B_p)$, and if $w$ has odd degree, since $V(G)-S$ is an independent set, then $w$ has at least one edge going to $A_1$, contributing at least 1 to $\sum \delta(A_i)$. This then gives us $\sum \delta(A_i) \geq g(B_p)$ and hence $f_g(P)$ is nonzero, as desired. Either way, this allows us to find $k$ edge disjoint $S$-connectors and hence $k$ edge disjoint $S$-subgraphs on $G$, as desired. 
\end{proof}

We are now ready to eliminate the case where $S \cup R-v$ is $6k$-edge-connected in $G'$.
\begin{lemma}\label{lem:6kconn} 
    $(S \cup R-v)$ is at most $(6k-1)$-edge-connected in $G'$. 
\end{lemma}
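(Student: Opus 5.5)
We argue by contradiction: assume $S \cup R - v$ is $6k$-edge-connected in $G' = G - v - W$, and build the required $k$ $S$-subgraphs directly, contradicting the choice of $G$ as a counterexample. The plan has three stages: pack $2k$ subgraphs in $G'$, pair them into $k$ subgraphs that balance $S \cup R - v$, and reattach $v$ and $W$ so as to extend $P_k(v)$.

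\textbf{Packing in $G'$.} The structure from Part 1 survives the deletion of $v$ and $W$. Every vertex of $V(G') - (S\cup R)$ has degree $3$ (its neighbours lie in $S \cup R - v$, since $V(G)-S$ is independent and $v$ is not adjacent to it). Moreover, $V(G')-(S \cup R - v)$ is an independent set. Set $S^* = S \cup R - v$. Since $S^*$ is $6k = 3(2k)$-edge-connected, Lemma \ref{lem:treepacking} with parameter $2k$ and $T=\emptyset$ gives $2k$ edge-disjoint $S^*$-subgraphs $F_1,\dots,F_{2k}$ of $G'$. Pair them as $H_i' = F_{2i-1} \cup F_{2i}$ for $i \in [k]$. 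Every vertex of $S^*$ is non-isolated in both subgraphs of its pair, so it gets at least two edges with label $i$. Hence each $x \in S^*$ has all $k$ labels at least twice, which makes every vertex of $S \cup R - v$ balanced in the final packing (any extra labelled edges can be absorbed into a balanced partition).

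\textbf{Reattaching $v$ and $W$.} For each edge $e = vx \in P_i$, add it to $H_i'$.
\begin{itemize}
\item If $x \in S \cup R$, then $x \in S^*$, so $e$ hangs off the connected graph $H_i'$.
\item If $x \in W$, then $x$ has degree $3$, three distinct neighbours, and is not a cut vertex. Its two other edges go to vertices of $S^* \cup W$; in fact they go to $S \cup R$, because $V(G)-S$ is independent and so $x$ cannot be adjacent to other vertices of $W$. Add one of these edges $xy$ with $y \in S^*$ to $H_i$.
\end{itemize}
The edges of distinct vertices of $W$ are disjoint. A vertex $x \in W$ may carry only one edge to $v$, or it may carry two if it has two edges to $v$ (excluded, since it has three distinct neighbours). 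Its remaining two edges are unused by $F_j$ only if they were deleted with $W$; and indeed they were, since $W$ was removed from $G'$. So each such edge is free.

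We then check the extension conditions. First, $H_i - v$ is connected, because every neighbour of $v$ in $H_i$ attaches to the connected $H_i'$. Second, $P_i \subseteq E(H_i)$ by construction. Third, $v$ gets label $i$ on every edge of $P_i$, which has size at least $2$ after the preliminary relabelling, so $v$ is balanced. The $H_i$ are edge-disjoint, since the added edges are incident only to $v$ or $W$, which lie outside $G'$. Finally, $S \cup R$ is balanced as argued above, so we have the desired extension, a contradiction.

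\textbf{Main obstacle.} The delicate point is the balancing of a vertex of $R$ (or $S$) that is heavily used. Balancedness requires a balanced \emph{partition} of all of $E(x)$, which needs $d(x) \ge 2k$ with every part of size at least $2$. This holds, since we only need each label at least twice and unlabelled edges can be distributed freely. I expect the care to go into verifying that the doubling argument really gives each label twice, which relies on the $F_j$ being $S^*$-subgraphs that contain $x$ with positive degree, and into checking that $|S^*| \ge 2$, which follows from $|S| \ge 3$.
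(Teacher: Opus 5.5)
Your proof is correct and is essentially the paper's argument: assume $6k$-edge-connectivity, apply Lemma~\ref{lem:treepacking} with parameter $2k$ to get $2k$ edge-disjoint $(S\cup R-v)$-subgraphs of $G'$, pair them into $k$ subgraphs in which every vertex of $S\cup R-v$ carries each label twice, and reattach $v$ and $W$ according to $P_k(v)$. The only deviation is that you attach each $x\in W$ by one of its two remaining edges, whereas the paper adds both; this is immaterial. Your added checks are also right, namely $|S\cup R-v|\ge 2$ and that each $x\in W$ has exactly one edge to $v$ with its other two edges going into $S-v$.
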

\begin{proof}
    This proof imitates the proof of Lemma 3.7.3 in \cite{lau-thesis}. Assume that $(S \cup R-v)$ in $G'$ is $6k$-edge connected. Note that every vertex in $G'-S-R$ is adjacent to exactly 3 different vertices in $S \cup R$ and has degree exactly 3. Indeed, note that all the edges between $v \cup W$ and $G'$ are either from $v$ to a vertex in $S \cup R$ or from a vertex in $W$ which must only have neighbors in $S$, hence all vertices in $G'-S-R$ have the same edges and neighbors in $G'$ and $G$. Since $G'-S-R$ is an independent set, this means that we can find $2k$ edge disjoint $(S \cup R-v)-$subgraphs on $G'$ by Lemma \ref{lem:treepacking}. We then pair the subgraphs up into $k$ pairs of 2 subgraphs, and we union the subgraph with the other subgraph in its pair to obtain $k$ $(S \cup R-v)$-subgraphs. Note that every vertex in $(S \cup R-v)$ in each of these newly formed subgraphs now has degree 2. Denote these graphs as $H_1, H_2,...,H_k$.

    For the edges incident to $v$ that have a label $i$, add that edge to $H_i$. If the other endpoint $x$ of that edge is in $W$, then add both of the other two edges incident to $x$ to $H_i$ as well. Note that since we assumed that each label appears at least twice at $v$, each of $H_i$ now connects $S \cup R$ while extending its respective edges. Furthermore, the degree of all vertices in $S \cup R$ (including $v$) is at least 2 in $H_i$, ensuring that $H_1,H_2,...,H_k$ balance $S \cup R$, as desired. 
\end{proof}

Let $Z$ be the minimum $(S \cup R)-v$ edge cutset of $G'$, and note that $|Z| \leq 6k-1$. We can note that for the proof of Lemma \ref{lem:6kconn}, instead of packing in two $(S \cup R-v)$-subgraphs and combining them into one $S$-subgraph, we can just pack $k$ $(S \cup R-v)$-subgraphs $H_1, H_2...H_k$ instead but ensure that there are at least $k$ edges at any vertex in $S \cup R-v$ that do not appear in any of $H_1,H_2...H_k$. To accomplish this, we can use a theorem in \cite{devos2016packing}, which will yield a slightly better upper bound on $|Z|$. 

\begin{theorem}[DeVos, McDonald, Pivotto \cite{devos2016packing}]\label{thm:devosext}
    Let $G$ be a loopless multigraph, and let some $S \subseteq G$ where $|S| \geq 2$ be $2 \lceil (5k+3)/2 \rceil$-edge-connected with integer $k \geq 3$. Suppose there exists some $v \in S$ with degree exactly $2 \lceil (5k+3)/2 \rceil$. Then for any edge subpartition at $v$ where at least $k$ edges are unlabeled and each label appears once, there exist $k$ edge-disjoint $S$-subgraphs that extend $v$ and leave at least $k$ edges unused at each vertex in $S$. 
\end{theorem}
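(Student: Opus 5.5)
We would prove this by the minimal-counterexample scheme that the paper already uses for its own Extension Theorem. Let $\lambda = 2\lceil (5k+3)/2\rceil$, and take a counterexample $G$ minimizing $|V(G)|+|E(G)|$, breaking ties by maximizing $|S|$. The stronger conclusion, that at least $k$ edges stay unused at every terminal, is part of what is carried through the induction. It plays the role that balancing plays in our theorem, and it is what makes the glued packings compose.

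The first phase is the reductions of Part 1, now with $R=\emptyset$.
\begin{enumerate}
\item Any cut vertex outside $S$ is either discarded or added to $S$; adding it to $S$ is the tie-break on $|S|$.
\item Mader's Splitting Off Lemma is applied at every non-terminal of degree at least $4$. After this, every vertex of $V(G)-S$ has degree $3$ and three distinct neighbours.
\item Edges between non-terminals are removed exactly as in Lemma \ref{lem:indset}. If deleting such an edge destroys $\lambda$-connectivity, there is a tight cut of size $\lambda$. We contract each side, apply induction to both halves, pass to strict extensions by Lemma \ref{lem:strict-extn}, and take edge-unions. The ``$k$ unused edges'' property transfers across the gluing, because a strict extension never labels a previously unlabeled edge.
\end{enumerate}
The case $|S|=2$ is handled by ears of edge-disjoint $v$--$u$ paths, as in the $|S|\ge 3$ lemma above. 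There are $\lambda \ge 5k+3$ paths; each labelled path begins one $H_i$, and the $\ge k$ paths starting on unlabeled edges at $v$ are left unused, so $u$ keeps at least $k$ free edges.

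The second phase assumes $|S|\ge 3$. Let $W$ be the non-terminal neighbours of $v$, and set $G'=G-v-W$. Suppose $S-v$ is at least about $3k+O(1)$-edge-connected in $G'$. We would then invoke the West--Wu $(k,g)$-family criterion, in the same way as in our proof of Lemma \ref{lem:treepacking}. The goal is $k$ edge-disjoint $(S-v)$-connectors in $G'$ that still avoid at least $k$ edges at every terminal. We obtain this by first reserving $k$ edges at each terminal and then checking that $f_g(P)\ge 0$ still holds: each reserved edge lowers $\sum|\delta(A_i)|$ by at most $2$, and the degree-$3$ independent structure gives $g(B_p)\le \frac13\sum|\delta(A_i)|$. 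Once the connectors exist, we hang the labelled edges at $v$, together with the two other edges at each $w\in W$, onto the connector of the matching index. Because every label appears at $v$, $v$ is not a cut vertex of any $H_i$. The $\ge k$ unlabeled edges at $v$ are unused by construction.

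The main obstacle is the remaining case, where $G'$ has a small $(S-v)$-cut $Z$ with $|Z|<3k+O(1)$. The plan is as follows.
\begin{enumerate}
\item Choose $Z$ minimal. On each side, contract the other side to a single vertex.
\item Show that each side is highly connected. The argument is that any smaller cut inside a side, together with the edges from $v\cup W$ into that side, would give a cut of $G$ of size less than $\lambda$. Since $d(v)=\lambda$, the edges of $v$ must split roughly evenly between the two sides, and this is what supplies the bound.
\item Pack on one side by induction. Copy its labels onto the contracted vertex of the other side, and extend them there by induction.
\item Repair the degree of the contracted vertex to exactly $\lambda$ by adding fake edges, as in the $t=2,3$ lemma.
\end{enumerate}
The delicate accounting is in the last step. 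We must guarantee that the contracted vertex receives each label once and has at least $k$ unlabeled edges, and this budget is exactly where the constant $5k+4$ comes from. I would first try to get the label-once condition by discarding surplus labelled edges, and to get the $k$ unlabeled edges from the $k$ free edges that the inductive packing leaves at terminals.
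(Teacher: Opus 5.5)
The paper does not prove this statement. It imports it from DeVos--McDonald--Pivotto and uses it only as a black box, in the corollary that follows it and in Lemma~\ref{lem:5.5conn}. So your proposal is an independent proof attempt. Your Phase~1 reductions and the $|S|=2$ case are fine, provided ``each label appears once'' is read as ``at least once'', which is what a contracted vertex receives across a tight cut. Both substantive phases, however, have genuine gaps.

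\textbf{Phase~2.} The computation in Lemma~\ref{lem:treepacking} has exactly $2k$ of slack. The $3k$-connectivity gives $\sum|\delta(A_i)|-g(B_p)\ge 2k|P|$, against the requirement $2k(|P|-1)$, so it absorbs at most $k$ deleted edges \emph{in total}. Reserving $k$ edges at every terminal deletes on the order of $k|S-v|$ edges. On the partition of $S-v$ into singletons, the per-edge bound of $2$ then only limits the loss to order $k|P|$, far more than the available $2k$. An unguided reservation can even disconnect the terminals. Suppose $\delta(X)$ is a minimum cut of size $3k$ consisting of $k$ edges at each of three terminals in $X$; reserving exactly those edges separates $X$ from the remaining terminals. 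So the choice of reserved edges is where the content lies, and the threshold $3k+O(1)$ is unsupported. Encoding the reservation through $g$-paths with $g\equiv k$ on terminals does not rescue it. It adds the term $-2g(T_p)=-2k|P|$ for the singleton partition and, when every terminal edge goes to a degree-$3$ non-terminal, needs connectivity of roughly $6k$.

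\textbf{Phase~3.} You correctly identify this as the heart of the matter, but it does not work as sketched, because the inductive hypothesis cannot be invoked on either side.
\begin{itemize}
\item Suppose you contract inside $G'$. The contracted vertex has degree $|Z|$. The terminals of a side are in general far from $\lambda$-connected, since the paths through $v$ and through the couples are lost. The only available estimate is the common-path bound of Lemmas~\ref{lem:commonpaths} and~\ref{lem:commonconn}, which here gives roughly $\lambda/2-|Z|$. Adding fake edges at the contracted vertex cannot raise the connectivity between two terminals inside the side.
\item Suppose instead you contract inside $G$, absorbing $v$ and $W$. The side is then $\lambda$-connected. But in the paper's notation the contracted vertex has degree $|Z|+c(B_1)+2|X_1|+|X_C|\ge\lambda+|X_1|$. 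This uses $d(v)=\lambda$ and $|Z|\ge|X_2|+c(B_2)$ from the proof of Lemma~\ref{lem:crossingboundS}. So the degree is too large rather than too small, and the contracted vertex swallows $v$ together with its prescribed labels.
\end{itemize}
Your claim that a small cut inside a side, plus the edges from $v\cup W$, yields a cut of $G$ below $\lambda$ fails for the same reason: a subset of one side can receive nearly $\lambda$ edges from $v$ and the couples.

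The plan also never says how $H_i-v$ stays connected when the label-$i$ edges at $v$ lead into both sides. The paper does this with $|X_C|\ge Qk-2|Z|$. That bound is vacuous at $\lambda\approx 5k$ once $|Z|\ge\lambda/2$, which is well inside your range $|Z|<3k+O(1)$.

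Finally, discarding surplus labelled edges at a contracted vertex breaks the edge-union. A component of $H_i-v_2$ attached to $v_2$ only through a discarded edge is no longer joined to $H_i'$. Deleting that edge from $H_i$ instead cuts the component off entirely. The accounting that would yield $2\lceil(5k+3)/2\rceil$ is precisely the substance of the DeVos--McDonald--Pivotto argument, and the proposal does not supply it.
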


\begin{corollary}
    Let $G$ be an arbitrary loopless multigraph, and let $S$ be a $5k+4$-edge-connected subset of $V(G)$. Then there exist $k$ edge disjoint $S$-subgraphs that balance $S$. 
\end{corollary}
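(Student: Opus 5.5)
The plan is to derive the corollary from Theorem~\ref{thm:devosext} by adding an auxiliary vertex that plays the role of $v$. Set $N = 2\lceil (5k+3)/2\rceil$ and note $N \leq 5k+4$. The case $k \leq 2$ (where Theorem~\ref{thm:devosext} does not apply) I would treat separately with the same gadget and the West--Wu Extension Theorem (Theorem~\ref{thm:wwextthem}), which gives exactly the same ``$k$ unused edges at every vertex of $S$'' conclusion once $5k+4 \geq 6.5k$. So the main case is $k \geq 3$ with $|S| \geq 2$.

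Form $G^+$ by adding a new vertex $v$ joined by $N$ parallel edges to a fixed $s_0 \in S$, and put $S^+ = S \cup \{v\}$.
\begin{enumerate}
\item First check that $S^+$ is $N$-edge-connected in $G^+$. A cut separating $v$ from $s_0$ contains all $N$ parallel edges. Any cut separating two vertices of $S$ already has at least $5k+4 \geq N$ edges of $G$.
\item Since $d(v) = N$, label $k$ of the edges at $v$ with $1,\dots,k$, each once. This leaves $N-k \geq k$ edges unlabeled.
\item Theorem~\ref{thm:devosext} then gives edge-disjoint $S^+$-subgraphs $H_1,\dots,H_k$ extending $v$, with at least $k$ unused edges at every vertex of $S^+$. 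Because $v$ is not a cut vertex of $H_i$, each $H_i - v$ is a connected subgraph of $G$ containing $S$, so $H_1-v,\dots,H_k-v$ are edge-disjoint $S$-subgraphs of $G$.
\end{enumerate}

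Balancing I would verify via a counting criterion. The subpartition $E(s)\cap H_1,\dots,E(s)\cap H_k$ extends to a balanced partition of $E(s)$ exactly when the unused edges at $s$ can top up every part of size less than $2$. Each $H_i - v$ contains $s$ and at least one other vertex of $S$ (as $|S|\geq 2$), and it is connected, so it uses at least one edge of $G$ at $s$. Hence every part has size at least $1$, and the total deficit is at most $k$. So at least $k$ unused \emph{genuine} edges at $s$ suffice. For every $s \in S - s_0$ no fake edges are incident, so the $k$ unused edges guaranteed by Theorem~\ref{thm:devosext} are genuine and $s$ is balanced.

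The main obstacle is $s_0$: its guaranteed unused edges may all be fake edges to $v$. I would first try to remove this by not concentrating the fake edges. Spread the $N$ edges of $v$ over several terminals, or over two applications with different anchor vertices $s_0 \neq s_1$, and keep the packing from a run in which the anchor is itself balanced. If that fails, I would repair $s_0$ directly. In $G$ we have $d(s_0) \geq 5k+4$, so if fewer than $k$ genuine edges at $s_0$ are unused, then $s_0$ carries more than $4k$ labelled genuine edges. Some $H_i$ then has at least $5$ edges at $s_0$. I would aim to move such surplus edges from $H_i$ to the unused pool, using the ear/path structure of the DeVos--McDonald--Pivotto extension so that $H_i$ stays connected. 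Removing a label never destroys balance at other vertices, which is the same observation as in Lemma~\ref{lem:strict-extn}. I expect making this surplus-removal step rigorous to be the hardest part of the proof.
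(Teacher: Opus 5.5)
\textbf{There is a genuine gap: balancing at $s_0$ is never established, and the repairs you sketch do not close it.} Your reduction is sound everywhere else. In $G^+$ every edge at $v$ is one of the parallel edges $vs_0$, and Theorem~\ref{thm:devosext} leaves at least $k$ edges at $v$ unused. So the conclusion ``at least $k$ unused edges at $s_0$'' is already met by fake edges and tells you nothing about the genuine edges of $s_0$. The black-box theorem does not stop the $H_i$ from using almost all genuine edges at $s_0$ while some $H_j$ meets $s_0$ in a single genuine edge. In that case $s_0$ is not balanced in $G$.

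Your proposed fixes all fail:
\begin{itemize}
\item Spreading the $N$ edges of $v$ over several anchors only creates several vertices whose guarantee is partly or wholly absorbed by fake edges.
\item Running twice with different anchors produces two unrelated packings, and there is no reason either one balances its own anchor.
\item Surplus removal has no footing. The theorem gives you no ear structure to work with, and an $H_i$ with five edges at $s_0$ may be a minimal $S$-tree in which each of those edges is a bridge leading to different terminals. Then no edge at $s_0$ can be dropped.
\end{itemize}

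\textbf{The missing idea} is to make the degree-$N$ vertex a contraction of genuine edges rather than an added gadget, and to pay for this with a second application of the theorem. This is the paper's proof:
\begin{enumerate}
\item Delete edges until $S$ is minimally $N$-edge-connected. The deleted edges are simply unused at the end, which can only help balancing.
\item Take an $S$-cut of exactly $N$ edges with sides $C_1$ and $C_2$, and contract to get $C_1+v_2$ and $C_2+v_1$.
\item Apply Theorem~\ref{thm:devosext} on $C_1+v_2$ with each label once at $v_2$.
\item Copy the resulting labels to $v_1$. Every label occurs there and at least $k$ edges are unlabeled.
\item Apply the theorem again on $C_2+v_1$, make that extension strict by Lemma~\ref{lem:strict-extn}, and take edge-unions.
\end{enumerate}
Now every edge at a terminal is an edge of $G$, since edges to $v_1$ or $v_2$ are the cut edges. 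Strictness guarantees that a cut edge left unused on one side is not picked up on the other. Hence the $k$ unused edges promised at each terminal really are unused in $G$.

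Your remark that $k\le 2$ must be handled separately via Theorem~\ref{thm:wwextthem} (valid since $5k+4\ge 6.5k$ there) is a fair point the paper glosses over. However, that case also needs this two-sided construction rather than the gadget.
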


\begin{proof}
    We can assume $|S| \geq 2$ otherwise this is trivial. Delete edges from $G$ until $S$ is minimally $2 \lceil (5k+3)/2 \rceil$-edge connected. Take a minimum edge cut $Z'$ of $S$ and break the remaining graph into components $C_1$ and $C_2$. Condense $C_1$ and $C_2$ to $v_1$ and $v_2$ respectively, and examine $C_1+v_2$ and $C_2+v_1$. $v_2$ is clearly $2 \lceil (5k+3)/2 \rceil$-edge-connected to $S \cap C_1$, as there must be $2 \lceil (5k+3)/2 \rceil$-edge-disjoint paths from $C_1$ to $C_2$. Note that $|Z'|=2 \lceil (5k+3)/2 \rceil$ since we assumed minimal connectivity. Using DeVos's extension theorem, we can pack $k$ edge disjoint $(S \cap C_1) \cup v_2$-subgraphs in $C_1+v_2$ such that each vertex in $(S \cap C_1) \cup v_2$ has at least $k$ incident edges unused, and hence is balanced. We can then label the edges at $v_1$ accordingly, noting that all $k$ labels must appear once on $v_1$ and $v_1$ must have $k$ unlabeled edges. This allows us to apply the same extension theorem to $C_2+v_1$ to get $k$ edge disjoint $(S \cap C_2) \cup v_1$-subgraphs that extend $v_1$ and balance $(S \cap C_2) \cup v_1$. By Lemma \ref{lem:strict-extn}, we can assume that the extension at $v_1$ is a strict extension while retaining the properties listed above. Combine the packings by taking the edge-union of the $i$th subgraph on $C_2+v_1$ and the $i$th subgraph on $C_1+v_2$. We then obtain $k$ edge disjoint $S$-subgraphs that balance $S$, as desired.
\end{proof}

\begin{lemma}\label{lem:5.5conn} 
    If $k \geq 8$, $(S \cup R-v)$ is at most $(\lfloor 5.5k \rfloor-1)$-edge-connected in $G'$. 
\end{lemma}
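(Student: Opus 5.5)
We argue by contradiction, in the same way as the proof of Lemma \ref{lem:6kconn}. The only change is that we replace the ``pack $2k$ and pair them up'' step by the balancing corollary of DeVos, McDonald and Pivotto. So suppose $k\ge 8$ and that $S\cup R-v$ is $\lfloor 5.5k\rfloor$-edge-connected in $G'$.

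\textbf{Step 1: compare with the corollary's threshold.} The corollary needs $5k+4$ edge-connectivity. For $k\ge 8$ we have $\lfloor 5.5k\rfloor \ge 5k+\lfloor k/2\rfloor \ge 5k+4$, so the hypothesis suffices. This is the only place $k\ge 8$ is used.

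\textbf{Step 2: pass to a loopless graph.} $G'$ may carry self loops at vertices of $R$. Deleting all self loops does not change edge-connectivity. Balancing is judged only on edges that are used, and unused self loops carry no label. So we may work in the loopless graph $G'$ minus its self loops.

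\textbf{Step 3: pack and balance.} Apply the corollary with terminal set $S\cup R-v$ (note $|S\cup R-v|\ge 2$ since $|S|\ge 3$). This gives $k$ edge-disjoint $(S\cup R-v)$-subgraphs $H_1,\dots,H_k$ that balance $S\cup R-v$ in $G'$.

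\textbf{Step 4: extend at $v$.} As in Lemma \ref{lem:6kconn}, for each edge $vx$ with label $i$ we add $vx$ to $H_i$. If $x\in W$, we also add the other two edges at $x$; both go to $S$ because $V(G)-S$ is independent, so $x$ has exactly three neighbours, all in $S$. Labels are disjoint, and edges at $v$ and $W$ lie outside $G'$, so the $H_i$ stay edge-disjoint.

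We then check the two extension conditions. First, every edge of $P_i$ lies in $H_i$ by construction. Second, $H_i-v$ is connected, because every neighbour of $v$ in $H_i$ is joined to the connected graph spanned on $S\cup R-v$. In particular $H_i$ connects $S$ and $v$ is not a cut vertex of $H_i$.

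\textbf{Step 5: balancing.} At $v$, each label appears at least twice (by our standing assumption), so the resulting partition of $E(v)$ is balanced.

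For a vertex $u\in S\cup R-v$, the edges added in Step 4 only give additional labels at $u$. In $G'$ there was a balanced partition of $E_{G'}(u)$ extending the labels used there. The new edges at $u$ each already carry a specified label, and we simply put each into its own part. Every part then still has size at least $2$, so the subpartition at $u$ remains balanced.

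We also need $v$ to keep no label-free edges that would violate balance. Any unlabeled edges at $v$ were given labels at the start, so every edge at $v$ is used and this causes no problem.

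\textbf{Conclusion and main obstacle.} We obtain $k$ edge-disjoint $S$-subgraphs that extend $P_k(v)$ and balance $S\cup R$, contradicting the choice of $G$ as a counterexample. The main point to verify carefully is that ``balanced'' in the corollary means a balanced subpartition that survives the added edges at $u$. Since balance is preserved under adding labeled edges, as argued in Step 5, I expect this to go through.
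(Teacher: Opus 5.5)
Your proof is correct and follows the paper's argument. You assume $\lfloor 5.5k\rfloor$-edge-connectivity, use $\lfloor 5.5k\rfloor = 5k+\lfloor k/2\rfloor \ge 5k+4$ to invoke the DeVos--McDonald--Pivotto balancing corollary on $S\cup R-v$ in $G'$, and then extend at $v$ exactly as in Lemma~\ref{lem:6kconn}. Your extra details fill in what the paper dismisses as ``easy'': discarding self loops to meet the corollary's loopless hypothesis, and absorbing the newly labeled edges at $u\in S\cup R-v$ into the matching parts of a balanced partition. The only remaining point is to first trim each $H_i$ to its component containing $S\cup R-v$, which only removes labels and so preserves balance.
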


\begin{proof}
    Assume that $S \cup R-v$ in $G'$ is $\lfloor 5.5k \rfloor$-edge connected. Since $k \geq 8$, we know that $S \cup R-v$ in $G'$ is at least $5k+4$ edge-connected. Then, by the previous lemma, we can pack $k$ edge disjoint $(S \cup R)-v$ subgraphs that balance $(S \cup R)-v$ on $G'$. Then extending $v$ in $G$ is easy, giving us $k$ edge disjoint $S \cup R$ subgraphs which balance $S\cup R$ and extend $v$. 
\end{proof}

It follows that when $k \geq 8$, we have that $|Z| \leq 5.5k-1$. We will still use the bound $|Z| \leq 6k-1$ for the rest of this proof, but using the bound $|Z| \leq 5.5k-1$ will yield a bound of $31k$ rather than $32k$ on the Steiner Forest packing problem with the analysis otherwise the same.

\subsection{Extension Theorem Proof Part 3: Constructing $S$-subgraphs with $|Z| < 6k$}

We will now handle the case where $S \cup R-v$ in $G'$ is not highly connected, or when $|Z| \leq 6k-1$, which will resolve the proof of the Extension Theorem. To do this we delete $Z$ to split $S \cup R-v$ into two highly edge-connected components, and we pack $k$ subgraphs on both components and merge them together for the final packing. We will hence need a few lemmas establishing the structure of $G'$ before we pack the $S$-subgraphs. We will first start by proving the deletion of $Z$ will only split $G'$ into 2 connected components. 

To assist us in proving this, we will first construct a collection of paths for each $x \in S \cup R-v$ with the properties below. 

\begin{lemma}\label{lem:P(x)}
    Given a vertex in $x \in S \cup R-v$, there exists a collection $P(x)$ of $(Q-4)k$ edge disjoint paths in $G$ from $v$ to $x$ that remain connected after deletion of $v$ and $W$.  
\end{lemma}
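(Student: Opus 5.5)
The plan is to get $(Q-4)k$ edge-disjoint $v$–$x$ paths from Menger's Theorem, and then reroute them so that each path meets $W$ at most once, namely as its second vertex. Then deleting $v$ and $W$ only removes an initial segment of each path, and what remains is still a connected path ending at $x$.

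\textbf{Step 1: obtain the paths.} By Lemma \ref{lem:Q-4-conn}, $S \cup R$ is $(Q-4)k$-edge-connected in $G$. Both $v$ and $x$ lie in $S \cup R$, so Menger's Theorem gives a family of $(Q-4)k$ edge-disjoint $v$–$x$ paths. Among all such families, I choose one minimizing the total number of edges; in particular every path is simple. Self-loops never lie on a simple path, so they can be ignored.

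\textbf{Step 2: recall the structure of $W$.} Each $w \in W$ lies in $V(G)-S-R$. So $w$ has degree exactly $3$ and exactly three neighbours, one of which is $v$; hence there is exactly one edge $vw$. Since $V(G)-S$ is independent (Lemma \ref{lem:indset}), the other two neighbours of $w$ lie in $S$. Also $x \notin W$, because $x \in S \cup R$.

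\textbf{Step 3: rerouting, the key step.} Suppose some path $P$ in the family passes through $w \in W$ other than as its second vertex. Since $P$ is simple and $x \neq w$, the vertex $w$ is interior to $P$, entered by an edge $aw$ and left by an edge $wb$, with $a \neq v$.
\begin{itemize}
\item First, the edge $vw$ is used by no path of the family. If $P$ used $vw$, it would use three edges at $w$, impossible for a simple path. If another path $P'$ used $vw$, then $P'$ would have to continue from $w$ along $aw$ or $wb$, since $w \neq x$. Both of these edges are already used by $P$, contradicting edge-disjointness.
\item So I replace the prefix of $P$ from $v$ to $w$ by the single edge $vw$, and keep the suffix of $P$ from $w$ to $x$. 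The new path uses $vw$ together with a subset of the edges of $P$, so the family stays edge-disjoint.
\item The discarded prefix has at least two edges: it ends with $aw$ and $a \neq v$. Hence the total number of edges strictly decreases, contradicting the choice in Step 1.
\end{itemize}
Therefore, in the minimal family, every path meets $W$ either not at all or only at its second vertex.

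\textbf{Step 4: conclude.} For each path, deleting $v$, and the single vertex of $W$ in second position if there is one, removes an initial segment. What remains is a subpath ending at $x$, which is connected. This yields $P(x)$ with the required property.

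The main obstacle is Step 3, specifically the claim that $vw$ is free. This rests entirely on $w$ having exactly three incident edges (degree $3$ with no loops or multi-edges to $v$), which is why the earlier reduction making every vertex of $V(G)-S-R$ have degree $3$ with three distinct neighbours is essential. The strict decrease in total length is what makes the minimality argument well-founded.
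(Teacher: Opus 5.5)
Your proof is correct and follows essentially the same route as the paper. You take a length-minimal family of $(Q-4)k$ edge-disjoint $v$--$x$ paths given by Lemma~\ref{lem:Q-4-conn}, use the fact that each $w \in W$ has exactly three incident edges to show $vw$ is unused, and shortcut any path that meets $w$ at a later position through $vw$. Your version spells out the edge-disjointness check that the paper's terser argument leaves implicit.
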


\begin{proof}
    Note there must exist a collection $(Q-4)k$ edge disjoint paths from $v$ to $x$ on $G$ by Lemma \ref{lem:Q-4-conn}, and let $P(x)$ be such a collection of $(Q-4)k$ edge disjoint paths such that the sum of lengths of the paths is minimized. Now, assume, for contradiction, that such a path becomes disconnected after deletion of $v$ and $W$. Then such a path has a vertex $w \in W$ as an internal vertex which was not adjacent to $v$ on the path. Then $vw$ could not have been used in $P(x)$ as all vertices in $W$ have a degree of 3 in $G$, so we can shorten this path by cutting off the first part of the path that starts at $v$ and traverses to $G'$ and then to $w$, by simply replacing it with $vw$. Hence $P(x)$ exists for any $x \in S \cup R-v$, as desired. 
\end{proof}

For each vertex $x \in S \cup R-v$, assign $P(x)$ to be a collection of paths as described in the previous lemma. Define $P'(x)$ to be the truncated collection of paths in $P(x)$ after the deletion of $v$ and $W$, that is, $P'(x) = \{P-v-W | P \in P(x)\}$. Lau refers to these paths as \emph{diverging paths} in his thesis. 

\begin{lemma} 
    $G'-Z$ has 2 connected components. 
\end{lemma}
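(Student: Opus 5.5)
The plan is to show two things: every component of $G'-Z$ contains a vertex of $S \cup R - v$, and there are at most two such components. Since $Z$ is a cut separating $S \cup R - v$ in $G'$, there are at least two components, so together these give exactly two. The main tool is a counting argument comparing cut sizes in $G'$ with those in $G$, where Lemma \ref{lem:Q-4-conn} gives $(Q-4)k$-edge-connectivity of $S\cup R$.

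\textbf{No terminal-free components.} Suppose some component $D$ of $G'-Z$ contains no vertex of $S\cup R-v$. Since $V(G')-S$ is independent, the vertices of $G'-S-R$ have degree $3$ with three distinct neighbours in $S\cup R$, and these structural facts are inherited from $G$. Hence $D$ is a single vertex $w$ all three of whose edges lie in $Z$. Put $w$ into a component that contains at least as many of its neighbours as any other component, and restore the edges from $w$ to that component.
\begin{itemize}
\item If all neighbours of $w$ lie in one component, this removes $3$ edges from $Z$.
\item If the neighbours lie in two components, it removes at least $2$ edges from $Z$.
\item If the neighbours lie in three components, it removes $1$ edge from $Z$.
\end{itemize}
In every case the terminal classes are unchanged, so we still have a cut of $S\cup R-v$ that is strictly smaller. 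This contradicts the minimality of $Z$. (If one prefers to read the lemma as counting only components containing terminals, this step can simply be skipped.)

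\textbf{At most two terminal components.} Suppose $G'-Z$ has components $C_1,C_2,C_3,\dots$, each containing a vertex of $S\cup R-v$. Fix $i$. Then $v\notin C_i$, and $C_i$ contains a terminal, so $\delta_G(C_i)$ is an edge cut separating $S\cup R$ in $G$. By Lemma \ref{lem:Q-4-conn}, $|\delta_G(C_i)|\ge (Q-4)k$. The edges of $\delta_G(C_i)$ are of two kinds:
\begin{itemize}
\item edges of $\delta_{G'}(C_i)\subseteq Z$;
\item edges from $v\cup W$ into $C_i$.
\end{itemize}
Summing over three of the components, each edge of $Z$ is counted at most twice, contributing at most $2|Z|\le 12k-2$. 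For the second kind, each edge at $v$ either goes directly to $S\cup R$, or goes to some $w\in W$. By the degree-$3$ lemma and Lemma \ref{lem:indset}, such a $w$ sends exactly $2$ further edges into $S\cup R\subseteq V(G')$. So at most $2d(v)=2Qk$ edges leave $v\cup W$ into $G'$. This gives
\[
3(Q-4)k \;\le\; 2Qk + 12k - 2,
\]
that is, $Qk \le 24k-2$, which contradicts $Q\ge 32$.

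The main care point is the bookkeeping of edges leaving $v\cup W$. The key observation is that each $w\in W$ has degree exactly $3$ with one edge to $v$, so it can contribute at most $2$ edges per edge at $v$. One also has to check that $\delta_G(C_i)$ really is a cut of $S\cup R$: this holds because $v\in S$ lies outside $C_i$. The diverging paths $P'(x)$ of Lemma \ref{lem:P(x)} are not needed for this count. If a finer statement were required, they would be the natural substitute: each $C_i$ receives about $(Q-4)k$ paths from $v$, and these can only enter $C_i$ through $Z$ or through $v\cup W$.
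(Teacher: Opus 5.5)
Your proof is correct, but it takes a different route from the paper's.

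\textbf{The paper's route.} It first reduces to showing that $G'$ itself has at most two components. If $G'$ is connected, a minimum terminal cut automatically leaves exactly two components. If $G'$ already has two terminal-containing components, then $Z=\emptyset$. It then rules out three components of $G'$ using the diverging paths of Lemma~\ref{lem:P(x)}. Choose terminals $u_1,u_2,u_3$ in different components. The families $P(u_1),P(u_2),P(u_3)$ each use $(Q-4)k$ of the $Qk$ edges at $v$, so for $Q\ge 13$ some edge $vw$ starts a path in all three. If $w\in S\cup R$, all three $u_i$ are joined through $w$ in $G'$. If $w\in W$, two of the paths must leave $w$ through the same one of its two other neighbours, so those two $u_i$ are joined in $G'$.

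\textbf{Your route.} You work directly with the components of $G'-Z$. You remove terminal-free components using the degree-$3$/independence structure and the minimality of $Z$. You then compare $\sum_i|\delta_G(C_i)|\ge 3(Q-4)k$, from Lemma~\ref{lem:Q-4-conn}, with the upper bound $2|Z|+2Qk$.

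\textbf{Comparison.} Your count is essentially the Menger dual of the paper's path argument. It avoids both the diverging paths and the separate fact about minimum cuts leaving two components. The price is that you need $|Z|\le 6k-1$ from Lemma~\ref{lem:6kconn}, and the threshold becomes $Q\ge 24$ instead of $Q\ge 13$. This is harmless, since $Q\ge 32$. If you ran your count on the components of $G'$ rather than $G'-Z$, the $2|Z|$ term would vanish and you would recover exactly the paper's condition $3(Q-4)k>2Qk$. The paper's route has the advantage of setting up the families $P(x)$, which it reuses in Lemma~\ref{lem:commonpaths}.

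\textbf{A small imprecision.} The two further edges at each $w\in W$ go into $S-v$, not into ``$S\cup R\subseteq V(G')$'' as you wrote. This holds because $w$ has three distinct neighbours, so only one of its edges goes to $v$.
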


\begin{proof}
    This proof imitates the proof Lemma 3.7.4 in \cite{lau-thesis}. Note that it is sufficient to prove that $G'$ has at most two components. This is because if $G'$ has exactly two components, $|Z|$ will be empty and this will trivially be true. Otherwise, if $G'$ is connected, $G'-Z$ will have exactly two components by minimality of $Z$. Take any component of $G'$, and note that for each component, there must be a vertex that was incident to $v$ or $W$ on $G$. Since vertices in $W$ are only incident to vertices in $S$ by Lemma \ref{lem:indset}, and all neighbors of $v$ in $G'$ are in $S \cup R$, each connected component must contain at least one vertex in $S \cup R$. 

    Suppose that there are at least three components in $G'$, and select $u_1,u_2,u_3$ each from a different component such that all three are vertices in $S \cup R$. Then take the collection $P(u_i)$ of $(Q-4)k$ edge disjoint paths from $u_i$ to $v$ as defined in Lemma \ref{lem:P(x)}. Since $v$ has $Qk$ incident edges in $G$, there must be some common vertex $w \in N_G(v)$ which $u_1,u_2,u_3$ all end in $vw$, as long as $Q \geq 13$. If $w \in S \cup R$, then $u_1,u_2,u_3$ are still all connected in $G'$, which is a contradiction. Otherwise, we have that $w \not \in S \cup R$ and hence $w \in W$. Those three paths must've traversed through a neighbor of $w$ that is not $v$, but since $w \not \in S \cup R$, then $w$ must have three neighbors, including $v$. Hence, two paths will need to traverse through the same neighbor of $w$ to reach $v$, and hence still remain connected in $G'$, a contradiction. 
\end{proof}

We denote the components as $C_1$ and $C_2$. For each $w_i \in W$, note that it has exactly two neighbors $x_i$ and $y_i$ in $G'$, which we call a couple. Let $X_1$ be the collection of couples where both vertices are in $C_1$, $X_2$ be the collection of couples where both vertices in $C_2$, and let $X_C$ be the collection of couples with one vertex in $X_1$ and another vertex in $X_2$. We refer to the couples in $X_C$ as \emph{crossing couples}. Let $B_i$ be $C_i \cap N(v) \cap (S \cup R)$, and let $c(B_i)$ denote the number of edges between $v$ and $B_i$. Crossing couples are important in our proof, as they can be used to connect vertices in $C_1$ and $C_2$ when constructing our $S$-subgraphs. We therefore wish to prove that we can find a large number of crossing couples in $G$, which we do so with the next two lemmas.

\begin{lemma}
    $C_1$ and $C_2$ both contain vertices in $S$.
\end{lemma}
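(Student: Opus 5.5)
We argue by contradiction. Suppose, without loss of generality, that $C_2$ contains no vertex of $S$, so that $S-v \subseteq C_1$ and every vertex of $S \cup R$ in $C_2$ lies in $R$. We already know from the previous lemma's argument that each component of $G'-Z$ contains a vertex of $S \cup R$, so $C_2 \cap R \neq \emptyset$. The plan is to find a small edge cut in $G$ that isolates part of $R$ from $S$, and then to obtain a counterexample of smaller size, or with smaller $|R|$, exactly as in the proofs of Lemma \ref{lem:indset} and the self-loop lemma. Concretely, we consider the cut $\delta_G(C_2)$. It consists of:
\begin{itemize}[label={}]
\item $Z$, of size at most $6k-1$;
\item the $c(B_2)$ edges from $v$ to $B_2$;
\item the edges from $C_2$ to vertices of $W$.
\end{itemize}

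The first step is to bound the number of couples meeting $C_2$ using the diverging paths. Pick $r \in C_2 \cap R$ and take the family $P(r)$ of $(Q-4)k$ edge-disjoint $v$--$r$ paths from Lemma \ref{lem:P(x)}. Each such path, after deleting $v$ and $W$, stays connected, so its truncation lies in $C_2$ unless it uses an edge of $Z$. Hence at least $(Q-4)k-|Z| \ge (Q-10)k$ of these paths leave $v$ and enter $C_2$ directly. They do so either through an edge to $B_2$, or through some $w\in W$ whose relevant neighbour lies in $C_2$, so that $w$'s couple lies in $X_2 \cup X_C$.

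Symmetrically, fix $s \in S-v \subseteq C_1$. Since $S$ is $Qk$-edge-connected, there are $Qk$ edge-disjoint $v$--$s$ paths, which we again shorten as in Lemma \ref{lem:P(x)}. All but at most $|Z|$ of them end in $C_1$. Because $d(v)=Qk$, every edge at $v$ begins one of these paths, and this forces
\[
c(B_2)+|X_2| \le |Z| \le 6k-1 .
\]
Indeed, a path starting with an edge to $B_2$, or through a couple of $X_2$, must cross $Z$ to reach $C_1$.

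The second step is to assemble the cut. We have
\[
|\delta_G(C_2)| \le |Z| + c(B_2) + 2|X_2| + |X_C|.
\]
We want to show this is at most about $Qk$, and in fact small compared with the degree requirement on $R$. The main obstacle is controlling $|X_C|$, since crossing couples contribute edges to $\delta_G(C_2)$ without passing through $Z$. For this I would try the trick from the proof of Lemma \ref{lem:indset}. Replace $G$ by the graph obtained by contracting $C_2 \cup W_2$ (where $W_2$ is the set of vertices of $W$ with a couple in $X_2$) together with the edges at $v$ to $B_2$. Alternatively, simply note that the vertices of $C_2 \cap R$ are not needed for connecting $S$.

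The third step closes the argument using minimality. Since $S \cap C_2 = \emptyset$, we delete $C_2$, add self-loops to the $W$-vertices and to $v$'s side as needed, and contract the crossing structure. By the choice of counterexample (fewer non-loop edges and vertices), the smaller instance has $k$ edge-disjoint $S$-subgraphs that extend $P_k(v)$ and balance $(S\cup R)\cap C_1$. Each $r\in C_2\cap R$ has degree at least $(Q-2)k$ but sends at most about $|Z|+c(B_2)+|X_C| < (Q-2)k-2k$ edges out of $C_2$. We then extend the packing into $C_2$ by routing, for each label $i$ present on the boundary, two paths inside $C_2$; this is possible by Lemma \ref{lem:treepacking}, since $C_2\cap R$ is highly connected inside $C_2$. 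This balances $C_2\cap R$ and gives a contradiction. If this routing fails, the fallback is a direct contradiction: $(S\cup R)$ is $(Q-4)k$-edge-connected, which bounds $|\delta_G(C_2)|$ from below by $(Q-4)k$. Combining this with the upper bound on $|\delta_G(C_2)|$ and $c(B_2)+|X_2|\le 6k-1$ should force $|X_C|$ large, which then has to be checked against $d(v)=Qk$ and the distribution of labels at $v$.
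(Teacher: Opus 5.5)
There is a genuine gap: the proposal never reaches a contradiction. Your first step is sound. The $Qk$ edge-disjoint $v$--$s$ paths give $c(B_2)+|X_2|\le |Z|$. The family $P(r)$ shows that at least $(Q-4)k-|Z|$ edges at $v$ go either to $B_2$ or to a vertex of $W$ whose couple meets $C_2$.

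After that, the argument stalls. You treat $|X_C|$ as the main obstacle, and the remaining steps do not close:
\begin{itemize}
\item The bound $|Z|+c(B_2)+|X_C|<(Q-4)k$ in your third step is asserted without proof; nothing you have bounds $|X_C|$ from above.
\item Deleting $C_2$ destroys the hypotheses you need in order to invoke minimality. The edges from $v$ to $B_2$ disappear, so $d(v)$ drops below $Qk$. Some of those edges may carry labels of $P_k(v)$ that must still be extended. $S$ may also lose $Qk$-edge-connectivity, since paths through $C_2$ are lost.
\item The appeal to Lemma~\ref{lem:treepacking} presupposes high connectivity of $C_2\cap R$ inside $C_2$, which has not been established. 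The paper's connectivity inside the $C_i$ (Lemma~\ref{lem:commonpaths}) comes later and uses vertices of $S$ on both sides, so using it here would be circular.
\item The fallback only shows that $|X_C|$ is large, which is not contradictory by itself.
\end{itemize}

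The missing idea is Lemma~\ref{lem:indset}. Since $W\subseteq V(G)-S$ and $V(G)-S$ is independent, every neighbor of a vertex of $W$ lies in $S$. So if $C_2\cap S=\emptyset$, no couple meets $C_2$, i.e.\ $X_2=X_C=\emptyset$, and the obstacle you identify does not exist. In fact $C_2$ is connected and all its vertices lie in the independent set $V(G)-S$, so $C_2$ is a single vertex $r\in R$.

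Plugging $X_2=X_C=\emptyset$ into your own first step gives $c(B_2)\ge (Q-4)k-|Z|$ and $c(B_2)\le |Z|$. Hence $(Q-4)k\le 2|Z|\le 12k-2$, a contradiction since $Q\ge 32$.

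The paper argues in the same spirit, counting from $r$'s side. Since $C_2=\{r\}$ and $r$ has no $W$-neighbors, its at least $(Q-4)k$ non-loop edges either go to $v$ or lie in $Z$. So at most $4k+|Z|$ edges at $v$ avoid $r$. Each of the $(Q-4)k$ edge-disjoint $s_1$--$r$ paths must use an edge of $Z$ or one of those edges at $v$, so there are at most $4k+2|Z|\le 16k$ of them, a contradiction.
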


\begin{proof}
    Assume not, and without loss of generality let $C_2$ only contain vertices not in $S$. Since $V(G)-S$ is an independent set, then $C_2$ must consist of a single vertex from $r \in R$. Then, $r$ cannot be adjacent to any vertex in $W$, which means $r$ can only have incident edges in $Z$ and edges incident to $v$ in $G$. Since $r$ has at least $(Q-4)k$ incident edges that are not self-loops in $G$, we have that there are at least $(Q-4)k-|Z|$ edges between $v$ and $r$, leaving at most $4k+|Z|$ edges incident to $v$ that are not incident to $r$. Now, take some $s_1 \in C_1$, and note that we can find $(Q-4)k$ edge disjoint paths between $s_1$ and $r$ in $G$. All such paths must either traverse through $Z$ or $v$ to reach $r$, and if such a path traverses through $v$, it must traverse an edge incident to $v$ but not incident to $r$. Hence, at most $4k+2|Z| \leq 16k$ paths must exist. But as long as $Q > 20$, then there were more than $16k$ paths between $s_1$ and $r$, a contradiction. Hence $C_1$ and $C_2$ must both contain vertices in $S$. 
\end{proof}

\begin{lemma}\label{lem:crossingboundS}
 $|X_C| \geq Qk-2|Z|$. 
\end{lemma}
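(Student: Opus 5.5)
Let $s_1 \in S\cap C_1$ and $s_2 \in S\cap C_2$; both exist by the previous lemma. Since $S$ is $Qk$-edge-connected in $G$, there is a collection $\mathcal{P}$ of $Qk$ edge-disjoint $s_1$--$s_2$ paths in $G$. The plan is to classify each path of $\mathcal{P}$ by how it gets from $C_1$ to $C_2$, using the fact that $G - v - W = G'$ and $G'-Z$ has exactly the two components $C_1$ and $C_2$. Every edge of $G$ between $C_1$ and $C_2$ lies in $Z$. Hence any path from $s_1$ to $s_2$ must do one of two things: use an edge of $Z$, or pass through $v \cup W$. So I will bound how many paths can do each and solve for the number that must use crossing couples.

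\emph{Paths using $Z$.} These are edge-disjoint, so at most $|Z|$ of the paths use an edge of $Z$. The remaining paths, at least $Qk-|Z|$ of them, each contain a segment that leaves $C_1$ and enters $C_2$ through $v \cup W$. Since $V(G)-S$ is independent (Lemma \ref{lem:indset}), every $w\in W$ is adjacent only to $v$, $x_w$ and $y_w$.

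\emph{Paths through $W$ only.} Consider a path that enters $C_2$ through a $W$-vertex $w$ without visiting $v$ in that segment. Then $w$ is entered from one of its couple and left to the other, so $w$ is a crossing couple's vertex. Because $w$ has degree $3$ and the paths are edge-disjoint, each $w$ can carry at most one such path.

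\emph{Paths through $v$.} The harder part is paths whose crossing segment visits $v$. Such a segment goes from $C_1$ to $v$ and from $v$ to $C_2$, possibly via $W$-vertices. I want to charge each such path to crossing couples too, or else bound the number of such paths by $|Z|$. Counting $v$-incident edges only gives $Qk/2$, which is too weak, so I need something better. My first attempt is this: since $v$ has degree exactly $Qk$ and $S$ is $Qk$-edge-connected, the cut $\delta(v)$ is a minimum cut. Therefore I will choose $\mathcal{P}$ to minimise the total number of edges at $v$ used, or rather argue via a cut.

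I think the cleanest route is a direct cut count. Consider the set $A = C_1 \cup \{v\} \cup \{w \in W : \text{both neighbours of } w \text{ in } G' \text{ lie in } C_1\} \cup \{w \in W: w \text{ non-crossing}\}$, placing each non-crossing $w$ on the side of its couple. Take $\delta(A)$ with $v$ placed on the side giving the smaller count. The edges of $\delta(A)$ are the edges of $Z$, one edge per crossing couple, and the edges from $v$ to the opposite side, namely edges to $B_2$ and to $W$-vertices whose couples lie in $C_2$. Now put $v$ on side $1$, and alternatively on side $2$. The two resulting cuts sum to at most $2|Z| + 2|X_C| + d(v) = 2|Z|+2|X_C|+Qk$. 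Each cut separates $s_1$ from $s_2$, so each is at least $Qk$, and adding them gives $2Qk \le 2|Z| + 2|X_C| + Qk$. Rearranging yields $|X_C| \ge Qk/2 - |Z|$, which is only half of what we need.

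To get the full bound, I will instead compare a $Qk$-cut with $v$ itself. Take $A$ with $v$ on side $2$ and crossing $w$'s also on side $2$. The cut is then $Z$, the $v$-edges to side $1$ (those to $B_1$ and to $W$ with couples in $C_1$), and the $X_C$ edges to $C_1$. Its size is at least $Qk$. By symmetry, the other choice gives $|Z| + (\text{$v$-edges to side }2) + |X_C| \ge Qk$. Adding these and using that the $v$-edges to side $1$ and side $2$ together number at most $Qk - |X_C|$ gives only $|X_C|\ge Qk-2|Z|$ after rearranging if the crossing couples' $v$-edges are not counted twice. Verifying this bookkeeping is the main obstacle. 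I expect to track carefully that each crossing $w$ contributes its $v$-edge to neither side's $v$-count and exactly one couple edge to each cut, yielding $2Qk \le 2|Z| + 2|X_C| + (Qk-|X_C|)$, i.e.\ $|X_C| \ge Qk - 2|Z|$.
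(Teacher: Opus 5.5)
Your final cut argument is correct, and it takes a genuinely different (dual) route from the paper's. The paper takes $Qk$ edge-disjoint $v$--$u_1$ paths for some $u_1\in S\cap C_1$. Because $d(v)=Qk$, every edge at $v$ must be used. So a path that leaves $v$ toward $C_2$ cannot return to $C_1$ through a crossing $w$, since that would leave $vw$ unusable by any path. Hence the $|X_2|+c(B_2)$ paths starting toward $C_2$ use distinct edges of $Z$, which gives $|Z|\ge |X_2|+c(B_2)$. Symmetrically $|Z|\ge |X_1|+c(B_1)$, and adding these with $Qk=|X_C|+|X_1|+|X_2|+c(B_1)+c(B_2)$ finishes. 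You instead write the cuts down explicitly. Take $A_1=C_1\cup\{w\in W:\text{the couple of }w\text{ lies in }C_1\}$, keeping the crossing $w$'s on $v$'s side. Then $|\delta(A_1)|=|Z|+c(B_1)+|X_1|+|X_C|\ge Qk$, and the symmetric set gives the analogous bound. The bookkeeping you hedged on is exactly right. In fact each cut alone, after subtracting the decomposition of $d(v)$, is precisely one of the paper's two inequalities, so your summation is the same final step as the paper's. The cut version is shorter and avoids the somewhat delicate blocking argument about which edges at $v$ become unusable. The paper's path version matches the diverging-path style it reuses in the common-paths lemma. In a write-up, drop the abandoned path classification and the first cut attempt. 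That attempt loses a factor of $2$ by charging the $v$-edges of crossing $w$'s in addition to their couple edges. Simply state the two vertex sets and their boundary counts.
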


\begin{proof}
    Take some $u_1 \in S \cap C_1$ and note that there are $Qk$ edge disjoint paths from $v$ to $u_1$ in $G$. Note that there are $|X_2|+c(B_2)$ of these paths that start from $v$ and go to $C_2$ before reaching $u_1$. Hence, these paths must either traverse through $Z$ or traverse through $W$ to reach $C_1$. However, if such a path reaches $C_1$ from $C_2$ by traversing through some $w \in W$, then this blocks off other paths from using the edge $vw$ as $w$ has degree 3 in $G$. Moreover, this path would also block off $vw'$, where $w'$ was the first vertex traversed after $v$ in the path. Since $w'$ is traversed before or during the first time this path enters $C_2$, and $w$ is traversed after the path leaves $C_2$, then $w' \not= w$. This means that the remaining $Qk-1$ edge-disjoint $v,u_1$-paths can only utilize $Qk-2$ incident edges at $v$, a contradiction. Hence, these $|X_2|+c(B_2)$ paths that initially traverse to $C_2$ must all pass through $Z$ at some point to reach $u_i$, so $|Z| \geq |X_2|+c(B_2)$. Then, by a similar argument, $|Z| \geq |X_1|+c(B_1)$ as well. Note that $Qk = |X_C|+|X_1|+|X_2|+c(B_1)+c(B_2)$, giving us $Qk \leq |X_C|+2|Z| $ and hence $|X_C| \geq Qk-2|Z|$, as desired.
\end{proof}




Now that we have established that we have a large number of crossing couples to help merge the subgraphs in both components together, we now need to prove that in each component $S \cup R$ is highly edge-connected, so we can construct the packing of $k$ subgraphs in each component. To do this, we introduce the definition of a \emph{common path}.

\begin{definition}
    We say $v_1$ and $v_2$ have $\lambda$ \emph{common paths} if there exist $\lambda$ edge disjoint paths starting from $v_1$, $\lambda$ edge disjoint paths starting from $v_2$, and a bijection between the paths such that each pair of paths ends at the same vertex. We allow the empty path (from a vertex to itself) to count as a path, as long as the other vertex has a corresponding path to that vertex. 
\end{definition}

The following lemma shows us how to get edge connectivity from a set of common paths. This was proven in Lau's thesis and used as a black box in our proof, however, we will prove it again for the sake of completeness.

\begin{lemma}[Lau 3.7.5 \cite{lau-thesis}]\label{lem:commonconn}
    For any arbitrary graph $G$, if $v_1$ and $v_2$ have $2\lambda+1$ common paths in $G$, there exist $\lambda+1$ edge disjoint paths from $v_1$ to $v_2$ in $G$. 
\end{lemma}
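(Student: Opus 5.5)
We argue by Menger's theorem: it suffices to show that every edge cut separating $v_1$ from $v_2$ has at least $\lambda+1$ edges. Let $F$ be any set of at most $\lambda$ edges, and suppose for contradiction that $v_1$ and $v_2$ lie in different components of $G-F$.

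Fix the $2\lambda+1$ common paths, that is, the paths $A_1,\dots,A_{2\lambda+1}$ from $v_1$ and $B_1,\dots,B_{2\lambda+1}$ from $v_2$, where $A_j$ and $B_j$ end at the same vertex $z_j$. The $A$-paths are mutually edge-disjoint, and so are the $B$-paths; an $A$-path and a $B$-path, however, may share edges. Each edge of $F$ therefore meets at most one $A$-path and at most one $B$-path. Call a pair $j$ \emph{hit} if $A_j$ or $B_j$ contains an edge of $F$. Then at most $|F|\le\lambda$ indices are hit through their $A$-path and at most $\lambda$ through their $B$-path, so at most $2\lambda$ pairs are hit.

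Since there are $2\lambda+1$ pairs, some index $j$ is not hit. Both $A_j$ and $B_j$ then survive in $G-F$. Concatenating them gives a walk $v_1 \to z_j \to v_2$ in $G-F$; the empty path is allowed as one of the two, and a walk suffices for connectivity. So $v_1$ and $v_2$ are connected in $G-F$, a contradiction.

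Hence the minimum $v_1$–$v_2$ edge cut has size at least $\lambda+1$, and Menger's theorem gives $\lambda+1$ edge-disjoint $v_1$–$v_2$ paths.

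The only delicate point is the pigeonhole count. A single edge may lie on both an $A$-path and a $B$-path, so one deleted edge can kill up to two pairs. This is exactly why $2\lambda+1$ common paths are needed rather than $\lambda+1$.
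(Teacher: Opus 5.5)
Your proof is correct and follows essentially the paper's argument: a cut of at most $\lambda$ edges destroys at most $2\lambda$ of the $2\lambda+1$ common paths, so one pair survives and connects $v_1$ to $v_2$, and Menger's theorem gives the $\lambda+1$ edge-disjoint paths. Your write-up also states explicitly why one edge kills at most two pairs (the $A$-paths are mutually edge-disjoint and so are the $B$-paths); the paper asserts this without justification.
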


\begin{proof}
    Suppose for contradiction, that there is a edge cut of size at most $\lambda$ from $v_1$ to $v_2$. Note that deleting any edge removes at most 2 of the common paths between $v_1$ and $v_2$. Then, deleting all the edges part of the edge cut means that there is at least $2\lambda+1-2\lambda = 1$ common path left, and hence $v_1$ and $v_2$ are still connected, a contradiction. 
\end{proof}

We will demonstrate that $S \cup R$ is highly edge-connected in both $C_1$ and $C_2$ by finding a set of common paths. We make use of $P$ and $P'$ as constructed in Lemma \ref{lem:P(x)}. Let $S_1$ and $S_2$ denote $S \cap C_1$ and $S \cap C_2$ respectively. Similarly, let $R_1$ and $R_2$ denote $R \cap C_1$ and $R \cap C_2$. 

\begin{lemma}\label{lem:commonpaths}
    Given $i \in \{1,2\}$, For any $a,b \in S_i \cup R_i$ where $a \not=b $, there exist $(Q-20)k$ common paths between $a$ and $b$ in $G'-Z$. 
\end{lemma}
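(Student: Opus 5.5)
We build the common paths out of the diverging paths of Lemma \ref{lem:P(x)}. Fix $i$ and $a,b \in S_i \cup R_i$. Lemma \ref{lem:P(x)} gives $(Q-4)k$ edge-disjoint $v$--$a$ paths $P(a)$ and $(Q-4)k$ edge-disjoint $v$--$b$ paths $P(b)$. After deleting $v$ and $W$, they become the truncated families $P'(a)$ and $P'(b)$, and each truncated path is still connected.

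\emph{Endpoints of truncated paths.} Each path in $P'(a)$ runs from $a$ to a vertex adjacent in $G$ to $v$ or to some $w \in W$. If the first edge of the original path is $vy$ with $y \in S \cup R$, the truncated path ends at $y$. If instead the first edge is $vw$ with $w \in W$, the truncated path ends at one of the two neighbours of $w$ in $G'$, that is, one member of the couple $\{x_w,y_w\}$.

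\emph{Pairing by first edge at $v$.} Since $d(v)=Qk$, the sets of edges at $v$ used by $P(a)$ and by $P(b)$ each have size $(Q-4)k$. They therefore share at least $(Q-8)k$ edges. For each shared edge $vy$ with $y \in S\cup R$, pair the two paths: both truncated paths end at $y$. For each shared edge $vw$ with $w \in W$, the two truncated paths end at $x_w$ or $y_w$. If they end at the same vertex, they form a common path directly. If not, we accept losing these pairs for the moment.

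\emph{Discarding pairs that leave the component.} A paired path must stay in $G'-Z$, so it must avoid $Z$ and end in $C_i$. Each edge of $Z$ lies on at most one path of $P'(a)$ and at most one path of $P'(b)$, so at most $2|Z| \le 12k$ pairs are destroyed by $Z$. A path from $a \in C_i$ that avoids $Z$ stays inside $C_i$, so its endpoint lies in $C_i$.

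\emph{Count.} Roughly, we keep $(Q-8)k - 12k = (Q-20)k$ pairs. This is exactly the target, so every remaining loss must be absorbed into this accounting.

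\emph{Main obstacle: couple mismatches.} The hard step is a shared edge $vw$ where the $a$-path ends at $x_w$ and the $b$-path ends at $y_w$. The plan is to handle this by re-pairing rather than discarding:

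\begin{itemize}
\item Because $w$ has degree $3$ and the paths are edge-disjoint, at most one $a$-path and one $b$-path use $w$ at all, and each enters $w$ only through $vw$.
\item If the endpoints differ, the $a$-path uses the edge $wx_w$ and the $b$-path uses $wy_w$.
\item When the couple is crossing, one endpoint lies outside $C_i$. That path must then cross $Z$, and this loss is already counted in the $2|Z|$ term.
\item When the couple is non-crossing ($x_w,y_w \in C_i$), we instead modify the minimal-length choice in Lemma \ref{lem:P(x)}. We choose $P(b)$ to minimise total length subject to agreeing with $P(a)$ at each $w$. Equivalently, we reroute the $b$-path to exit $w$ via the same neighbour as the $a$-path, arguing by exchange that edge-disjointness is preserved.
\end{itemize}

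If the exchange argument fails, the fallback is to charge each mismatch to an edge of $Z$ or to the slack between $(Q-4)k$ and $Qk$. In that case we would verify that the total loss stays within $16k$ beyond the $4k$ already lost.

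Once $(Q-20)k$ pairs with equal endpoints inside $G'-Z$ are in hand, they are by definition $(Q-20)k$ common paths between $a$ and $b$.
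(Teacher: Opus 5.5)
Your proposal has a genuine gap, and it is exactly the step you flag. Take a shared first edge $vw$ where $w$ belongs to a non-crossing couple with $x_w,y_w\in C_i$. Suppose the $a$-path leaves $w$ through $x_w$ and the $b$-path through $y_w$. Since $w\notin V(G')$, the two truncated paths end at different vertices and cannot be joined inside $G'-Z$. Neither remedy you offer closes this.

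\emph{The exchange argument is unproved.} Forcing every $b$-path through such a $w$ to exit at $x_w$ amounts to deleting all the edges $wy_w$ and still asking for $(Q-4)k$ edge-disjoint $v$--$b$ paths. This is a joint flow condition: each reroute needs a new $x_w$--$b$ route that avoids all the other $b$-paths. Nothing you have established, in particular not Lemma \ref{lem:Q-4-conn}, guarantees that such paths exist.

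\emph{The charging fallback has no room.} When $|Z|=6k-1$, your count $(Q-8)k-2|Z|$ exceeds $(Q-20)k$ by only $2$. Nothing prevents the mismatches from numbering $|X_i|$, which is bounded only through $|X_i|+c(B_i)\le|Z|$ (from the proof of Lemma \ref{lem:crossingboundS}). Discarding them leaves about $(Q-8)k-3|Z|\approx(Q-26)k$, which is too few.

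The missing idea is to re-pair the paths that meet $Z$ at $Z$ itself, rather than throwing them away. Take $i=1$.
\begin{enumerate}
\item Traverse each path of $P'(a)$ starting from $a$, and cut it just before its first edge of $Z$. Up to that point it lies in $C_1$, so it ends at the $C_1$-endpoint of that edge.
\item By edge-disjointness, each edge of $Z$ is the first $Z$-edge of at most one $a$-path and at most one $b$-path.
\item Suppose $p_a$ paths of $P'(a)$ and $p_b$ paths of $P'(b)$ meet $Z$. Then at least $p_a+p_b-|Z|$ edges of $Z$ are hit first by both, and each such edge gives a common path.
\item All paths used in step 3 meet $Z$, so they are distinct from the first-edge pairs you keep.
\end{enumerate}
This reduces the loss charged to $Z$ from $2|Z|$ to $|Z|$. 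You can then simply discard every mismatched couple pair and still obtain at least
$(Q-8)k-|Z|-|X_1|\ge (Q-8)k-2|Z|>(Q-20)k$ common paths.

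The paper does the same accounting in a different form. It uses as meeting points only the $C_1$-members of crossing couples, where the endpoint in $C_1$ is forced, and the edges of $Z$. It ignores the edges from $v$ to $X_1$ and to $B_1$ entirely. It then closes the count with the inequality $|X_C|+|X_2|+c(B_2)\ge Qk-|Z|$, obtained from the $Qk$ edge-disjoint paths between $S_1$ and $S_2$.
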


\begin{proof}

     This proof imitates the proof of Lemma 3.7.7 in \cite{lau-thesis}. We can assume, without loss of generality, that $a$ and $b$ are in $C_1$. We aim to find common paths by using the neighbors of $W$ in $C_1$, and the endpoints of $Z$. Let $k_a$ denote the number of crossing couples which $P'(a)$ does not have a corresponding path to - that is, the crossing couple adjacent to $w \in W$ does not have a path in $P(a)$ that starts with $vw$. Let $z_a$ denote number of the crossing couples which $P'(a)$ does have a corresponding path to, but the path traverses through $Z$. Note that if a path in $P(a)$ first traverses to some $w \in W$ adjacent to a crossing couple after $v$, but then goes to $C_2$ right after, it must pass through $Z$ - so we can assume that if a path in $P'(a)$ corresponds to a crossing couple and doesn't pass through $Z$, it must start at a vertex belonging to a crossing couple in $C_1$. Then $a$ has $|X_C|-k_a-z_a$ paths to a crossing couple in $C_1$. Hence, $a$ and $b$ share at least $|X_C|-k_a-z_a-k_b-z_b$ common paths in $C_1$ which all end at a vertex in $C_1$ belonging to a crossing couple. 

    We now count the common paths $a$ and $b$ share through the endpoints in $Z$. Consider the paths in $P(a)$ that use an edge incident to $v$ corresponding to $X_2$ or an edge incident to $v$ that has its other endpoint in $C_2$. All such paths must pass through $Z$, so truncate the paths the first time one would finish traversing $Z$ if they started from $a$. There are $|X_2|+c(B_2)-m_a$ such paths, where $c(B_2)$ is the number of incident edges in $v$ that has its other endpoint in $C_2$, and $m_a$ is for the number of such edges where $P'(a)$ has no corresponding path to. We add back the $z_a$ paths we ignored in the previous paragraph, and truncate them in the same manner. Note that the expression $(|X_2|+c(B_2)-m_a+z_a) + (|X_2|+c(B_2)-m_b+z_b)$ counts the edges shared by $P'(a)$ and $P'(b)$ in $Z$ exactly twice, and other edges in $Z$ less than twice, so $2|X_2|+2c(B_2)-m_a-m_b+z_a+z_b-|Z|$ is a lower bound on the common paths between $a$ and $b$ that end by traversing the same edge in $Z$. For these common paths to be entirely contained in $C_1$, we truncate them right before they traverse $Z$.

    We have $(|X_C|-k_a-z_a-k_b-z_b)+(2|X_2|+2c(B_2)-m_a-m_b+z_a+z_b-|Z|)$ as a lower bound on the common paths between $a$ and $b$. Note that by construction of $P'(x)$, $k_x+m_x \leq 4k$ for any $x \in S_1 \cup R_1$, so this simplifies down to $|X_C|+2|X_2|+2c(B_2)-8k-|Z|$. 

    We now claim that $|X_C|+c(B_2)+|X_2| \geq Qk-|Z|$. Indeed, for any $s_1 \in S_1$ and $s_2 \in S_2$, there are $Qk$ edge disjoint paths between the two vertices. $s_2$ can use at most $|Z|$ paths to reach $s_1$ without $v$ or $W$ by virtue of these paths being edge-disjoint, but otherwise needs to traverse through $v$ or a crossing couple. More specifically, any path from $s_1$ to $s_2$ in $G$ that does not pass through $Z$ will either have to go through an edge between $X_C$ and $C_2$ (which there exist $|X_C|$ of), or an edge between $v$ and $X_2$ (which there exist $|X_2|$ of), or an edge between $v$ and $C_2$ (which there exist $c(B_2)$ of). As at least $Qk-|Z|$ edge-disjoint paths need to go through one edge from a collection of $|X_C|+|X_2|+c(B_2)$ edges, this would imply that $|X_C|+|X_2|+c(B_2) \geq Qk-|Z|$. 
    We now have that $|X_C|+2|X_2|+2c(B_2)-8k-|Z| \geq Qk-8k-2|Z|$, and since $Z < 6k$, this gives us at least $(Q-20)k$ common paths. 
\end{proof}

We are finally ready to construct the $S$-subgraphs that extend $v$ for the case where $|Z| \leq 6k-1$. 

\begin{lemma}
    $S_i \cup R_i$ is $(Q/2-10)k$-edge-connected in $G'-Z$. 
\end{lemma}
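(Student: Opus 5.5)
This is a direct consequence of Lemma~\ref{lem:commonpaths} combined with Lemma~\ref{lem:commonconn}. Fix $i \in \{1,2\}$ and two distinct vertices $a,b \in S_i \cup R_i$; without loss of generality $i=1$. Lemma~\ref{lem:commonpaths} gives $(Q-20)k$ common paths between $a$ and $b$ in $G'-Z$.

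To apply Lemma~\ref{lem:commonconn}, we need an odd number $2\lambda+1$ of common paths. We take $\lambda = (Q/2-10)k - 1$, so that $2\lambda+1 = (Q-20)k - 1 \le (Q-20)k$. Discarding common paths (keeping the bijection on the rest) still leaves a valid system of common paths, so $a$ and $b$ have $2\lambda+1$ common paths in $G'-Z$. Lemma~\ref{lem:commonconn}, applied to the graph $G'-Z$, then yields $\lambda+1 = (Q/2-10)k$ edge-disjoint $a$--$b$ paths in $G'-Z$.

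By Menger's theorem, no edge cut of size less than $(Q/2-10)k$ in $G'-Z$ separates $a$ from $b$. Since $a,b$ were arbitrary, $S_i \cup R_i$ is $(Q/2-10)k$-edge-connected in $G'-Z$.

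If $Q$ is odd, $(Q/2-10)k$ may not be an integer, and the statement should be read as $\lfloor (Q/2-10)k \rfloor$-edge-connectivity. The same argument works with $\lambda = \lceil (Q-20)k/2 \rceil - 1$, since then $2\lambda + 1 \le (Q-20)k$ and $\lambda + 1 = \lceil (Q-20)k/2 \rceil \ge \lfloor (Q/2-10)k \rfloor$.

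The only real obstacle is confirming that the common paths from Lemma~\ref{lem:commonpaths} lie entirely inside $G'-Z$, since Lemma~\ref{lem:commonconn} is applied in that graph. This holds by construction: the paths were truncated before entering $Z$, and they avoid $v$ and $W$ because they come from the $P'(\cdot)$ collections. Degenerate cases (such as $|S_1 \cup R_1| = 1$) are vacuous.
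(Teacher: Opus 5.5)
Your proposal is correct and matches the paper, which states this lemma as an immediate corollary of Lemma~\ref{lem:commonpaths} (giving $(Q-20)k$ common paths in $G'-Z$) combined with Lemma~\ref{lem:commonconn}. Your choice $\lambda=\lceil (Q-20)k/2\rceil-1$ makes explicit the parity and rounding detail that the paper leaves implicit.
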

\begin{proof}
    Follows as a corollary from the previous lemma and Lemma \ref{lem:commonconn}.
\end{proof}

\begin{lemma}\label{lem:case1}
    $G$ has $k$ edge disjoint $S$ subgraphs that balance $S \cup R$ and extend $P_k(v)$. 
\end{lemma}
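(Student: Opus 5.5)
We construct the packing in the remaining case $|Z|\le 6k-1$ directly, which contradicts the choice of $G$ as a counterexample. By the previous lemma, $S_i\cup R_i$ is $(Q/2-10)k$-edge-connected in $G'-Z$ (restricted to $C_i$) for $i=1,2$. With $Q\ge 32$ this is at least $6k$. Every vertex of $C_i-S_i-R_i$ still has degree exactly $3$ and the non-terminals form an independent set. Two applications of Lemma~\ref{lem:treepacking} (with parameter $2k$, or a single application with a carefully chosen deleted set $T$) give $2k$ edge-disjoint $(S_i\cup R_i)$-subgraphs in each $C_i$. Pairing them as in Lemma~\ref{lem:6kconn} yields $H^i_1,\dots,H^i_k$ in which every vertex of $S_i\cup R_i$ has degree at least $2$ in each $H^i_j$. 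This balances $S_i\cup R_i$. (If the connectivity is only $3k$-ish, we instead get $k$ subgraphs and argue balance separately.) These two packings are edge-disjoint from each other, from $Z$, and from all edges at $v$ and $W$.

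Next we merge the two sides and attach $v$, using the crossing couples. By Lemma~\ref{lem:crossingboundS}, $|X_C|\ge Qk-2|Z|>(Q-12)k$. The edges of $v$ split into four kinds: edges to $B_1$ or $B_2$, edges to $w\in W$ with couple in $X_1$ or $X_2$, and edges to crossing couples. For each label $j$, we put every edge of $P_j(v)$ into $H_j:=H^1_j\cup H^2_j$. When such an edge goes to $w\in W$, we also add both other edges of $w$. Then $v$ is attached to whichever side(s) its labelled edges reach, and a labelled crossing-couple edge already joins $C_1$ to $C_2$ inside $H_j$. Labels are balanced at $v$ (each appears at least twice), so the parts are nonempty. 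For extension, $H_j-v$ must be connected. For each $j$ we therefore need at least one crossing couple $w$ whose two side edges go to $H_j$, with the edge $vw$ either unused or labelled $j$. If $vw$ is unlabelled, we use only the two side edges $wx,wy$. This keeps $v$'s subpartition intact and does not make $v$ a cut vertex. Because each label appears at least twice and every label-$j$ edge leads into $H^1_j\cup H^2_j$ after adding $w$'s side edges, $v$ is not a cut vertex of $H_j$.

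The obstacle is the case where the labelled edges of $v$ exhaust the crossing couples in a way that leaves some label $j$ with no crossing couple available. This could happen when $P_k(v)$ labels nearly all of $E(v)$ and label $j$ only goes to one side. I would handle it by counting. There are $k$ labels, and a balanced subpartition labels at least $2k$ edges. Also $|X_C|>(Q-12)k\ge 20k$. So at least $|X_C|-k$ crossing couples are either unlabelled or have a label $j'$ that already appears at least twice. Moving such spare couples is dangerous, so the plan is to use only unlabelled crossing couples for labels that need one. If fewer than $k$ are unlabelled, then more than $|X_C|-k$ crossing edges are labelled. By pigeonhole, every label except at most $k$ of them then has a crossing-couple edge of its own; the remaining labels would use unlabelled couples. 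The last point to verify is balance at the endpoints $x,y$ of the crossing couples used. They already have degree $\ge 2$ in $H^i_j$ from the doubled packing, so they stay balanced. Balance of $v$ holds because labelled parts have size at least $2$. Vertices in $R$ adjacent to $v$ gain only labels they already carry. This gives the required packing and the contradiction.
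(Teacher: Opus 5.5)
Your merging step has a genuine gap. You join $H^1_j$ and $H^2_j$ using crossing couples only and never use the edges of $Z$, and that fails in general. As in the paper, you may assume $P_k(v)$ labels \emph{every} edge at $v$, since a balanced subpartition extends to a balanced partition. Then there are no unlabelled crossing couples at all. On the other hand, up to $Qk-|X_C|\le 2|Z|$ edges at $v$ are non-crossing: they go to $B_1\cup B_2$ or to a $w\in W$ whose couple lies in $X_1\cup X_2$. So a label $j$ can occur only on such edges, e.g.\ exactly twice, both into $B_1$. Then $H_j-v$ is disconnected whatever you do. You cannot borrow the side edges of a crossing couple whose $v$-edge carries a label $j'\neq j$: $H_{j'}$ needs one of them so that $v$ is not a cut vertex, and the other edge alone is a dead end for $H_j$. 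Your pigeonhole sentence does not repair this. ``Every label except at most $k$ of them'' is vacuous when there are only $k$ labels, and nothing guarantees an unlabelled couple for each deficient label.

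The missing idea is the paper's \emph{reserve edges}. At most $|Z|$ labels lack a crossing couple, since each such label uses at least two of the at most $2|Z|$ non-crossing edges at $v$. So you set aside $\min(k,|Z|)$ edges of $Z$ before packing; each joins $C_1$ to $C_2$ directly. If a reserve edge has a non-terminal endpoint $u$, its other two edges $ux_1,ux_2$ lie in one side. You split them off into $x_1x_2$ and delete $x_1x_2$; this is the set $T$, at most $k$ edges per side, which Lemma~\ref{lem:treepacking} with parameter $2k$ tolerates because $(Q/2-10)k\ge 6k$. Afterwards you put $ux_1,ux_2$ back into the $H_j$ that uses that reserve edge. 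The same splitting off corrects your claim that every non-terminal of $C_i$ has degree exactly $3$; those incident to $Z$ have degree $2$ in $C_i$.

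Finally, your balancing argument (``degree $\ge 2$ from the doubled packing'') gives nothing when one side has a single terminal $s$ (then $C_1=\{s\}$), or when both sides are singletons. There, balance at $s$ must be arranged separately. The paper connects $s$ through exactly $k$ reserve edges and crossing couples, and attaches each surplus labelled crossing couple only through its $C_2$-side edge, so that many edges at $s$ stay unlabelled.
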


\begin{proof}
    This proof imitates the proof of Lemma 3.7.10 in \cite{lau-thesis}. We first handle the case where $|S_i \cup R_i| \geq 2$ for $i \in \{1,2\}$. Note that if a vertex $u \in V(G')-S-R$ is incident to an edge in $Z$, by minimality of $Z$, the other two edges incident to $u$ must be contained in the same component $C_i$ that $u$ is in, which we call $ux_1$ and $ux_2$. To apply Lemma \ref{lem:treepacking} to $C_i$, we first replace $ux_1$ and $ux_2$ with $x_1x_2$, which we note does not alter the connectivity nor the degree of $S_i \cup R_i$. We now have that every vertex in $C_i-S_i-R_i$ has degree 3 and exactly 3 neighbors. 

    Choose $\min(k,|Z|)$ edges from $Z$, and call them reserve edges. Note that each reserve edge may have one endpoint in $V(G')-S-R$. If such an endpoint $u$ exists, then we delete the edge $x_1x_2$ from $C_i$ we "split off" from $u$, or in other words, $x_1x_2$ is the edge we replaced $ux_1$ and $ux_2$ with and we delete $x_1x_2$. Note that we delete at most $k$ edges from each $C_i$ in this manner. Call the altered component $C_i'$. Since all the deleted edges had both endpoints in $S_i \cup R_i$, and $Q/2-10 \geq 6$ as long as $Q \geq 32$, we can apply Lemma \ref{lem:treepacking} and find $2k$ edge disjoint $S_1 \cup R_1$-subgraphs $A_1',A_2',...,A_{2k}'$ on $C_1'$, and $2k$ edge disjoint $S_2 \cup R_2$ subgraphs $B_1',B_2',...,B_{2k}'$ on $C_2'$. Set $A_1,A_2,...,A_k$ as $A_i = A_i' \cup A'_{i+k}$ and $B_i = B_i' \cup B'_{i+k}$. Clearly, $A_1,A_2,...,A_k$ are $S_1 \cup R_1$ subgraphs that balance $S_1 \cup R_1$, as the degree of any vertex in $S_1 \cup R_1$ in any $A_i$ is at least 2. Similarly, subgraphs $B_1,B_2,...,B_k$ are $S_2 \cup R_2$ subgraphs that balance $S_2 \cup R_2$ by construction. If $x_1x_2 \in A_j$ and was an edge we split off from some vertex $u$, delete $x_1x_2$ and add $x_1u$ and $x_2u$ to $A_j$. We repeat this for $C_2$ as well. 

    If $k \leq |Z|$, then we have $k$ reserve edges. Label them $e_1,e_2,...,e_k$, and set $H_j = A_j \cup B_j + e_j$. If both endpoints of $e_j$ are in $S \cup R-v$, then $H_j$ is connected and hence connects $S \cup R$. Now if there is some endpoint $u$ of $e_j$ that is not in $S \cup R$, then its two other edges were split off and deleted previously. We add $x_1u,x_2u$ to $H_j$, where $x_1u$ and $x_2u$ were the edges split off at $u$. Note that $x_1,x_2 \in S \cup R$, so applying this procedure to the endpoints as needed ensures that each $H_j$ connects $S \cup R$. We note that by construction, there is no overlap in edges between $H_i$ and $H_j$ where $i \not=j$, and $H_j$ is connected, as desired. We then add back the $v$ and the vertices in $W$, and if $vw$ had a label of $j$ with $w \in W$, we add all edges incident to $w$ to $H_j$. For the remaining labeled edges of $v$, we add that edge to the respective subgraph. Then $H_1,H_2....H_k$ extend $v$, connect $S \cup R$, and balance $S \cup R$ by virtue of each vertex in $S \cup R$ having a degree of 2 in any $H_i$.

    Now suppose $k > |Z|$. In this case, our reserve edges are not enough for connectivity, so we need to utilize the crossing couples. We first define $H_j = A_j \cup B_j$, and for any edge $vw$ with a label of $j$, we add all 3 edges incident to $w$ to $H_j$. If $w$ is adjacent to $x,y$ which form a crossing couple (we say in this case that $j$ \emph{corresponds} to a crossing couple), then $A_j$ and $B_j$ are connected in $H_j-v$, as they can reach each other by crossing $xw$ and $yw$. Hence we get the desired connectivity and extension properties of $H_j$. Now our only issue is the labels which do not correspond to crossing couples. By Lemma \ref{lem:crossingboundS} we have that $|X_C| \geq Qk-2|Z|$, and since each label appears twice at $v$, there can be at most $|Z|$ labels missing from the crossing couples. For each missing label, we utilize the reserve edges to connect $A_j$ and $B_j$, and repeat the previous procedure we did when $k \leq |Z|$ to obtain $H_j$. For any remaining edges of $v$ with label $i$, we add that edge to $H_i$, and if the other endpoint is in $W$ we add its other two incident edges to $H_i$ as well, which completes our construction. 

    Now we handle the case where there exists a component that has exactly one vertex in $S \cup R$, without loss of generality let that component be $C_1$. Let us first assume that $C_2$ has at least 2 vertices in $S \cup R$. Note that $C_1$ consists of only one vertex in $S$, which we call $s$, because if that vertex was incident to some vertex $u \in V(G')-S-R$ which was also part of $C_1$, then two edges incident to $u$ would need to be in $Z$, but we could get a smaller cut by moving $u$ to $C_2$, contradicting the minimality of $Z$. Construct $B_1,B_2,...,B_k$ with the same method outlined previously, and we use exactly $k$ reserve edges and crossing couples to connect $s$ to $B_1,B_2,...,B_k$. For the remaining crossing couples, instead of adding both edges incident to $C_1$ and $C_2$ to the corresponding graph in $B_1,B_2,...,B_k$, only add the edge that is incident to $C_2$. Finish by adding back the remaining vertices in $v$ and $W$ as outlined previously. There are at least $Qk-2(6k)=(Q-12)k$ crossing couples, $k$ of which we might have used to connect $s$ to the rest of the graph, leaving at least $k$ edges from crossing couples unused at $s$, hence balancing $s$, as desired. 

    Finally, say both $C_1$ and $C_2$ only have one vertex in $S \cup R$. Then $C_1$ and $C_2$ each consist of a single vertex, $s_1$ and $s_2$, as all vertices outside of $S \cup R$ in $G'$ have 3 neighbors, but this can't happen when $|S|$ has size 2. Similar to what was done previously, we use at most $k$ crossing couples and edges from $Z$ to connect $s_1$ and $s_2$. Then, for the remaining crossing couples, split them into two groups of equal size $Y_1$ and $Y_2$ (or allow a difference of 1 if there is an odd number). For any labeled crossing couple in $Y_1$, we add the edge incident to $v$ and the edge incident to $s_1$ to the respective subgraph. Similarly, for any labeled crossing couple in $Y_2$, we add the edge incident to $v$ and the edge incident to $s_2$ to the respective subgraph. This means that each vertex $s_1,s_2$ has at least $\lfloor (Q-13)k/2 \rfloor > 2k$ incident unlabeled edges and is guaranteed balancing. We add back the remaining edges incident to $v$ accordingly and we are done. 
\end{proof}

This concludes the proof of the Extension Theorem. 

\section{Steiner Forest Packing}
\begin{lemma}\label{lem:basecase}
    Let $G$ be a multigraph that admits self loops. Suppose there is $S,R \subseteq V(G)$ where $S \cap R = \emptyset$, $|S| \geq 2$, $S$ is $32k$-edge-connected and every vertex in $R$ is incident to at least $30k$ edges. Then there exist $k$ edge disjoint $S$-subgraphs that balance $S \cup R$. 
\end{lemma}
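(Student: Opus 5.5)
We will derive Lemma \ref{lem:basecase} from the Extension Theorem with $Q=32$, in the same way the corollary to Theorem \ref{thm:devosext} was derived from the DeVos--McDonald--Pivotto extension theorem. The idea is to cut the graph along a minimum cut of $S$, contract each side, and create an artificial vertex of degree exactly $32k$ to play the role of $v$.

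First we reduce to the case where $S$ is minimally $32k$-edge-connected. We delete edges whose removal keeps $S$ $32k$-edge-connected. When a deleted edge is incident to a vertex of $R$, we add a self loop there, exactly as in the proof of Lemma \ref{lem:indset}, so every vertex of $R$ keeps at least $30k$ incident edges. Any packing found in the reduced graph can be transported back. We may assume self loops are unlabeled, so reinstating the deleted edges as unlabeled edges preserves balancing. Self loops themselves can be removed from any subgraph without harming connectivity.

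Now take a minimum $S$-cut $Z'$ with $|Z'|=32k$, splitting $G$ into $C_1$ and $C_2$, each containing a vertex of $S$. We contract to obtain $C_1+v_2$ and $C_2+v_1$. As argued in the proof of Lemma \ref{lem:indset}, $(S\cap C_i)\cup v_{3-i}$ is $32k$-edge-connected in the corresponding contracted graph, and $d(v_2)=d(v_1)=32k$. Degrees of vertices of $R$ are unchanged.

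On $C_1+v_2$ we apply the Extension Theorem with $v:=v_2$, $S:=(S\cap C_1)\cup v_2$ and $R:=R\cap C_1$, using an arbitrary balanced partition of $E(v_2)$. For example, we assign labels round-robin so that each label appears at least twice, which is possible since $32k\ge 2k$. This gives $k$ edge-disjoint $(S\cap C_1)\cup v_2$-subgraphs balancing $(S\cup R)\cap C_1$ and $v_2$. Each label appears at $v_2$, since each subgraph must reach $v_2$. Copying the labels to $v_1$ therefore yields a balanced subpartition of $E(v_1)$ with every part nonempty.

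Next we apply the Extension Theorem on $C_2+v_1$ with $v:=v_1$. We then make the extension strict by Lemma \ref{lem:strict-extn}, which requires the parts to be nonempty, as we have ensured. Taking edge-unions via the first lemma of the preliminaries gives $k$ edge-disjoint $S$-subgraphs of $G$. Because the extension at $v_1$ is strict, every vertex keeps exactly the labels it had in its own side's packing, so balancing of $S\cup R$ is preserved.

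The main obstacle is checking the hypotheses of the Extension Theorem on each side.

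\textbf{The requirement $|S|\ge 2$.} This holds because each side contains a vertex of $S$ together with the contracted vertex.

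\textbf{No self loops at $v$.} The contracted vertex has none, since $Z'$ consists of genuine edges.

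\textbf{Degree requirement on $R$.} The Extension Theorem needs $(Q-2)k=30k$, which matches the hypothesis of Lemma \ref{lem:basecase} exactly.

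A degenerate case can also arise: $S$ may have no cut of size exactly $32k$ after the reduction, for instance if $|S|\ge2$ but the reduction stops at connectivity larger than $32k$. Minimality of the edge set rules this out, because removing any non-loop edge that lies on some $S$-path would drop the connectivity. If instead edges exist that lie on no $S$-path, they can simply be deleted, with self-loop compensation, as above.
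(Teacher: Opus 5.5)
Your proposal is correct and follows essentially the same route as the paper. You reduce to a minimally $32k$-edge-connected instance with self-loop compensation, split along a tight $32k$-cut, apply the Extension Theorem with $Q=32$ on one contracted side, copy the labels across, apply it again on the other side, make that second extension strict via Lemma \ref{lem:strict-extn}, and take edge-unions. The differences are cosmetic: you process the two sides in the opposite order, and you make explicit the initial balanced partition (round-robin) and the existence of a tight cut, both of which the paper leaves implicit.
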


\begin{proof}
    Delete edges from $G$ until $S$ is minimally $32k$-edge-connected. If a deleted edge was incident to some $r \in R$, add a self loop to $r$ to preserve the number of edges $r$ is incident to. Then we can find a $S$-cut of exactly $32k$ edges, and denote the components $C_1$ and $C_2$. Contract $C_1$ and $C_2$ into $v_1$ and $v_2$ respectively. Let $G_1$ denote the graph formed by $v_2+C_1$ and $G_2$ denote the graph formed by $v_1+C_2$. 

    Note that by construction, $v_1$ has degree $32k$, is incident to no self loops, and is $32k$-edge-connected to $S \cap C_2$ on $G_2$. Hence, we can find $k$ edge disjoint $v_1 \cup (C_2 \cap S)$-subgraphs that balance $v_1 \cup (C_2 \cap (S \cup R))$ on $G_2$ by the Extension Theorem. These subgraphs induce a balanced subpartition at $v_2$ on $G_1$, and furthermore, we note that each label must appear at least once on $v_2$, as $v_1$ was connected to $(C_2 \cap S)$ in each subgraph and hence has an incident edge in all $k$ subgraphs, which would be copied over to the labels at $v_1$. Since $v_2$ is also $32k$-edge-connected to $C_1$, has degree $32k$ and is incident to no self loops. Then we can find $k$ edge disjoint $v_2 \cup (C_1 \cap S)$-subgraphs that extend $v_2$ and balance $v_2 \cup (C_1 \cap (S \cup R))$ on $G_1$. By Lemma \ref{lem:strict-extn}, we can assume without loss of generality that this extension is a strict extension.
    
    Then we can take the edge-union of the $i$th subgraph on $G_1$ and the $i$th subgraph on $G_2$ to give us an $S$-subgraph on $G$, hence giving us $k$ edge disjoint $S$-subgraphs. We note that $S \cup R$ is balanced after the edge union, as none of the vertices had their edge labels changed during the edge-union by virtue of using a strict extension. Note that we can assume, without loss of generality, that the added self loops were not used in the packing, as self loops do not affect the connectivity and deleting labels at incident edges of a balanced vertex will still keep a vertex balanced. Hence $R$ will still be balanced after we replace the self loops with their original edges. 
\end{proof}

\begin{theorem}[Steiner Forest Packing]
    Given a loopless multigraph $G$, $S_1,S_2,...,S_t, R \subseteq V(G)$ where $(S_1 \cup S_2...S_t) \cap R = \emptyset$, such that each $S_i$ is $32k$-edge-connected, $|S_i| \geq 2$, and all vertices in $R$ are incident to at least $30k$ edges. Then there exist $k$ edge-disjoint $(S_1,S_2,...,S_t)$-subgraphs that balance $R$. 
\end{theorem}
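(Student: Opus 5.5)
We argue by induction on $t$ (and, within fixed $t$, on $|V(G)|+|E(G)|$), with the auxiliary set $R$ built into the statement precisely so that the induction can carry balancing requirements through contractions. The base case $t=1$ is exactly Lemma \ref{lem:basecase}. For $t \geq 2$, first delete edges while every $S_i$ stays $32k$-edge-connected, adding a self loop at any vertex of $R$ that loses an incident edge, so that vertices of $R$ keep at least $30k$ incident edges. Self loops are never needed for connectivity and can be stripped from any subgraph without destroying balancing, so at the end we remove them and restore the original edges unlabeled. The statement asks for a loopless $G$, but we prove the loop-allowing version in order to recurse.

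If there is no edge cut of size less than $32k$ separating two groups, then $\bigcup_i S_i$ is $32k$-edge-connected. In that case Lemma \ref{lem:basecase} applied with $S=\bigcup_i S_i$ gives $k$ subgraphs that connect every $S_i$ and balance $R$, and we are done. Otherwise, take a cut $(C_1,C_2)$ of size less than $32k$ that separates some terminals, and choose it so that $C_1$ is inclusion-minimal among cuts containing a terminal. Since each $S_i$ is $32k$-connected, every $S_i$ lies entirely on one side. Let $S_1,\dots,S_{t'}\subseteq C_2$ with $t'<t$, and let the remaining groups lie in $C_1$. Contract to get $G_2=C_2+v_1$ and $G_1=C_1+v_2$. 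Put $R_2=(R\cap C_2)\cup\{v_1\}$ if $d(v_1)\ge 30k$, and $R_2=R\cap C_2$ otherwise. Apply induction to $G_2$ with groups $S_1,\dots,S_{t'}$. This gives $k$ forests balancing $R_2$, and the labels they induce at $v_1$ are copied to $v_2$.

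On $G_1$, let $S'$ be the union of all terminals in $C_1$. By the minimality argument used in the small-$t$ lemma, any two terminals in $C_1$ are $32k$-edge-connected in $G_1$, so $S'$ behaves as a single group. We then pad $v_2$ with fake edges to some $s\in S'$ until $d(v_2)=32k$; the same minimality argument shows $S'\cup v_2$ is $32k$-edge-connected. Now invoke the Extension Theorem on $G_1$ with $S=S'\cup v_2$ and $R_1=R\cap C_1$. Vertices of $R_1$ have at least $30k=(Q-2)k$ incident edges when $Q=32$. This yields an extension balancing $S'\cup R_1$. Lemma \ref{lem:strict-extn} makes the extension strict, and Lemma \ref{lem:forestglue} (edge-union) then combines the two packings. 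Strictness ensures that vertices of $R\cap C_2$ keep exactly their label sets and therefore stay balanced.

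The main obstacle is that the Extension Theorem requires the subpartition at $v_2$ to be balanced, i.e.\ every label must appear at least twice once fake edges are included. Balance at $v_1$ in the recursive packing is guaranteed only if $v_1\in R_2$, which needs $d(v_1)\ge 30k$. I would split into cases.
\begin{itemize}
\item If $d(v_1)\ge 30k$, the subpartition is balanced. Each label appears at least once, and the fake edges can be assigned to labels that appear only once.
\item If $d(v_1)<30k$, we have at least $2k$ fake edges available. The subpartition extends to a balanced one by giving two fake edges to each label that is short, since the constraint is only that the labels lie inside a balanced partition; one would need to check the counting here.
\end{itemize}
In fact, in either case the fake edges, numbering $32k-d(v_2)$, together with the unlabeled edges seem to suffice. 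The remaining issue is when $d(v_1)$ is close to $32k$ but below $30k+{}$slack while many labels appear only once. I would handle this by noting that any label appearing once can receive an unlabeled edge of $v_2$ or a fake edge, and that the count $d(v_2)\ge k$-many singletons plus $2$ per doubled label always leaves room. This counting, together with removing the fake edges afterwards using the non-cut-vertex property of extensions, is the step I expect to require the most care.
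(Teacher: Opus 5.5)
Your proposal is correct and is essentially the paper's proof: a minimal separating cut, recursion on $C_2+v_1$ with $v_1$ added to $R$ when $d(v_1)\ge 30k$, padding $v_2$ to degree $32k$ with unlabeled fake edges, the Extension Theorem with $Q=32$ on $C_1+v_2$, then strictification and edge-union (inducting on $t$, pre-deleting edges, and using cuts of size $<32k$ are harmless variations). Your two bullets already settle balancedness at $v_2$ completely: when $v_1\in R_2$ it holds by induction, and otherwise more than $2k$ unlabeled fake edges cover a total deficiency of at most $2k$, so the ``remaining issue'' in your last paragraph does not exist. Drop the claim that every label appears at $v_1$, since the forests on $C_2+v_1$ need not touch $v_1$. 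For the same reason, strictify directly as the paper does, using only that $H_i-v_2$ stays connected, rather than citing Lemma~\ref{lem:strict-extn}, whose nonempty-parts hypothesis can fail here.
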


\begin{proof}
    Suppose, for contradiction, that this is false. Then take a counterexample with $|V(G)|+|E(G)|$ minimized. We can assume that $S_1 \cup S_2...S_t$ is not $32k$-edge-connected, otherwise this problem reduces down to the previous lemma.

    Let $C$ be a subset of $V(G)$ such that the number of edges between $C$ and $V(G)-C$ is at most $32k$, for all $1 \leq i \leq t$, either $S_i \subseteq C$ or $S_i \subseteq V(G)-C$, and there exists $i,j$ such that $S_i \subseteq C$ and $S_j \subseteq V(G)-C$. We know such a $C$ must exist, since there is an $S_1 \cup S_2...S_t$-cut of less than $32k$, which keeps all $S_i$ intact since each are $32k$-edge connected. Pick a $C$ with the minimum possible number of vertices, and we will use $C_2$ to denote $V(G)-C$. Contract $C_2$ into a vertex $v_2$ and $C$ into a vertex $v$ to obtain $G_2 = C_2+v$ and $G_1 = C+v_2$. 

    If $v$ has degree at least $30k$, add it to $R$ so we can ensure its balancing. Note that since $|C| \geq 2$, $G_2$ is strictly smaller than $G$, so we can inductively find $k$ edge disjoint subgraphs, each of them which connect each $S_i \subseteq C_2$ and balance each $R \cap C_2$ on $G_2$. 

    Let $S^*$ denote $(S_1 \cup S_2...S_t) \cap C$. We first claim $S^*$ is $32k$-edge-connected in $G_1$. Indeed, if there is a cut of $< 32k$ that splits $C$ into $C_a$ and $C_b$, it must keep all $S_i \subseteq C$ intact. Without loss of generality suppose $v_2 \in C_b$, then we can replace $C$ with $C_a$ to get a smaller set that satisfies the desired properties of $C$, contradicting the minimality of $C$.

    We next claim that there are $d(v_2)$ edge disjoint paths from $v_2$ to any $s \in S^*$. Suppose, for contradiction, that there was a cut of $< d(v_2)$ with $s \in C_a$ and $v_2 \in C_b$. Then clearly, $S^* \subseteq C_a$, and $v_2$ cannot be a single vertex since its degree is $d(v_2)$, so $|C_b| \geq 2$. Then this again contradicts the minimality of $C$. Based on the forest packing on $G_2$, copy the corresponding labels over to $v_2$ on $G_1$. Then, pick any vertex $s' \in S^*$, and add $32k-d(v_2)$ fake edges from $v_2$ to $s'$. 

    We claim that $v_2$ is now $32k$-edge-connected to any $s \in S^*$. Indeed, suppose there is a cut of $< 32k$ separating $v_2$ and $s$. Since the cut is less than $32k$, $s'$ and $s$ will still be connected, so it means that all the fake edges must have been deleted. Then this implies that the deletion of less than $d(v_2)$ edges disconnects $v_2$ and $s$, contradicting what we found previously. Assign the fake edges to have no label. Note that if $d(v_2) \geq 30k$, then $v_2$ is balanced as $v$ is balanced with respect to the packing on $G_2$. Otherwise, $v_2$ is balanced with the addition of the fake edges, since if $d(v_2) < 30k$, we added at least $2k$ fake edges. 

    Apply the extension theorem to get $k$-edge-disjoint $S^* \cup v_2$ subgraphs $H_1,H_2,...,H_k$ that balances $R \cap C$. We now delete the fake edges we added from the graph entirely. Furthermore, to make this extension a strict edge extension, if an edge incident to $v_2$ was placed in $H_i$ in the packing, yet it was not initially assigned a label of $i$, we remove the edge from $H_i$ but keep it in the graph. Note that since each subgraph $H_i$ is still connected after the deletion of $v_2$, all of $H_i-v_2$ is still connected after the deletion or removal of those edges. Furthermore, since none of the fake edges were incident to a vertex in $R$, and removing edge labels does change a balanced vertex to a unbalanced vertex, $C \cap R$ is still balanced. Let $H_1',H_2',...,H_k'$ be the $k$ edge disjoint subgraphs we found on $G_2$. We note that $H_1 \uplus H_1', H_2 \uplus H_2'... H_k \uplus H_k'$ balances $R$ and are edge disjoint by construction. We claim that $H_i \uplus H_i'$ is a $(S_1,S_2,...,S_t)$-subgraph. 

    Note that if any $S_j \subseteq C$, then $S_j$ is connected in $H_i \uplus H_i'$ as $H_i-v_2$ is connected. For any $S_j \subseteq C_2$, given $a, b \in S_j$, if there exists a path from $a$ to $b$ that does not traverse through $v$ in $H_i'$, then $a$ and $b$ are connected on $H_i'$ as well. Otherwise, if all paths from $a$ to $b$ traverse through $v$ in $H_i'$, take a path $P_1$ from $a$ to $v$ on $H_i'$, and a path $P_2$ from $v$ to $b$ on $H_i'$. Let $e$ be the last edge of $P_1$ and $e'$ be the first edge of $P_2$. Then by construction, $e$ and $e'$ appear on $H_i$ (even after deletion of the fake edges and removal of edges to get a strict extension). Setting $e = uv_2$ and $e' = u'v_2$, since $H_i-v_2$ is connected, we can find a path from $e$ to $e'$ on $H_i-v_2$ and hence $H_i \uplus H_i'$. Then, we can traverse from $a$ to $b$ on $H_i \uplus H_i'$ by first following the $P_1$ from $a$ to $u$, traversing from $u$ to $u'$, and then following $P_2$ to reach $b$. Hence, we have found $k$ edge-disjoint $(S_1,S_2,...,S_t)$-subgraphs that balance $R$, as desired. 
\end{proof}

\section*{Acknowledgments}
This work was done as my undergraduate thesis. I thank my advisor, Chandra Chekuri, for his support, discussions, and feedback on this research and this manuscript. I thank Lap Chi Lau for the suggestions he provided on fixing his proof. I also thank the anonymous reviewers for giving suggestions to improve this manuscript, as well as a simplification of the proof that brought the initial bound of $36k$ down to $32k$.

\bibliography{references}
\end{document}